\documentclass[aoas]{imsart}
\RequirePackage[OT1]{fontenc}
\RequirePackage{amsthm,amsmath,amsfonts,amssymb,amsbsy}
\RequirePackage[colorlinks,citecolor=blue,urlcolor=blue]{hyperref}
\RequirePackage{graphics}
\RequirePackage{graphicx}
\RequirePackage{appendix}

\RequirePackage{natbib}


\startlocaldefs
\numberwithin{equation}{section}
\theoremstyle{plain}
\endlocaldefs
\newcommand\ignore[1]{}

\newcommand{\hsp}{\hspace{0.1in} }
\newcommand{\hspp}{\hspace{0.05in} }
\newcommand{\hsppp}{\hspace{0.02in} }
\newcommand{\mb}[1]{\boldsymbol{#1}} 
\newcommand{\Sb}{{\mb S}}
\newcommand{\indic}{\mbox{$1\!\!1$}}

\newtheorem{defn}{Definition}
\newtheorem{prop}{Proposition}
\newtheorem{thm}{Theorem}
\newtheorem{cor}{Corollary}

\begin{document}

\begin{frontmatter}
\title{Tracking Changes in Resilience and Level of Coordination in Terrorist
Groups} 
\runtitle{Tracking Changes}

\begin{aug}
\author{\fnms{Vasanthan} \snm{Raghavan,}\thanksref{m1}
\ead[label=e1]{vasanthan\_raghavan@ieee.org}}
\and
\author{\fnms{Alexander} \snm{G. Tartakovsky}\thanksref{m2}
\ead[label=e3]{alexg.tartakovsky@gmail.com}
}

\runauthor{Raghavan and Tartakovsky}

\affiliation{Qualcomm Flarion Technologies\thanksmark{m1}
and University of Connecticut\thanksmark{m2}}

\address{Address of First author \\
2 Petunia Drive, Apt.\ 2H \\
North Brunswick, NJ 08902 \\
E-mail: \printead*{e1}
}

{\vspace{0.1in}}
\address{Address of Second author \\
University of Connecticut \\
Storrs, CT 06071 \\
E-mail: \printead*{e3} }

\end{aug}

\begin{abstract}
\noindent
Activity profiles of terrorist groups show frequent spurts and downfalls corresponding to
changes in the underlying organizational dynamics. In particular, it is of interest in
understanding changes in attributes such as intentions/ideology, tactics/strategies,
capabilities/resources, etc., that influence and impact the activity. The goal of this
work is the quick detection of such changes and in general, tracking of macroscopic as
well as microscopic trends in group dynamics. Prior work in this area are based on
parametric approaches and rely on time-series analysis techniques, self-exciting hurdle
models (SEHM), or hidden Markov models (HMM). While these approaches detect spurts and
downfalls reasonably accurately, they are all based on model learning --- a task that
is difficult in practice because of the ``rare'' nature of terrorist attacks from a
model learning perspective. In this paper, we pursue an alternate non-parametric
approach for spurt detection in activity profiles. Our approach is based on binning
the count data of terrorist activity to form observation vectors that can be compared
with each other. Motivated by a {\em majorization theory} framework, these vectors are
then transformed via certain functionals and used in spurt classification. While the
parametric approaches often result in either a large number of missed detections of
real changes or false alarms of unoccurred changes, the proposed approach is shown to
result in a small number of missed detections {\em and} false alarms. Further, the
non-parametric nature of the approach makes it attractive for ready applications in a
practical context.
\end{abstract}


\begin{keyword}[class=AMS]
\kwd[Primary ]{62P25}
\kwd{62M99}
\kwd[; secondary ]{62L10}
\kwd{62G99}
\end{keyword}

\begin{keyword}
\kwd{Hidden Markov model} 
\kwd{terrorism analysis} \kwd{terrorist groups}
\kwd{changepoint detection} \kwd{non-parametric approaches}
\kwd{spurt detection} \kwd{majorization theory}
\end{keyword}
\end{frontmatter}

\section{Introduction}
\label{sec1}
Changes in the organizational dynamics of terrorist groups lead to either spurts
or downfalls in their activity profiles. It is of interest in detecting such
changes, associating these changes to specific macroscopic changes in group
dynamics, and in tracking these dynamics over time. Prior work in this area has primarily
been of a parametric nature.

Initial work on monitoring terrorist network activity profiles follows the
interrupted time-series framework where the main goal is to study whether certain
strategic policy interventions lead to statistically significant reduction in
certain types of attacks and/or if different types of attacks act as substitutes
for/complements of each other. This is achieved by isolating the time 
of intervention, 
fitting (potentially different) threshold vector auto-regression 
models to the time-series data before and after the intervention is introduced, and
inferencing on 
its efficacy; see, works
by~\citet{Landes_1978,Cauley_Im_1988,enders_sandler1993,enders_sandler2000}, 
for example. In another parametric direction, group-based trajectory analyses
are adopted by~\citet{dugan_lafree_piquero} and~\citet{lafree_morris_dugan} to identify
regional terrorism trends 
via the use of Cox proportional hazards model or zero-inflated Poisson model. A similar
philosophy of identifying common trends across multiple terrorist groups is also
adopted by~\citet{breiger,melamed} and~\citet{bakker_dark_network}.

Two recent approaches in modeling activity profiles have been along
the directions of: i) self-exciting hurdle models
(SEHM)~\citep{hawkes_1971,mohler,porter_white2012} and ii) hidden Markov
models (HMM)~\citep{vasanth_aoas2013}. While both approaches leverage the
sparsity of activity profiles and account for clustering of attacks,
they do so via different mechanisms. In the SEHM approach, the hurdle component
creates data sparsity by ensuring a pre-specified density of zero counts, while the
self-exciting component induces clustering of data. In the HMM approach, an
increase or decrease in the attack intensity is attributed to switching between
internal states that captures the dynamics of the group's evolution.

\begin{figure}[tbh!]
\begin{center}
\begin{tabular}{c}
\includegraphics[height=2.5in,width=3.5in] {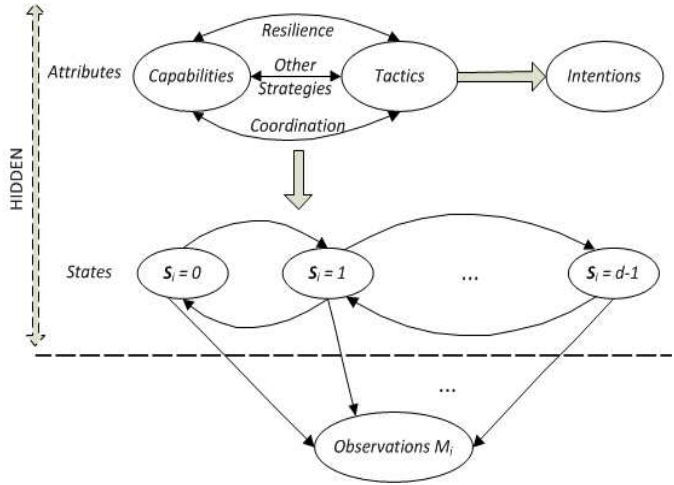}
\end{tabular}
\caption{\label{fig_prob_form}
A typical mechanistic model 
capturing the dynamics of a terrorist group with connections between the
underlying attributes, states and observations.}
\end{center}
\end{figure}

The HMM framework~\citep{vasanth_aoas2013,vasanth_compstudy2015} provides good
explanation/prediction capability of past/future activity
across a large set of terrorist groups with different ideological
attributes and is thus of main
focus in this work. This framework is motivated by a typical mechanistic model
proposed by~\citet{rand} and illustrated in Fig.~\ref{fig_prob_form} that
captures the complex correlations in time and network structure of the activity
profile. Some of the terrorist group attributes capturing the dynamics in this
model include its {\em Intentions}, its {\em Capabilities}, and the underlying
{\em Tactics} deployed by it to utilize its {\em Capabilities} in realizing its
{\em Intentions}. Of particular interest in this work are {\em Tactics} that
reflect a strong {\em Resilience} and/or {\em Coordination} in the group since they
capture counter-terrorism dynamics; see Sec.~\ref{sec2} for precise definitions
and works
by~\citet{sageman,rand,santos,bakker_dark_network,lindberg,sandler_2011,breiger,melamed}
for motivations in tracking these particular {\em Tactics}.

The work by~\citet{vasanth_aoas2013} bins the activity data into decision interval blocks,
leverages the underlying HMM structure, and proposes a state estimation strategy for spurt
detection under certain assumptions. While useful, technical difficulties 
ensured that a complementary view of this strategy
where the states are estimated over the entire interval and are then binned into decision
interval blocks was not considered by~\citet{vasanth_aoas2013}. We overcome these technical
difficulties and provide this missing link in this work. We note that both approaches lead
to acceptable inferencing performance on the terrorist group and allow attribution to specific
{\em Tactics} that could have induced a change in the overall activity profile. However, it is
unclear if either approach is optimal from an inferencing perspective for the underlying
{\em Tactic}. Further, both
approaches suffer from a fundamental issue in that they are retrospective or non-causal (models
are first learned over the data followed by inferencing). Thus, such approaches are difficult
to implement in practice since terrorist activity data is sparse from a model learning perspective
and latencies in model learning could render assumptions on model stability questionable.

Given this backdrop, the major contributions of this work are as follows.
\begin{itemize}
\item We propose a non-parametric approach to detect changes in the activity
profile and to attribute them to specific changes in the underlying {\em Tactics}
deployed by the group. For this, we develop an application of majorization theory
(commonly used as a partial ordering to compare probability vectors) in terrorism
analysis. Motivated by a theory of reverse majorization, we build on the partial
ordering via the use of certain functionals that serve as a proxy for complete
ordering of attack frequency vectors. We then identify a subset of these functionals
to capture changes in {\em Resilience} (or {\em Coordination}) better and use these
associations to track changes in group dynamics.

\item
We conduct extensive numerical studies with both data generated from the mechanistic
model in Fig.~\ref{fig_prob_form} as well as real data from the {\em Fuerzas Armadas
Revolucionarias de Colombia} (FARC) terrorist group from Colombia~\citep{rdwti}. We
show that the proposed non-parametric
approach 
results in both a small number of false alarms declaring changes in {\em Tactics} when
there are none {\em and} a small number of missed detections of real changes in {\em Tactics}.
On the other hand, the parametric approaches based on the HMM framework either result in low
probability of false alarm or low probability of missed detection, but not both.

\item
These observations suggest
that the non-parametric approach provides a suitable compromise not only in terms of its
practical utility, but also in terms of performance with real terrorist data.
\end{itemize}



\section{Problem Setup}
\label{sec2}
The observations capturing the dynamics of a terrorist group
are multivariate and are of a mixed (categorical, ordinal and interval variables)
type, 
e.g.,
time and location of attacks, type of ammunition used, (apparent) sub-group of the
group involved, intensity and impact of the attacks, etc. In addition, the
observations can suffer from impairments such as missing data, mislabeled data,
temporal and attributional ambiguity, transcribing errors, etc.
We start by developing a temporal model for the activity profile by
discarding 
the categorical and ordinal variables.

\subsection{Temporal Modeling of Activity Profiles}
\label{sec_2a}
Let the first and last day of the time-period of interest be denoted as Day $1$ and
Day ${N}$, respectively. Let ${\mb M}_i$ denote the number of terrorism incidents on the
$i$th day of observation, $i = 1, \cdots, {N}$. Note that ${\mb M}_i \in \{0, 1,
2, \cdots \}$ 
with ${\mb M}_i = 0$ corresponding to no terrorist activity
on the $i$th day of observation. 
Let ${\mb H}_i$ denote the
history of the group's activity till (and including) day $i$. That is, ${\mb H}_i =
\left\{ {\mb M}_1, \cdots, {\mb M}_i \right\}, \hsppp i = 1,2, \cdots, N$ with ${\mb H}_0
\triangleq \varnothing$ (denoting the null set). The temporal point process model is
completely specified when ${\sf P}\left( {\mb M}_i = r | {\mb H}_{i-1} \right)$ is known
as a function of ${\mb H}_{i - 1}$ for all $i = 1, \cdots, N$ and $r = 0,1,2, \cdots$.

With the HMM framework,~\citet{vasanth_aoas2013} hypothesize that the observations ${\mb M}_i$
depend only on certain hidden states $\Sb_i$ (such as {\em Intentions}, {\em Tactics}, or
{\em Capabilities}) in the sense that ${\mb M}_i$ is conditionally independent of ${\mb H}_{i-1}$
and $\Sb_{i-1}$ given $\Sb_i$. Further, 
they assume a time-homogenous
one-step Markovian evolution for $\Sb_i$ with a $d$-state model to capture the dynamics of
the group over time. That is, $\Sb_{i} \in \{ 0, 1 , \cdots , d-1 \}$ with each distinct
value corresponding to a different level in the underlying attribute of the group. Using
these two hypotheses, the temporal point process model can be written as
\begin{align}
& {\hspace{0in}} {\sf P} \left( {\mb M}_i = r | {\mb H}_{i-1} \right)
= \sum_{j = 0}^{d-1} \sum_{k = 0}^{d-1}
{\sf P} \left( {\mb M}_i = r | \Sb_{i} = j  \right) \cdot
{\sf P} \left( \Sb_{i} = j, \hsppp \Sb_{i-1} = k \right).
\nonumber
\end{align}
The trade-off between accurate modeling of the group's attributes (larger $d$ is
better for this goal) versus estimating more model parameters\footnote{The number of
model parameters in the HMM framework is $d(d - 1 + \ell)$ where $\ell$ is the (common)
number of observation density parameters in each state.} (smaller $d$ is better for
this goal) is resolved by~\citet{vasanth_aoas2013} by focussing on mature terrorist
groups (where the {\em Intentions} 
attribute remains stable) and by considering a $d = 2$ setting. This trade-off
corresponds to a binary quantization of the group's {\em Tactics} and {\em Capabilities}
into {\em Active} and {\em Inactive} states.

For the observations, a simple two-parameter model such as the {\em hurdle-based geometric}
density, defined as,
\begin{eqnarray}
{\sf P}( {\mb M}_i = r | \Sb_i = j) \triangleq  {\sf HBG} \left( \mu_j, \gamma_j \right)
= \left\{
\begin{array}{cc}
1 - \gamma_j, & r = 0 \\
\gamma_j (1 - \mu_j) \cdot (\mu_j)^{r-1}, & r \geq 1
\end{array}
\right.
\label{eq_hbg}
\end{eqnarray}
can be hypothesized. The intuition behind the hurdle-based geometric model is that
the terrorist group remains {\em oblivious} of its past activity and continues to
attack with the same {\em Tactics} as before, as long as its short-term objective
is met, provided a certain group resistance/hurdle has been overcome. The special
case where there is no group resistance to this aforementioned strategy is obtained
by setting $\mu_j = \gamma_j$, resulting in a geometric observation density.

\subsection{Underlying Assumptions and Problem Statements}
\label{sec_2b}
We make the following assumptions in this work.

\noindent {\bf \em \underline{Assumption 1:}} Motivated by the efforts
in~\citep{vasanth_aoas2013,vasanth_compstudy2015}, 
we assume that 
terrorist activity can be accurately described by a $d = 2$-state HMM with
observations following the hurdle-based geometric density in~(\ref{eq_hbg}).
Specifically, let ${\cal H}_j$ denote
the hypothesis that $\Sb_i = j$ and the observation model is given as 
\begin{eqnarray}
{\cal H}_j \hsppp : \hsppp {\mb M}_i \sim {\sf HBG} \left( \mu_j, \gamma_j \right),
\hsppp j \in \{ 0, 1 \}, \nonumber
\end{eqnarray}
with a state transition probability matrix
\begin{eqnarray}
{\bf T}( {\sf p}_0, \hsppp {\sf p}_1) = \left[ \begin{array}{cc}
1-{\sf p}_0 & {\sf p}_0 \\
{\sf q}_0 & 1-{\sf q}_0
\end{array} \right]
\nonumber
\end{eqnarray}
capturing the dynamics of evolution from $\Sb_{i-1}$ to $\Sb_{i}$.

\noindent {\bf \em \underline{Assumption 2:}} With the mechanistic model
of~\cite{rand} as the backdrop, we are primarily interested in two specific
types of {\em Tactics} deployed by the group: Those 
{\em Tactics} reflecting i) {\em Resilience} and ii) a high level of {\em Coordination}
in the group. These {\em Tactics} are important since they determine the broad
outline of counter-terrorism policies and measures 
sustained by the establishment~(see related works
by~\citet{sageman,rand,santos,lindberg,sandler_2011,bakker_dark_network,breiger,melamed}
for motivations on the focus on these {\em Tactics}). To be
specific, resilience is defined as the ability of the group to sustain terrorist
activity over a number of days and this ability reflects the group's capacity to
rejuvenate itself from asset (manpower, material, and skill-sets) losses. On the
other hand, coordination is defined as the ability of the group to launch multiple
attacks over a given time-period and this ability reflects its capacity to coordinate
the group's assets necessary for simultaneous action over a wide geography.

\noindent {\bf \em \underline{Assumption 3:}} Assuming a stable set of {\em Intentions}
for the group (e.g., mature groups) and with a focus on its {\em Tactics} and
{\em Capabilities}, the first problem of interest in this work is to quickly arrive at
specific/microscopic inferencing decisions on disruptions in the group's activity profile
(with typical interest on spurts and downfalls). The second problem of interest is of
a broad/macroscopic nature: whether these disruptions could be attributed either
to a change in the group's resilience, or a change in the level of coordination between
different sub-groups of the group, or both of these attributes.

\noindent {\bf \em \underline{Assumption 4:}} Inferencing 
with $\{ {\mb M}_i \}$ on a daily basis could lead to a performance mirroring
the potential rapid fluctuations in the observations. This is particularly
disadvantageous in making global policy decisions on the group.
To overcome this problem, we propose inferencing over a $\delta > 1$ day
{\em disjoint} time-window. For this, we decompose the time-period of interest
into disjoint time-windows, $\Delta_n, \hspp n = 1, 2, \cdots, K$, where
$\Delta_n = \{ (n-1) \delta+1, \cdots , n \delta \}$ and
$K = \lfloor N/\delta \rfloor$. The appropriate choice of $\delta$ is determined
by the group dynamics and the timelines for inferencing decisions with typical
choices being $7$ or $14$ days corresponding to a weekly or a bi-weekly decision
process. 

\begin{figure}[tbh]
\begin{center}
\begin{tabular}{c}
\includegraphics[height=0.47in,width=4.9in] {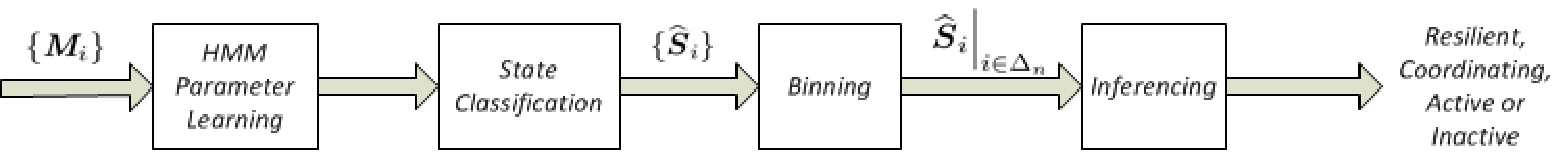}
\\
(a)
\\
\includegraphics[height=0.9in,width=5.4in] {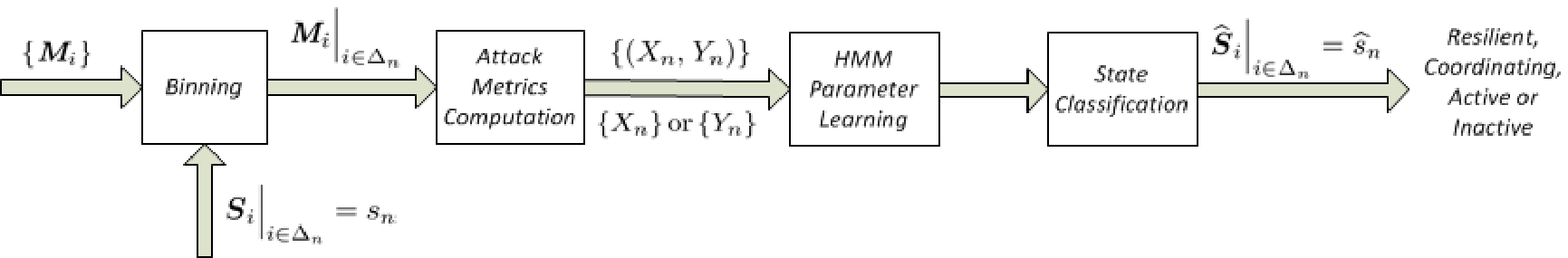}
\\
(b)
\end{tabular}
\caption{\label{fig_hmm}
Broad outline of HMM-based parametric approaches: (a) Based on state
classification with $\{ {\mb M}_i \}$ as input followed by binning, and (b) Based on
binned sets of $\{ {\mb M}_i \}$ as input followed by state classification.
 }
\end{center}
\end{figure}

\section{Parametric Approaches for Spurt Detection Based on HMM Structure}
\label{sec3}
The spurt detection problem is similar in 
objective to the changepoint problem
of detecting sudden/abrupt changes in the statistical nature of observations. The theory of
changepoint detection has matured significantly and different procedures 
have been developed 
(see books by~\citet{BasNIk,TartNikBas} 
for details). Fundamentally speaking, a changepoint procedure is equivalent to an update equation
for the test statistic based on the likelihood ratio of the observations. This test statistic
is tested against a threshold (which is chosen to meet appropriate false alarm constraints) to
lead to a change decision. 
Developing the structure of the update equation as well as setting the threshold require
knowledge of the pre-change and post-change parameters.

A na\"{i}ve approach to leverage this theory in the context of this work is to ignore the
contextual connections between the hidden states and the observations from a terrorist group in
developing the update equation for the test statistic of any chosen changepoint procedure,
apply it to detect change, and restart it once a change decision (spurt/downfall detection)
has been made, so that the next disruption can be monitored.
Ignoring the connections between the states and the observations could potentially
lead to poor performance in detecting the change 
as well as poor decisions on changes in group dynamics (such as resilience and
coordination) and is thus best avoided.

We now develop approaches that leverage these connections and are tailored to
terrorist group dynamics. Toward our goal, we consider the following set of
{\em attack metrics} that capture
different attributes of the group: i) $X_n$, the number of days of terrorist
activity, and ii) $Y_n$, the total number of attacks, both within the $\Delta_n$
time-window:
\begin{eqnarray}
X_n  = \sum_{i \in \Delta_n}   
\indic \left( 
{\mb M}_i > 0 
\right); \quad Y_n  = \sum_{i \in \Delta_n}   
{\mb M}_i , \quad
n = 1,2,\cdots, K 
\label{eq_XnYn}
\end{eqnarray}
where $\indic(\cdot)$ denotes the indicator function of the set under consideration.
Note that $Y_n/\delta$ is the average number of attacks per day and thus $Y_n$ is a
reflection of the intensity of attacks launched by the group. In general, $X_n$ is
more indicative of resilience in the group, whereas $Y_n$ captures the level of
coordination better.

\subsection{Inferencing with $\{ {\mb M}_i \}$}
\label{subsec_3a}
The simplest method to leverage the underlying HMM structure and develop a parametric
scheme to classify the hidden states 
is illustrated in Fig.~\ref{fig_hmm}(a) and is described as follows.
The 
observation sequence $\{ {\mb M}_i \}$, with density as
given in~(\ref{eq_hbg}), is used with the classical Baum--Welch algorithm~\citep{rabiner}
to learn the observation density parameters $\mu_j$ and $\gamma_j$ ($j = 0,1$) as well as
the initial probability density ($\pi_0$ and $\pi_1$) and state transition probability
matrix parameters (${\sf p}_0$ and ${\sf q}_0$); see update equations in Supplementary A.
With the converged Baum--Welch parameter estimates as initialization, the Viterbi
algorithm~\citep{rabiner} is then used to estimate the most probable state sequence
$\{ \widehat{\Sb}_i \}$ given the observations. At this stage, the states can only be
classified as {\em Active} or {\em Inactive} ($\widehat{\Sb}_i \in \{0,1\}$), with no
attribution to any specific mechanism that could result in the corresponding observations.
To ensure inferencing on spurts/downfalls and the source of such disruptions
(resilience and/or coordination) over disjoint time-windows,
one approach is to accumulate the binned state classifications $\widehat{\Sb}_i
\Big|_{ i \in \Delta_n}$ and infer on the specific mechanism leading to the
observations:
\begin{eqnarray}
\label{eq_dec0}
f\left(  \widehat{\Sb}_i \Big|_{ i \in \Delta_n} \right) > \widehat{\eta}
\hspp {\sf and} \hspp g \left(X_n, \hsppp Y_n \right)
> \overline{\eta}
\end{eqnarray}
for an appropriate choice of $f(\cdot)$, $g(\cdot)$, $\widehat{\eta}$ and $\overline{\eta}$
(see Supplementary A for details).

\subsection{Inferencing with $\{ (X_n, Y_n) \}$}
\label{subsec_3b}
In an alternate approach built on $\{ X_n \}$ or $\{ Y_n \}$ as observation
sequence, pictorially illustrated in Fig.~\ref{fig_hmm}(b), we assume
that the hidden state remains fixed over $\Delta_n$: 
$\Sb_{ i} \Big|_{ i \in \Delta_n } = s_n,
\hspp s_n \in \{ 0 , 1 \}$.
The reason for this binning assumption (on the state) is to provide an
explicit attribution to macroscopic dynamics in the group unlike the case
in~Sec.~\ref{subsec_3a}. Further, the density function of $X_n$ or
$Y_n$ becomes difficult to write in closed-form without this assumption. Our goal is to infer $s_n$
with the aid of the appropriate attack metrics 
corresponding to $\Delta_n$.

Since $X_n$ captures resilience, inferencing\footnote{Note that only inferencing with
$\{ (X_n , \hsppp Y_n) \}$ is performed in the hurdle-based geometric setting
by~\citet{vasanth_aoas2013}. Inferencing with $\{ X_n \}$ or $\{ Y_n \}$ is not
performed due to technical difficulties in deriving update equations. We overcome this
difficulty in Supplementary A.} on resilience in the group
is performed with $\{ X_n \}$ as the observation sequence. 
Similarly, inferencing on coordination is performed with $\{ Y_n \}$,
whereas a joint inferencing on the group's activity is performed with the joint sequence
$\{ (X_n , \hsppp Y_n) \}$.
With the hurdle-based geometric model in~(\ref{eq_hbg}), the joint density of
$(X_n , \hsppp Y_n)$ is given as~(see~\citep[Supplementary A]{vasanth_aoas2013} for
details)
\begin{align}
{\sf P} \left( X_n = k, \hsppp Y_n = r \Big|  \Sb_{ i} |_{ i \in \Delta_n } = j
\right) = {\delta \choose k} { r -1 \choose r - k} \cdot
(1 - \gamma_j)^{\delta  - k} (\gamma_j)^{k} \cdot (1 - \mu_j)^k
(\mu_j)^{r - k}, \hspp r \geq k.
\label{eq_jointseq_density}
\end{align}
An important property of the density function in~(\ref{eq_jointseq_density}) is that it
decomposes into a product of two terms, each depending on only one of the two
parameters, $\gamma_j$ and $\mu_j$. This product structure leads to a simplified update
equation for parameter estimation (see~Supplementary A for details).
On the other hand, since $X_n$ counts the number of days of activity, it is a binomial
random variable with parameters $\delta$ and $\gamma_j$:
\begin{eqnarray}
{\sf P} \left(X_n = k \Big|  \Sb_{ i} |_{ i \in \Delta_n } = j \right) =
{\delta \choose k} \cdot (\gamma_j )^k \cdot \left( 1 - \gamma_j \right)^{\delta - k}.
\label{eq_Xn}
\end{eqnarray}
It is to be observed that the above density function depends only on $\gamma_j$ and
hence, inferring of $\mu_j$ is impossible with $\{ X_n \}$ as the observation sequence.
This difficulty is overcome in Supplementary A by using an estimate of $\mu_j$
based on all the observations $\{ {\mb M}_i \}$. Finally, the density function of $Y_n$ is
obtained from~(\ref{eq_jointseq_density}) by summing over the $k$ variable: 
\begin{eqnarray}
{\sf P} \left(Y_n = r \Big|  \Sb_{ i} |_{ i \in \Delta_n } = j \right) =
(1 - \gamma_j)^{\delta} \cdot (\mu_j)^{r }\cdot
\sum_{ k = 1}^{\min(r, \hsppp \delta)} {\delta \choose k} { r -1 \choose r - k} \cdot
{\sf A}^k
\label{eq_Yn}
\end{eqnarray}
where ${\sf A} = \frac{ (1 - \mu_j) \hspp \gamma_j } { (1 - \gamma_j) \hspp \mu_j }$.
While the above expression can be rewritten in terms of Gauss hypergeometric
functions, it appears to be not easily amenable to a closed-form expression rendering
a product decomposition (for parameter estimation) in the two parameters difficult, if not
impossible; see~Supplementary A for details. Under the assumption that $\mu_j > \gamma_j$
(or equivalently, ${\sf A} < 1$), update equations for the observation density parameter
estimates are obtained in Supplementary A in the $\delta \gg 1$ regime.

State classification is performed 
using the model parameter estimates from the use of Baum--Welch algorithm on the
appropriate observation sequence ($\{ X_n \}$, $\{ Y_n \}$, or $\{ (X_n, \hsppp Y_n) \}$).
The output of the Viterbi algorithm is a state estimate for the period of interest
\begin{eqnarray}
\Big\{ \widehat{\Sb}_i \Big|_{ i \in \Delta_n}
= \widehat{s}_n \hsppp \in \hsppp \{ 0, 1 \}
\hspp {\rm for} \hspp {\rm all} \hspp
n = 1, \cdots, K \Big \}. \nonumber
\end{eqnarray}
A state estimate of $1$ with $\{ X_n \}$ as the observation sequence (correspondingly
with $\{ Y_n\}$ and the joint sequence $\{ (X_n, \hsppp Y_n) \}$) indicates that the
group is resilient (coordinating and both, respectively) over the period of
interest, whereas an estimate of $0$ indicates that the group is non-resilient
(non-coordinating or neither, respectively). Transition between states
indicates spurt/downfall in the activity corresponding to the appropriate attribute.



\subsection{Difficulties with Parametric Approaches}
\label{subsec_3c}
While both approaches in Sec.~\ref{subsec_3a} and Sec.~\ref{subsec_3b} exploit the HMM
structure in different ways, they also require a reasonable knowledge of the underlying
parameter estimates for state classification. Acquiring such knowledge 
leads to a 
latency in inferencing. 
Specific to the context of this work, 
terrorism incidents are ``rare'' from
the perspective of model learning, even for some of the most active\footnote{While a case can be
made that these datasets report only a representative subset of the true activity, the fact that
significant amount of resources have to be invested by the terrorist group for every new incident
acts as a natural dampener toward more attacks.} terrorist groups. For example, the FARC dataset
considered by~\citet{vasanth_aoas2013} (also in this work) corresponds to $641$ incidents over
a ten-year period leading to an average of approx.\ $1.23$ incidents per week. Similar trends
can be seen across a 
number of terrorist groups,
see~\citep{lafree1,breiger,melamed,vasanth_compstudy2015} for examples. As a crude illustration,
learning a $4$ parameter model with $100$ observation points (on average)
per parameter leads to a model learning latency of $\frac{ 4 \times 100}{1.23}
\approx 325$ weeks or $\approx 6 \frac{1}{4}$ years.

A closely related and more challenging problem is the fact that most models capture some
underlying attribute of the group dynamics, which in itself can change dramatically over a
long time-period (such as that incurred in model learning). This fact renders assumptions of
model stability over such periods questionable. 
The use of the proposed approaches over a long time-horizon (with time-varying parameter
estimates) opens up an array of issues on the stability of inferencing decisions in the
short time-horizon. 
The two approaches in Sec.~\ref{subsec_3a} and Sec.~\ref{subsec_3b} differ
as follows. Inferencing with $\{ {\mb M}_i \}$ is primarily determined by hard
decisions ($\{ \widehat{\Sb}_i \}$) that could potentially lead to information loss,
unlike inferencing with soft metrics such as $\{ (X_n, Y_n) \}$.
Nevertheless, both approaches suffer from a common problem which
make it unattractive as an {\em online} approach: a {\em retrospective (non-causal)}
state classification process after model learning.

\section{Non-Parametric Approach for Spurt Detection}
\label{sec4}
The above difficulties motivate the
development of causal (non-parametric) approaches which is the focus of this section.
Non-parametric changepoint procedures based on signs or signed rank statistics of
observations with median or Wilcoxon scores have been studied for a long time (see
the book by~\citet{subha} for a survey). While the utility of such procedures in a
terrorist network setting 
has not been addressed before, we
follow along different lines in this work to develop a non-parametric approach that
exploits the HMM structure and still competes well with its parametric counterparts in
detecting spurts and downfalls. In addition to spurt detection, the proposed approach
also identifies the source of disruption behind a spurt/downfall in the activity of a
terrorist group. In this approach, instead of using only the summary statistics ($X_n$ and
$Y_n$) of the vector ${\mb M}_i\Big|_{i \in \Delta_n}$ as in Sec.~\ref{subsec_3b}, we
consider the entire vector to study resilience and coordination signatures in the group.
To develop this approach, recall that resilience and coordination are captured by the
group's ability to perpetrate multiple attacks over successive days and the same day,
respectively. Thus, a metric that measures the degree of ``well-spreadness'' of attacks
over $\Delta_n$ (or its lack thereof) can be used as an indicator and measure of high
resilience (or coordination).

\subsection{Majorization Theory}
\label{subsec_4a}
With this backdrop, majorization theory provides a theoretical framework~\citep{olkin}
to compare two vectors on the basis of their ``well-spreadness.'' For the sake of
self-containment of this paper, a brief introduction to majorization theory,
Schur-convex and -concave functions, catalytic majorization, and equivalent conditions
for verifying a catalytic majorization relationship between two vectors are provided in
Supplementary B. The main conclusion from Supplementary B is provided next.

Let ${\mathbb P}_{\delta}$ denote the space of probability vectors of length $\delta$
(where $\delta > 1$) with $\underline{\boldsymbol P} =
\left[ {\mb P}(1), \cdots, {\mb P}(\delta) \right] \in 
{\mathbb P}_{\delta}$ 
if and only if ${\mb P}(i) \geq 0$ for all $i = 1, \cdots, \delta$ and
$\sum_{i = 1}^{\delta} {\mb P}(i) = 1$. Without loss in generality, we can assume
that the entries of $\underline {\boldsymbol P}$ are arranged in non-increasing order
(that is, ${\mb P}(1) \geq \cdots \geq {\mb P}(\delta)$).
\begin{thm}
\label{main_theorem_main_text}
Let $\{ \underline {\boldsymbol P}, \hsppp \underline {\boldsymbol Q} \} \in 
{\mathbb P}_{\delta}$.
In one of two possibilities, $\underline {\boldsymbol P}$ and $\underline {\boldsymbol Q}$ are
not comparable with each other in the form of a catalytic majorization relationship.
In the 
other possibility, their comparability is verified by checking an equivalent
set of conditions over only two types of functions:
\begin{eqnarray}
{\rm i)} & & {\sf PM}( \underline{\boldsymbol P}, \hsppp \alpha) <
{\sf PM}( \underline{\boldsymbol Q}, \hsppp \alpha) \hspp \hspp
{\rm if} \hspp \alpha > 1 
, \nonumber \\
{\rm ii)} & & {\sf PM}( \underline{\boldsymbol P}, \hsppp \alpha) >
{\sf PM}( \underline{\boldsymbol Q}, \hsppp \alpha) \hspp \hspp
{\rm if} \hspp \alpha < 1, {\hspace{0.05in}} {\sf and}
\nonumber \\
{\rm iii)} & & {\sf SE}( \underline{\boldsymbol P}) >
{\sf SE}( \underline{\boldsymbol Q}). \nonumber
\end{eqnarray}
In the above equations, ${\sf SE}(\cdot)$ and ${\sf PM}(\cdot, \hsppp \alpha)$ stand
for the Shannon entropy 
function and the power mean function corresponding to an
index $\alpha$, and are defined as,
\begin{eqnarray}
{\sf SE}( \underline{\boldsymbol P} ) 
\triangleq
- \sum_{i = 1}^{\delta}
{\mb P}(i) \log \left( {\mb P}(i) \right) ,
& &
{\sf PM}( \underline{\boldsymbol P}, \hsppp \alpha) \triangleq 
\left( 
\frac{ \sum _{i = 1}^{\delta} {\mb P}(i)^{\alpha} }{\sum_{i=1}^{\delta}
\indic( {\mb P}(i) > 0)}
\right)^{1/\alpha}. \nonumber
\end{eqnarray}
\qed
\end{thm}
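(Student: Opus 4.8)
The plan is to reduce the stated equivalence to the known characterization of catalytic (trumping) majorization due to Klimesh and Turgut, and then to recast that characterization---which is naturally phrased in terms of the power sums $S_\alpha(\underline{\boldsymbol P}) = \sum_{i=1}^{\delta} {\mb P}(i)^{\alpha}$ and the Shannon entropy---into the power-mean form of conditions i)--iii). First I would fix the definition of comparability: $\underline{\boldsymbol P}$ and $\underline{\boldsymbol Q}$ are comparable when there is a catalyst vector $\underline{\boldsymbol R}$ for which $\underline{\boldsymbol P} \otimes \underline{\boldsymbol R}$ is majorized by $\underline{\boldsymbol Q} \otimes \underline{\boldsymbol R}$ in the ordinary Hardy--Littlewood--P\'olya sense. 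The established result states that, outside the degenerate cases that constitute the ``not comparable'' alternative, this holds if and only if $S_\alpha(\underline{\boldsymbol P}) < S_\alpha(\underline{\boldsymbol Q})$ for every $\alpha > 1$ and every $\alpha < 0$, while $S_\alpha(\underline{\boldsymbol P}) > S_\alpha(\underline{\boldsymbol Q})$ for every $\alpha \in (0,1)$, together with ${\sf SE}(\underline{\boldsymbol P}) > {\sf SE}(\underline{\boldsymbol Q})$ at $\alpha = 1$. Equivalently, every one of these is the single ordering $H_\alpha(\underline{\boldsymbol P}) > H_\alpha(\underline{\boldsymbol Q})$ of the R\'enyi entropies $H_\alpha = \tfrac{1}{1-\alpha}\log S_\alpha$, which is the viewpoint I would exploit.

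The next step is to translate these power-sum inequalities into the power-mean inequalities of the statement. Since comparability forces the two supports to have equal cardinality---which I would extract by taking $\alpha \to 0^{+}$ in the $(0,1)$ inequalities and $\alpha \to 0^{-}$ in the negative-$\alpha$ inequalities, both limits returning the support cardinality $\sum_i \indic({\mb P}(i) > 0)$---the normalizing denominator in ${\sf PM}(\cdot,\alpha)$ is common to both vectors. Consequently ${\sf PM}(\underline{\boldsymbol P}, \alpha)$ and $S_\alpha(\underline{\boldsymbol P})$ differ only by this common constant and by the monotone map $t \mapsto t^{1/\alpha}$. For $\alpha > 1$ this map is increasing, so $S_\alpha(\underline{\boldsymbol P}) < S_\alpha(\underline{\boldsymbol Q})$ is equivalent to condition i). For $\alpha \in (0,1)$ it is again increasing, yielding ${\sf PM}(\underline{\boldsymbol P}, \alpha) > {\sf PM}(\underline{\boldsymbol Q}, \alpha)$ from $S_\alpha(\underline{\boldsymbol P}) > S_\alpha(\underline{\boldsymbol Q})$, whereas for $\alpha < 0$ the map reverses direction, so $S_\alpha(\underline{\boldsymbol P}) < S_\alpha(\underline{\boldsymbol Q})$ also yields ${\sf PM}(\underline{\boldsymbol P}, \alpha) > {\sf PM}(\underline{\boldsymbol Q}, \alpha)$. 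The sign reversal of $1/\alpha$ across $\alpha = 0$ is precisely what merges the two power-sum regimes into the single condition ii).

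The remaining step is the point $\alpha = 1$, which the power means cannot see: because every vector in ${\mathbb P}_{\delta}$ sums to one, ${\sf PM}(\underline{\boldsymbol P}, 1) = 1/\sum_i \indic({\mb P}(i) > 0)$ is the same for both vectors, so conditions i)--ii) carry no information there. I would recover the missing constraint as the limit $\alpha \to 1$ of $\tfrac{1}{1-\alpha} \log S_\alpha(\underline{\boldsymbol P})$, which equals ${\sf SE}(\underline{\boldsymbol P})$ by an elementary L'H\^opital computation; matching the strict R\'enyi ordering across the gap at $\alpha=1$ then forces condition iii). Assembling the three ranges $\alpha > 1$, $\alpha < 1$, and $\alpha = 1$ gives the claimed equivalence, with failure of any of i)--iii) placing the pair in the ``not comparable'' case.

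The main obstacle is the sufficiency half of the underlying characterization itself: while necessity of the power-sum inequalities follows easily from the Schur-convexity of $S_\alpha$ (for $\alpha > 1$) together with its multiplicativity $S_\alpha(\underline{\boldsymbol P} \otimes \underline{\boldsymbol R}) = S_\alpha(\underline{\boldsymbol P})\, S_\alpha(\underline{\boldsymbol R})$ under tensor products, the converse requires an explicit construction of a catalyst $\underline{\boldsymbol R}$ and is the technically deep ingredient that I would import rather than reprove. A secondary difficulty is the careful bookkeeping at the boundaries---verifying that comparable pairs share a common support cardinality, controlling the $\alpha \to 0$ and $\alpha \to 1$ limits, and delineating exactly which degeneracies (coincident vectors, or power sums that agree at isolated values of $\alpha$) must be excluded so that the pair falls unambiguously into one of the two stated possibilities.
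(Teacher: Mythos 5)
Your core strategy coincides with the paper's: Theorem~\ref{main_theorem_main_text} is, in the paper, simply a restatement of the Klimesh--Turgut characterization of catalytic majorization (Theorem~2 of Supplementary~B), which the authors import without proof, exactly as you propose to do for the hard sufficiency direction. The one place where you add genuine content --- the translation from the power sums $S_\alpha$ to the paper's \emph{normalized} power means ${\sf PM}(\cdot,\hsppp\alpha)$ --- is where a real gap lies. Your claim that comparability forces $\sum_i \indic({\mb P}(i)>0) = \sum_i \indic({\mb Q}(i)>0)$ is false: take $\underline{\boldsymbol P} = [1/3,\hsppp 1/3,\hsppp 1/3]$ and $\underline{\boldsymbol Q} = [1/2,\hsppp 1/2,\hsppp 0]$. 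Then $\underline{\boldsymbol P} \prec \underline{\boldsymbol Q}$, hence $\underline{\boldsymbol P} \prec_T \underline{\boldsymbol Q}$ by Prop.~2(a) of Supplementary~B, yet the support sizes are $3$ and $2$. Your derivation of equal cardinality via the limit $\alpha \to 0^-$ tacitly assumes that $S_\alpha(\underline{\boldsymbol Q})$ is finite for $\alpha < 0$, i.e.\ that $\underline{\boldsymbol Q}$ has no zero entries; when it does, the $\alpha<0$ inequalities hold vacuously (${\sf PM}(\underline{\boldsymbol Q},\hsppp\alpha)=0$ for $\alpha\leq 0$, as Supplementary~B itself notes) and yield no lower bound on ${\sf NZ}(\underline{\boldsymbol Q})$; only the one-sided bound ${\sf NZ}(\underline{\boldsymbol P}) \geq {\sf NZ}(\underline{\boldsymbol Q})$, coming from Schur-concavity of the rank function, survives.

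Once the denominators differ, the equivalence between $S_\alpha(\underline{\boldsymbol P}) > S_\alpha(\underline{\boldsymbol Q})$ and ${\sf PM}(\underline{\boldsymbol P},\hsppp\alpha) > {\sf PM}(\underline{\boldsymbol Q},\hsppp\alpha)$ on $\alpha\in(0,1)$ breaks down: in the example above ${\sf PM}(\underline{\boldsymbol P},\hsppp\alpha) \equiv 1/3 < 1/2 \equiv {\sf PM}(\underline{\boldsymbol Q},\hsppp\alpha)$ for every $\alpha$, so condition~ii) fails on $(0,1)$ even though the pair is (catalytically, indeed ordinarily) comparable. In other words, your translation is valid exactly on the subclass where both vectors have full support --- there your argument (common denominator, monotonicity of $t\mapsto t^{1/\alpha}$ with the sign flip at $\alpha=0$, and the R\'enyi limit at $\alpha=1$) is correct and is in fact more informative than the paper's bare citation --- but it cannot be repaired in general without either restricting to full-support pairs or reverting to the unnormalized power sums of the Klimesh--Turgut statement for $\alpha\in(0,1)$. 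That your attempted justification stumbles precisely here is worth noting, since it exposes that the normalized form of condition~ii) used in the theorem is not literally equivalent to the imported power-sum conditions when the supports differ.
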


\subsection{Computational Reduction by a Single Function Search}
\label{subsec_4b}
While Theorem~\ref{main_theorem_main_text} establishes the importance of evaluating
the power mean function over the continuous parameter $\alpha$ in comparing two different
vectors, for computational reasons, we propose the search over a single function as proxy.
This single function is the normalized power mean corresponding to a fixed index
$\alpha^{\star} \geq 1$ (see the definition in Supplementary B as well as a motivation
for this functional form as a candidate), which is given as,
\begin{eqnarray}
{\sf NPM}( \underline {\boldsymbol P}, \hsppp \alpha^{\star}) 
= \frac{  {\sf PM}( \underline {\boldsymbol P}, \hsppp \alpha^{\star}) }
{ \sum_{i = 1}^{\delta} \indic \left( {\boldsymbol P}(i) > 0 \right) }
= \frac{ \left( \sum _{i = 1}^{\delta} {\boldsymbol P}(i) ^{\alpha^{\star}} \right)^{1/\alpha^{\star}} }
{ \left( \sum_{i = 1}^{\delta} \indic \left( {\boldsymbol P}(i) > 0 \right) \right)^{1 +
1/\alpha^{\star}  }}.
\nonumber
\end{eqnarray}

To explain the reason for this specific choice,
we define $\alpha_{\sf max}({\sf PM})$ and $\alpha_{\sf max}({\sf NPM})$ corresponding to
$\underline {\boldsymbol P}$ and $\underline {\boldsymbol Q}$
(suitably permuted) as follows:
\begin{eqnarray}
\alpha_{\sf max}({\sf PM}) & \triangleq & \arg\sup_{ \alpha \hsppp \in \hsppp [1, \hsppp \infty)  }
\Big\{
{\sf PM}( \underline {\boldsymbol P}, \hsppp \alpha )
> {\sf PM}( \underline {\boldsymbol Q}, \hsppp \alpha) 
\nonumber \\
& & {\hspace{1.2in}}
\hspp {\sf and} \hspp
{\sf PM}( \underline {\boldsymbol P}, \hsppp \widetilde{\alpha} )
\leq
{\sf PM}( \underline {\boldsymbol Q}, \hsppp \widetilde{\alpha} ) 
\hspp {\sf for} \hspp {\sf all} \hspp \widetilde{\alpha} > \alpha \Big\}
\nonumber \\
\alpha_{\sf max}({\sf NPM}) & \triangleq &
\arg\sup_{ \alpha \hsppp \in \hsppp [1, \hsppp \infty)  }
\Big\{
{\sf NPM}( \underline {\boldsymbol P}, \hsppp \alpha ) 
> {\sf NPM}( \underline {\boldsymbol Q}, \hsppp \alpha) 
\nonumber \\
&& {\hspace{1.2in}}
\hspp {\sf and} \hspp
{\sf NPM}( \underline {\boldsymbol P}, \hsppp \widetilde{\alpha} )
\leq {\sf NPM}( \underline {\boldsymbol Q}, \hsppp \widetilde{\alpha} ) 
\hspp {\sf for} \hspp {\sf all} \hspp \widetilde{\alpha} > \alpha \Big\}.
\nonumber
\end{eqnarray}
In other words, $\alpha_{\sf max}(\cdot)$ is the largest choice (supremum) of
$\alpha$ at which the inequality relationship desired in Theorem~\ref{main_theorem_main_text}
fails to hold (for the corresponding function) with $\underline {\boldsymbol P}$ and
$\underline {\boldsymbol Q}$ as inputs.
Similarly, we define $\alpha_{\sf min}({\sf PM})$ and $\alpha_{\sf min}({\sf NPM})$,
corresponding to the smallest choice (infimum) of $\alpha$ at which the inequality
relationship of Theorem~\ref{main_theorem_main_text} fails to hold, as:
\begin{eqnarray}
\alpha_{\sf min}({\sf PM}) & \triangleq &
\arg\inf_{ \alpha \hsppp \in \hsppp (-\infty, \hsppp 1]  }
\Big\{
{\sf PM}( \underline {\boldsymbol P}, \hsppp \alpha )
< {\sf PM}( \underline {\boldsymbol Q}, \hsppp \alpha) 
\nonumber \\
& & {\hspace{1.2in}}
\hspp {\sf and} \hspp
{\sf PM}( \underline {\boldsymbol P}, \hsppp \widetilde{\alpha} )
\geq
{\sf PM}( \underline {\boldsymbol Q}, \hsppp \widetilde{\alpha} ) 
\hspp {\sf for} \hspp {\sf all} \hspp \widetilde{\alpha} < \alpha \Big\}
\nonumber \\
\alpha_{\sf min}({\sf NPM}) & \triangleq &
\arg\inf_{ \alpha \hsppp \in \hsppp (-\infty, \hsppp 1]  }
\Big\{
{\sf NPM}( \underline {\boldsymbol P}, \hsppp \alpha ) 
< {\sf NPM}( \underline {\boldsymbol Q}, \hsppp \alpha) 
\nonumber \\
&& {\hspace{1.2in}}
\hspp {\sf and} \hspp
{\sf NPM}( \underline {\boldsymbol P}, \hsppp \widetilde{\alpha} )
\geq {\sf NPM}( \underline {\boldsymbol Q}, \hsppp \widetilde{\alpha} ) 
\hspp {\sf for} \hspp {\sf all} \hspp \widetilde{\alpha} < \alpha \Big\}.
\nonumber
\end{eqnarray}

Note that $\alpha_{\sf max}({\sf PM})$ 
is well-defined since
\begin{eqnarray}
\lim_{\alpha \hsppp \rightarrow \hsppp \infty}
{\sf PM}( \underline {\boldsymbol P}, \hsppp \alpha )
 - {\sf PM}( \underline {\boldsymbol Q}, \hsppp \alpha)
= {\mb P}(1) - {\mb Q}(1)
\nonumber
\end{eqnarray}
and this quantity can be ensured to be upper bounded by $0$ after an appropriate
permutation of $\underline {\boldsymbol P}$ and $\underline {\boldsymbol Q}$. Similarly,
$\alpha_{\sf max}({\sf NPM})$ is well-defined since
\begin{eqnarray}
\lim_{\alpha \hsppp \rightarrow \hsppp \infty}
{\sf NPM}( \underline {\boldsymbol P}, \hsppp \alpha )
- {\sf NPM}( \underline {\boldsymbol Q}, \hsppp \alpha) =
\frac{ {\mb P}(1) } { \sum_{i = 1}^{\delta} \indic \left( {\boldsymbol P}(i) > 0 \right) }
-
\frac{ {\mb Q}(1) } { \sum_{i = 1}^{\delta} \indic \left( {\boldsymbol Q}(i) > 0 \right) } .
\nonumber
\end{eqnarray}
As before, the quantity above can also be ensured to be upper bounded by $0$ after
an appropriate permutation (but not perhaps the same permutation as in the previous case).
Similarly, $\alpha_{\sf min}({\sf PM})$ and $\alpha_{\sf min}({\sf NPM})$ are also well-defined
since we have
\begin{eqnarray}
\lim_{\alpha \hsppp \rightarrow \hsppp -\infty}
{\sf PM}( \underline {\boldsymbol P}, \hsppp \alpha )
 - {\sf PM}( \underline {\boldsymbol Q}, \hsppp \alpha)
& = & {\mb P}(\delta) - {\mb Q}(\delta)
\nonumber \\
\lim_{\alpha \hsppp \rightarrow \hsppp -\infty}
{\sf NPM}( \underline {\boldsymbol P}, \hsppp \alpha )
- {\sf NPM}( \underline {\boldsymbol Q}, \hsppp \alpha) & = &
\frac{ {\mb P}(\delta) } { \sum_{i = 1}^{\delta} \indic \left( {\boldsymbol P}(i) > 0 \right) }
-
\frac{ {\mb Q}(\delta) } { \sum_{i = 1}^{\delta} \indic \left( {\boldsymbol Q}(i) > 0 \right) } ,
\nonumber
\end{eqnarray}
both of which can be lower bounded by $0$ with appropriate permutations of $\underline {\boldsymbol P}$
and $\underline {\boldsymbol Q}$.

\begin{figure}[tbh!]
\begin{center}
\begin{tabular}{cc}
\includegraphics[height=2.45in,width=2.7in] {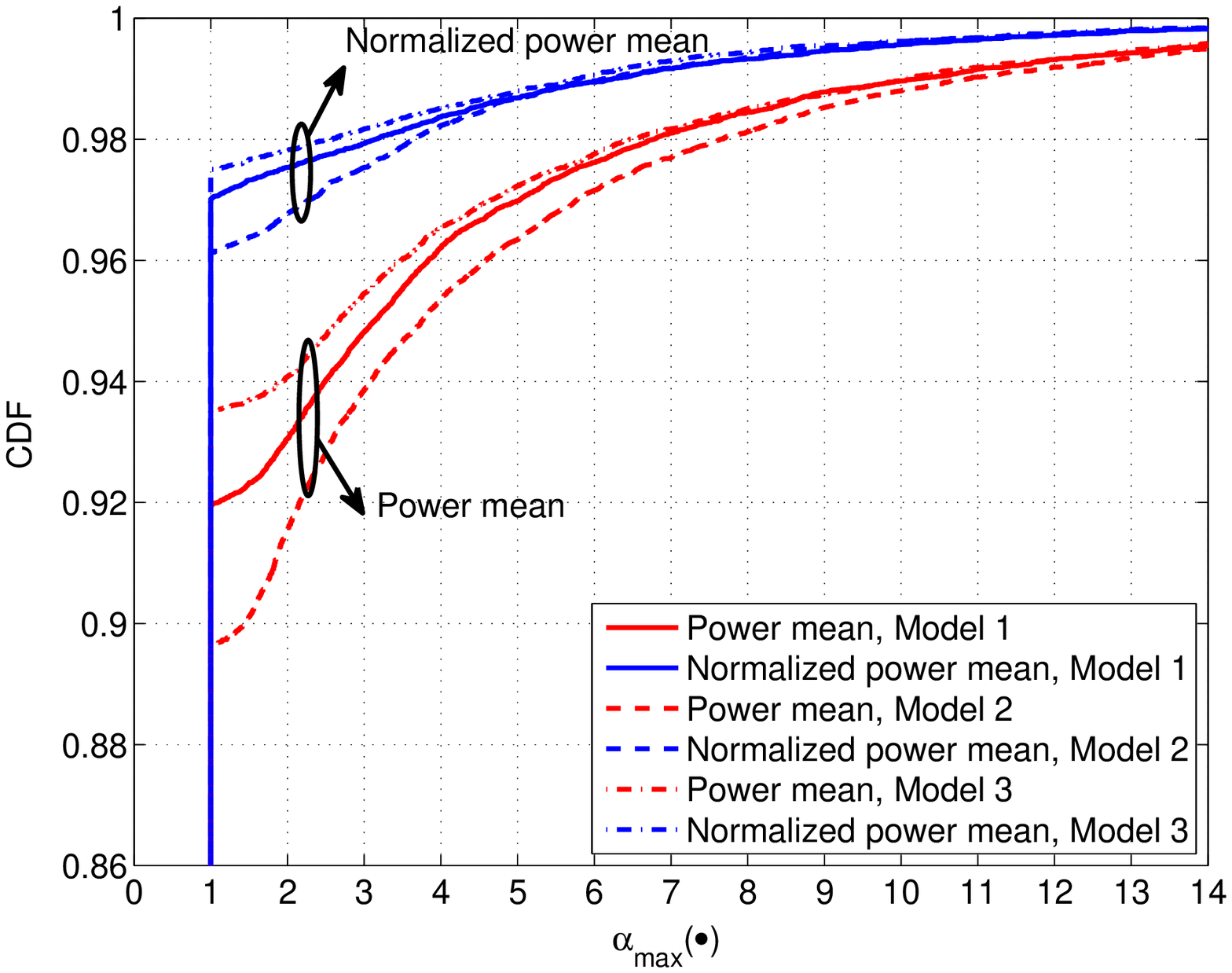}
&
\includegraphics[height=2.45in,width=2.7in] {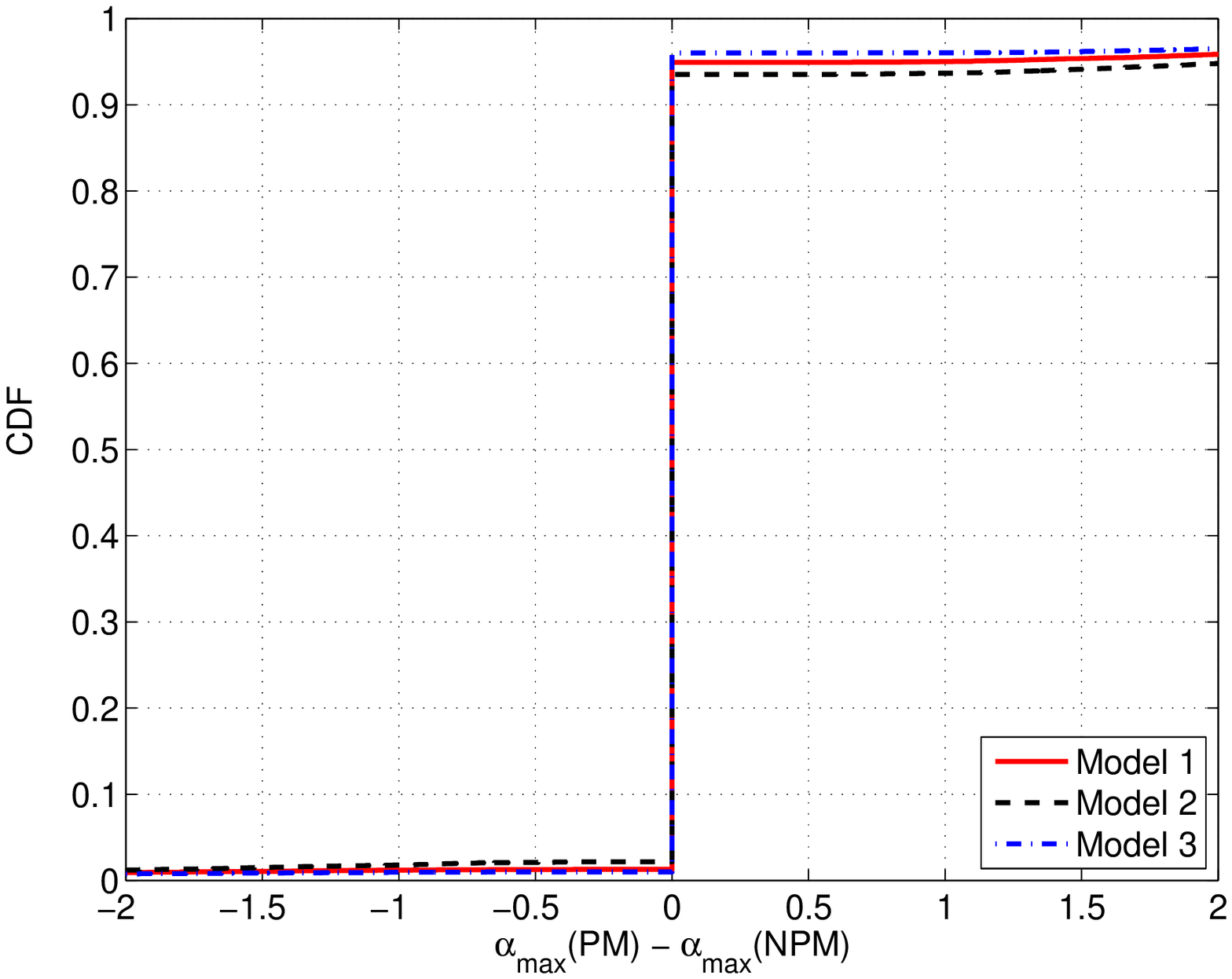}
\\
(a) & (b)
\\
\includegraphics[height=2.45in,width=2.7in] {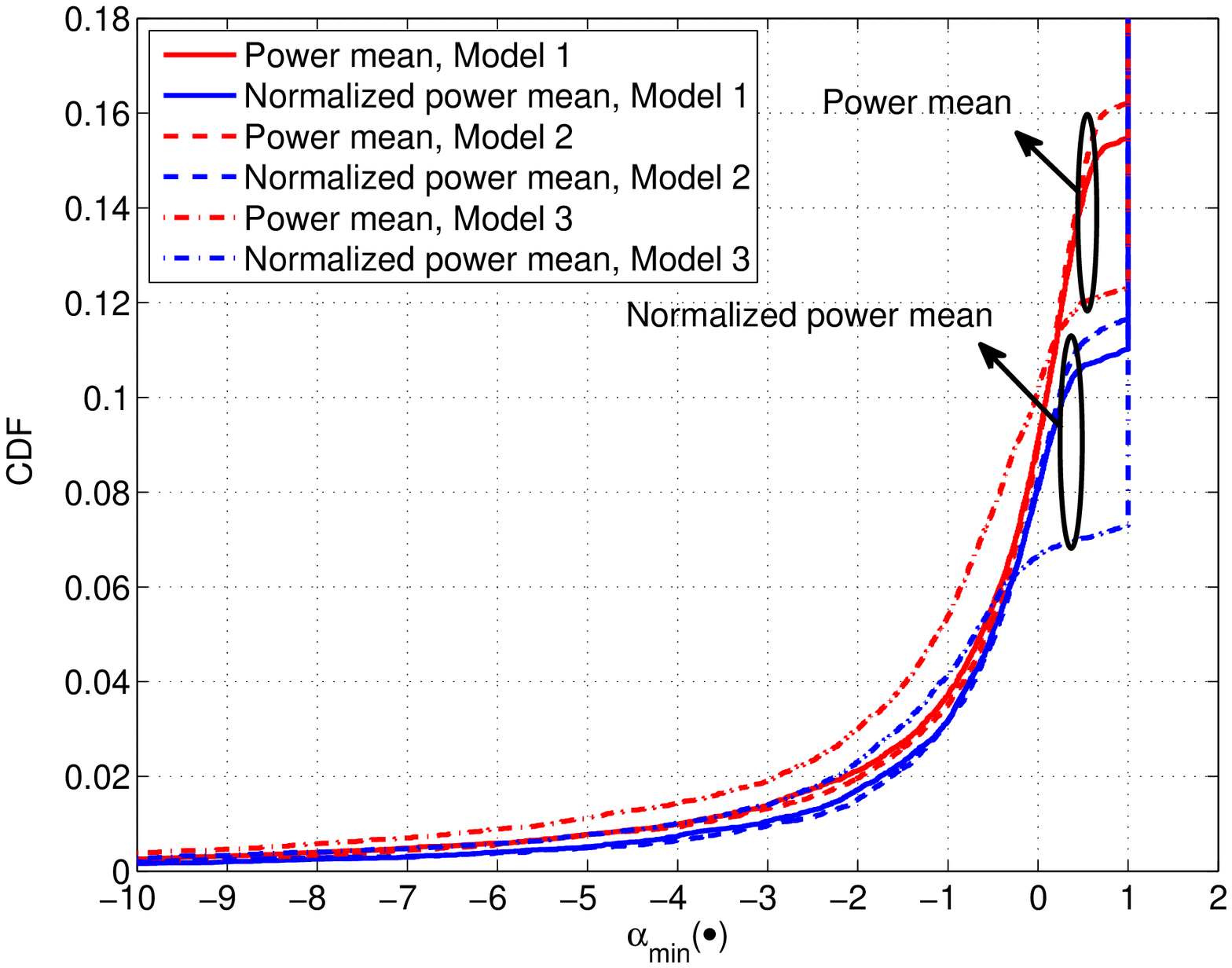}
&
\includegraphics[height=2.45in,width=2.7in] {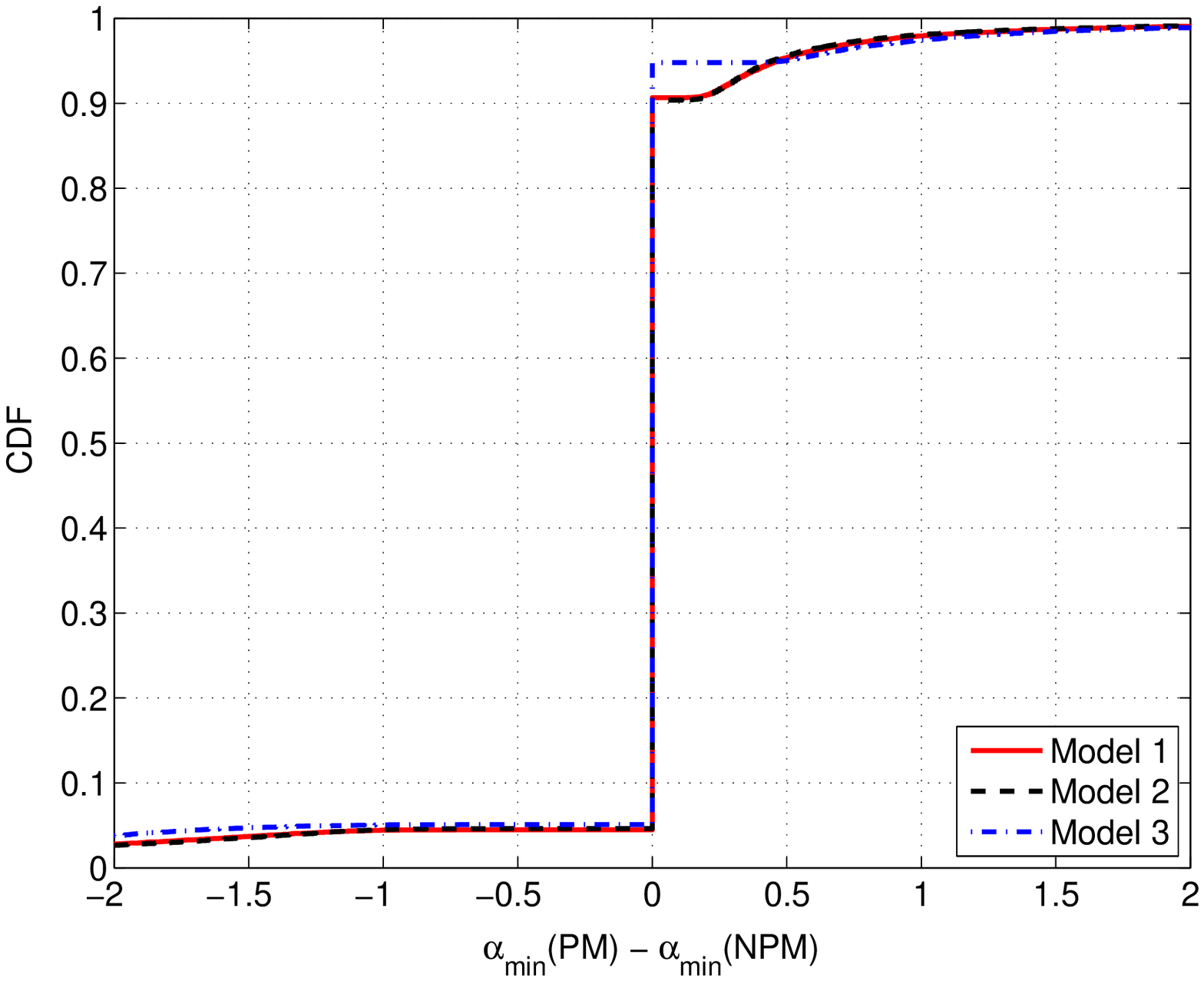}
\\
(c) & (d)
\end{tabular}
\caption{\label{fig_alpha_choose}
CDF of: (a) $\alpha_{\sf max}({\sf PM})$ and $\alpha_{\sf max}({\sf NPM})$,
(b) $\alpha_{\sf max}({\sf PM}) - \alpha_{\sf max}({\sf NPM})$,
(c) $\alpha_{\sf min}({\sf PM})$ and $\alpha_{\sf min}({\sf NPM})$, and
(d) $\alpha_{\sf min}({\sf PM}) - \alpha_{\sf min}({\sf NPM})$ with
$\delta = 7$ and $K = 10$ for three random models generating
$\underline{\boldsymbol P}$ and $\underline{\boldsymbol Q}$.}
\end{center}
\end{figure}

Specifically, if $\underline {\boldsymbol P} \prec \underline {\boldsymbol Q}$,
(from Prop.\ 1 and Corollary 1 of Supplementary B) we have
$\{ \alpha_{\sf max}({\sf PM}), \hsppp  \alpha_{\sf max}({\sf NPM}), \hsppp
\alpha_{\sf min}({\sf PM}) \} = 1$. In general, if $\underline {\boldsymbol P} \nprec
\underline {\boldsymbol Q}$, we can have the following possibilities:
$\{ \alpha_{\sf max}({\sf PM}), \hsppp \alpha_{\sf max}({\sf NPM}) \} > 1$ and
$\{ \alpha_{\sf min}({\sf PM}), \hsppp \alpha_{\sf min}({\sf NPM}) \} < 1$.
To understand the precise behavior of these quantities and the relative frequency of
the event where $\{ \alpha_{\sf max}({\sf PM}), \hsppp \alpha_{\sf max}({\sf NPM}) \} > 1$
and $\{ \alpha_{\sf min}({\sf PM}), \hsppp \alpha_{\sf min}({\sf NPM}) \} < 1$,
we now study the cumulative distribution function (CDF) 
of $\alpha_{\sf max}(\cdot)$ and $\alpha_{\sf min}(\cdot)$
with three random models for generating $\underline {\boldsymbol P}$ and
$\underline {\boldsymbol Q}$. In all the models, we have $\underline {\boldsymbol P} \in
{\mathbb P}_{\delta_1}$ and $\underline {\boldsymbol Q} \in {\mathbb P}_{\delta_2}$ where
$\delta_1$ and $\delta_2$ are (discrete) uniformly distributed as $\{ \delta_1, \hsppp \delta_2 \}
\sim {\cal U}( [1, \hsppp \delta] )$.
The non-zero entries of $\underline {\boldsymbol P}$ and $\underline {\boldsymbol Q}$
(which are yet to be permuted in non-increasing order) are generated as
\begin{eqnarray}
{\mb P}(i) = \frac{p_i}{ \sum_{j = 1}^{\delta_1} p_j },
\hspp i = 1, \cdots, \delta_1, & &
{\mb Q}(i) = \frac{q_i}{ \sum_{j = 1}^{\delta_2} q_j },
\hspp i = 1, \cdots, \delta_2
\nonumber
\end{eqnarray}
where $\{ p_i, q_i \}$ are from one of the three models below (continuous uniform,
folded normal and discrete uniform):
\begin{eqnarray}
{\sf Model \hspp 1} & : & \{ p_i , \hsppp q_i \} \sim {\cal U}([0,1]) \nonumber \\
{\sf Model \hspp 2} & : & \{ p_i , \hsppp q_i \} \sim |{\cal N}| \nonumber \\
{\sf Model \hspp 3} & : & \{ p_i , \hsppp q_i \} \sim 
{\cal U}( [1, \hsppp K]).
\nonumber
\end{eqnarray}

Fig.~\ref{fig_alpha_choose}(a) plots the CDF of $\alpha_{\sf max}({\sf PM})$ and
$\alpha_{\sf max}({\sf NPM})$ for the three models with $\delta = 7$ days and $K = 10$.
Clearly, $\alpha_{\sf max}({\sf PM})$ and $\alpha_{\sf max}({\sf NPM})$ are $1$ for a
large fraction (over $90 \%$) of the realizations of $\{ \underline {\boldsymbol P},
\hsppp \underline {\boldsymbol Q} \}$ suggesting\footnote{This is not conclusively established
since Conditions ii) and iii) of Theorem~\ref{main_theorem_main_text} have not been verified.}
that these realizations can be compared in a catalytic majorization relationship. We also
observe that, for a fixed $\alpha$, the normalized power mean can be used to discern
$\underline {\boldsymbol P}$ and $\underline {\boldsymbol Q}$ better in the sense that this
metric allows a comparative relationship for a larger fraction of the realizations than the
power mean. Fig.~\ref{fig_alpha_choose}(b)
plots the CDF of $\alpha_{\sf max}({\sf PM}) - \alpha_{\sf max}({\sf NPM})$ for the three
models which also shows that while $\alpha_{\sf max}({\sf PM}) = 1 =
\alpha_{\sf max}({\sf NPM})$ in a large fraction (over $95 \%$) of the cases, the
normalized power mean is more effective than the power mean reflected by the higher
probability for the event where $\alpha_{\sf max}({\sf PM})$ is larger than
$\alpha_{\sf max}({\sf NPM})$ than for the event where $\alpha_{\sf max}({\sf PM})$ is
smaller than $\alpha_{\sf max}({\sf NPM})$. Similar trends can be seen in
Figs.~\ref{fig_alpha_choose}(c)-(d) for $\{ \alpha_{\sf min}({\sf PM}), \hsppp
\alpha_{\sf min}({\sf NPM}) \}$ and $\alpha_{\sf min}({\sf PM}) -
\alpha_{\sf min}({\sf NPM})$.

While Fig.~\ref{fig_alpha_choose} does not conclusively establish the efficacy of the
normalized power mean (relative to the power mean) for a single function search as proxy,
a further study illustrated in Table~\ref{table1_efficacy} provides this evidence. In
this table, we list the conditional probability ${\sf P}_{\sf PM}(\alpha^{\star})$ (the
definition for ${\sf P}_{\sf NPM}(\alpha^{\star})$ follows analogously) that all the
inequality relationships in Theorem~\ref{main_theorem_main_text} are satisfied provided
the corresponding inequality relationship with the proxy function is satisfied:
\begin{eqnarray}
{\sf P}_{\sf PM}(\alpha^{\star}) & = & \left\{
\begin{array}{c}
{\sf P} \Big(
{\sf PM}( \underline {\boldsymbol P}, \hsppp \alpha ) \leq
{\sf PM}( \underline {\boldsymbol Q}, \hsppp \alpha )
\hspp {\sf for} \hspp {\sf all} \hspp \alpha \geq 1
\hspp {\sf and} \hspp
{\sf PM}( \underline {\boldsymbol P}, \hsppp \alpha ) \geq
{\sf PM}( \underline {\boldsymbol Q}, \hsppp \alpha )
\\
{\hspace{0.7in}}  \left.
\hspp {\sf for} \hspp {\sf all} \hspp \alpha \leq 1
\Big| {\sf PM}( \underline {\boldsymbol P}, \hsppp \alpha^{\star} ) \leq
{\sf PM}( \underline {\boldsymbol Q}, \hsppp \alpha^{\star} )
\right) \hspp
{\sf if} \hspp \alpha^{\star} \geq 1
\\
{\sf P} \Big(
{\sf PM}( \underline {\boldsymbol P}, \hsppp \alpha ) \leq
{\sf PM}( \underline {\boldsymbol Q}, \hsppp \alpha )
\hspp {\sf for} \hspp {\sf all} \hspp \alpha \geq 1
\hspp {\sf and} \hspp
{\sf PM}( \underline {\boldsymbol P}, \hsppp \alpha ) \geq
{\sf PM}( \underline {\boldsymbol Q}, \hsppp \alpha )
\\
{\hspace{0.7in}}  \left.
\hspp {\sf for} \hspp {\sf all} \hspp \alpha \leq 1
\Big| {\sf PM}( \underline {\boldsymbol P}, \hsppp \alpha^{\star} ) \geq
{\sf PM}( \underline {\boldsymbol Q}, \hsppp \alpha^{\star} )
\right) \hspp
{\sf if} \hspp \alpha^{\star} \leq 1.
\end{array}
\right.
\nonumber
\end{eqnarray}
Recall that the standard error for a probability estimate $\widehat{\sf p}$ with
${\sf n}$ samples used to estimate this probability is given as
$\sqrt{ \frac{ \widehat{\sf p} \cdot ( 1 - \widehat{\sf p}) } { {\sf n} } }$.
While Table~\ref{table1_efficacy} corresponds to $\delta = 7$, $K = 10$ and $K = 15$,
similar patterns are seen with $\delta = 10$ and $\delta = 14$ as well (data not
provided here for space reasons). From
Table~\ref{table1_efficacy}, we note that the efficacy of a single function search with
a larger $\alpha^{\star}$ is in general better than a check with a smaller $\alpha^{\star}$.
Further, the normalized power mean discerns the comparability between two vectors more
effectively than the power mean with $\alpha^{\star} = 1$ and $\alpha^{\star} = 2$ allowing
the comparison of over $90 \%$ of the realizations of $\{ \underline {\boldsymbol P},
\hsppp \underline {\boldsymbol Q} \}$ suggesting their utility here. 
We use the choice $\alpha^{\star} = 2$ 
in the rest of the sequel for illustrative purposes.


\begin{table*}[htb!]
\caption{Conditional probability with their standard errors in parentheses
capturing efficacy of a single function search. Largest conditional probability
values are highlighted in bold-face.}
\label{table1_efficacy}
\begin{tabular}{ccccc||cccc}
\hline
& \multicolumn{4}{c||}{ ${\sf P}_{\sf PM}(\alpha^{\star})$ } &
\multicolumn{4}{c}{ ${\sf P}_{\sf NPM}(\alpha^{\star})$ }  \\
\hline
$\alpha^{\star}$ & $-1$ & $0$ & $1$ & $2$
& $-1$ & $0$ & $1$ & $2$
\\
\hline
{\sf Model 1} &
$0.7016$  & $0.7412$ &  $0.8797$   &  $0.8895$ &
$0.7939$ & $0.8364$  &  $0.9498$ &  ${\bf 0.9527}$
\\
& ($0.0030$) & ($0.0029$) & ($0.0024$) & ($0.0023$) &
($0.0026$) & ($0.0025$) & ($0.0015$)  & ($0.0015$)
\\
\hline
{\sf Model 2} &
$0.6689$ & $0.7075$  &  $0.8461$  &  $0.8606$ &
$0.7753$  & $0.8205$  &  $0.9359$  &  ${\bf 0.9360}$
\\
& ($0.0031$) & ($0.0030$) & ($0.0026$) & ($0.0026$) &
($0.0027$) & ($0.0026$) & ($0.0017$) & ($0.0017$)
\\
\hline
{\sf Model 3,} &
$0.7598$ & $0.7994$  & $0.9139$   &  $0.9002$  &
$0.8465$  & $0.8678$  & ${\bf 0.9720}$  &  $0.9563$
\\
$K = 10$
& ($0.0028$) & ($0.0027$) & ($0.0020$) & ($0.0021$) &
($0.0024$) & ($0.0022$) & ($0.0012$) &  ($0.0014$)
\\
\hline
{\sf Model 3,} &
$0.7386$ & $0.7807$  & $0.9072$   &  $0.8939$  &
$0.8318$  & $0.8600$  & ${\bf 0.9711}$  &  $0.9543$
\\
$K = 15$
& ($0.0029$) & ($0.0028$) & ($0.0021$) & ($0.0022$) &
($0.0024$) & ($0.0023$) & ($0.0012$) &  ($0.0015$)
\\
\hline
\ignore{
\multicolumn{9}{c}{ $\delta = 10$ days } \\
\hline
& \multicolumn{4}{c||}{ ${\sf P}_{\sf PM}(\alpha^{\star})$ } &
\multicolumn{4}{c}{ ${\sf P}_{\sf NPM}(\alpha^{\star})$ }  \\
\hline
$\alpha^{\star}$ & $-1$ & $0$ & $1$ & $2$
& $-1$ & $0$ & $1$ & $2$
\\
\hline
{\sf Model 1} &
$0.7100$  & $0.7602$ &  $0.8792$   &  $0.8953$ &
$0.8034$ & $0.8580$  &  $0.9457$ &  $0.9514$
\\
& ($0.0029$) & ($0.0029$) & ($0.0024$) & ($0.0022$) &
($0.0026$) & ($0.0023$) & ($0.0016$)  & ($0.0015$)
\\
\hline
{\sf Model 2} &
$0.6779$ & $0.7162$  &  $0.8487$  &  $0.8625$ &
$0.7821$  & $0.8245$  &  $0.9362$  &  $0.9368$
\\
& ($0.0035$) & ($0.0035$) & ($0.0030$) & ($0.0029$) &
($0.0031$) & ($0.0029$) & ($0.0020$) & ($0.0020$)
\\
\hline
{\sf Model 3,} &
$0.7598$ & $0.7994$  & $0.9139$   &  $0.9002$  &
$0.8465$  & $0.8678$  & $0.9720$  &  $0.9563$
\\
$K = 10$
& ($0.0028$) & ($0.0027$) & ($0.0020$) & ($0.0021$) &
($0.0024$) & ($0.0022$) & ($0.0012$) &  ($0.0014$)
\\
\hline
{\sf Model 3,} &
$0.7386$ & $0.7807$  & $0.9072$   &  $0.8939$  &
$0.8318$  & $0.8600$  & $0.9711$  &  $0.9543$
\\
$K = 15$
& ($0.0029$) & ($0.0028$) & ($0.0021$) & ($0.0022$) &
($0.0024$) & ($0.0023$) & ($0.0012$) &  ($0.0015$)
\\
\hline
\hline
}
\end{tabular}
\end{table*}

\subsection{Appropriate Function for Resilience and Coordination Monitoring}
\label{subsec_4c}
With Shannon entropy and normalized power mean corresponding to a fixed index
$\alpha^{\star}$ as candidate functions, we now illustrate the
importance of one function (over the other) depending on whether the goal is to
measure changes in resilience or coordination.

In the first case study, we consider the scenario where $K$ attacks are spread over
$\Delta_n$ in different ways: $K-k$ attacks on one day and $k$ days with one attack
on each day (where $k \in [0, \delta-1]$) 
corresponding to an {\em attack frequency vector}
\begin{eqnarray}
\underline{\boldsymbol P}_k = \frac{1}{K} \cdot
\left[ 
K-k , \hspp \underbrace{
1, \hspp \cdots, \hspp 1}_{k \hspp {\sf times}}
, \hspp \underbrace{0 , \hspp \cdots, \hspp 0}_{ \delta - (k + 1) \hspp {\sf times}}
\right]  .
\nonumber
\end{eqnarray}
The attack frequency vector captures the distribution of frequency of attacks over
$\Delta_n$ and by definition, $\underline{\boldsymbol P}_k \in {\mathbb{P}}_{\delta}$
provided that there is at least one attack over $\Delta_n$.
With $k = 0$, the group is highly coordinating since all the attacks are clustered on
one day. With 
$k = \delta - 1$, the group is 
resilient since the attacks are well-spread 
over $\Delta_n$. In general, as $k$
increases, the coordination in the group decreases and its resilience increases.

We consider two benchmark attack frequency vectors $\underline{\boldsymbol P}_{\sf c}$
and $\underline{\boldsymbol P}_{\sf r}$ to compare the efficacy of the two majorization
metrics (Shannon entropy and normalized power mean corresponding to an index
$\alpha^{\star}$) in measuring changes in the group dynamics with $\underline{\boldsymbol P}_k$:
\begin{eqnarray}
\underline{\boldsymbol P}_{\sf c} 
= \Big[ 1, \hspp \underbrace{0, \hspp \cdots,
\hspp 0}_{\delta - 1 \hspp {\sf times}} \Big],
\hspp \hspp \hspp \hspp \hspp
\underline{\boldsymbol P}_{\sf r} 
= \Big[
\underbrace{ 1/\delta , \hspp \cdots, \hspp 1/\delta
}_{ \delta \hspp {\sf times}} \Big].
\nonumber
\end{eqnarray}
The majorization metrics considered for comparison are the differences in
Shannon entropy and normalized power mean with $\underline{\boldsymbol P}_k$ and
the benchmark vectors:
\begin{eqnarray}
\Delta{\sf SE}_{\sf r}(k) & \triangleq &
\Big| {\sf SE}( \underline{\boldsymbol P}_{\sf r} )
- {\sf SE}( \underline{\boldsymbol P}_{k} ) \Big|
\nonumber \\
\Delta{\sf SE}_{\sf c}(k) & \triangleq &
\Big| {\sf SE}( \underline{\boldsymbol P}_{\sf c} )
- {\sf SE}( \underline{\boldsymbol P}_{k} ) \Big|
\nonumber \\
\Delta{\sf NPM}_{\sf r}(k) & \triangleq &
\Big| {\sf NPM}( \underline{\boldsymbol P}_{\sf r}, \hsppp \alpha^{\star} )
- {\sf NPM}( \underline{\boldsymbol P}_{k} , \hsppp \alpha^{\star} ) \Big|
\nonumber \\
\Delta{\sf NPM}_{\sf c}(k) & \triangleq &
\Big| {\sf NPM}( \underline{\boldsymbol P}_{\sf c}, \hsppp \alpha^{\star} )
- {\sf NPM}( \underline{\boldsymbol P}_{k} , \hsppp \alpha^{\star} ) \Big|.
\nonumber
\end{eqnarray}
In Fig.~\ref{fig_majmetrics_select}(a), $\Delta{\sf SE}_{\sf r}(k)$ and
$\Delta{\sf NPM}_{\sf r}(k)$ are plotted with $\delta = 7$, $\alpha^{\star} = 2$,
and $K = 10, 15$ and $20$. From this study, we observe that the Shannon entropy
captures a change in a resilient group (large $k$ regime) better with a higher
$\Delta{\sf SE}$ value than $\Delta {\sf NPM}$. Similarly, in
Fig.~\ref{fig_majmetrics_select}(b), $\Delta{\sf SE}_{\sf c}(k)$ and
$\Delta{\sf NPM}_{\sf c}(k)$ are plotted and 
the normalized power mean
captures a change in a coordinating group (small $k$ regime) better with a higher
$\Delta{\sf NPM}$ value than $\Delta {\sf SE}$.

We then consider a second case study where $k$ attacks are equally spread over $k$
days in the $\Delta_n$ time-period corresponding to an attack frequency vector
\begin{eqnarray}
\underline{\boldsymbol P}_k =
\left[
\underbrace{
\frac{1}{k} ,  \hspp  \cdots, \hspp \frac{1}{k} }_{k \hspp {\sf times}}
, \hspp \underbrace{0 , \hspp \cdots, \hspp 0}_{ \delta - k \hspp {\sf times}}
\right]  .
\nonumber
\end{eqnarray}
Clearly, ${\sf SE}( \underline{\boldsymbol P}_k ) = \log(k)$ and
${\sf NPM}( \underline{\boldsymbol P}_k , \alpha^{\star} ) =
\frac{1}{k^2}$ for any $\alpha^{\star}$. Further, it can be seen that
\begin{eqnarray}
\left| \frac{\partial}{ \partial k} {\sf SE} ( \underline{\boldsymbol P}_k )
 \right| = \frac{1}{k} & < & \frac{2}{k^3} =
\left| \frac{\partial}{ \partial k}
{\sf NPM} ( \underline{\boldsymbol P}_k, \alpha^{\star} ) \right|
\hspp {\sf for} \hspp {\sf small} \hspp k
\nonumber \\
\left| \frac{\partial}{ \partial k} {\sf SE} ( \underline{\boldsymbol P}_k ) \right|
= \frac{1}{k} & > & \frac{2}{k^3} =
\left| \frac{\partial}{ \partial k} {\sf NPM} ( \underline{\boldsymbol P}_k, \alpha^{\star} )
\right| \hspp {\sf for} \hspp {\sf large} \hspp k,
\nonumber
\end{eqnarray}
suggesting the relevance of Shannon entropy for capturing changes in resilience
(large $k$ regime) and normalized power mean 
for capturing changes in coordination (small $k$ regime). 

\subsection{Application to Spurt/Downfall Detection}
\label{subsec_4d}
We now apply the framework developed in Sec.~\ref{subsec_4a} to Sec.~\ref{subsec_4c} to
track changes in resilience and coordination in the group. For this, let
${\underline{\boldsymbol P}}_n =
\left[  {\boldsymbol P}_n(1), \hspp \cdots, \hspp {\boldsymbol P}_n(\delta) \right]$
where
\begin{eqnarray}
{\boldsymbol P}_n(i) 
= \left\{ \begin{array}{cl}
\frac{ {\boldsymbol M}_{ (n-1)\delta + i} }
{ \sum_{j \in \Delta_n} {\boldsymbol M}_j }
& {\sf if} \hspp \sum_{j \in \Delta_n} {\boldsymbol M}_j > 0,
\\
0 & {\sf otherwise}
\end{array}
\right.
\nonumber
\end{eqnarray}
denote the attack frequency vector over the time-window $\Delta_n$. In the
non-trivial setting of at least one attack over $\Delta_n$, the
Shannon entropy and normalized power mean reduce to
\begin{eqnarray}
{\sf SE} ( {\underline {\boldsymbol P}}_n ) & = &
\log \left( \sum_{i \in \Delta_n} {\boldsymbol M}_i \right) -
\frac{ \sum_{i \in \Delta_n} {\boldsymbol M}_i \log( {\boldsymbol M}_i) }
{ \sum_{i \in \Delta_n} {\boldsymbol M}_i }
\nonumber \\
{\sf NPM}( \underline {\boldsymbol P}_n, \hsppp \alpha^{\star})
& = &
\frac{ \left( \sum_{i \in \Delta_n } \left( {\boldsymbol M}_i
\right)^{\alpha^{\star} } \right)^{1/\alpha^{\star} } }
{ \left( \sum_{i \in \Delta_n} {\boldsymbol M}_i  \right) \cdot
\left( \sum_{i \in \Delta_n}  \indic \left( {\boldsymbol M}_i > 0 \right)
\right)^{1 + 1/\alpha^{\star} } },
\nonumber
\end{eqnarray}
respectively. In the trivial setting of no attacks over $\Delta_n$, we set
${\underline {\boldsymbol P}}_n = [0, \cdots, 0]$ and
${\sf SE} ( {\underline {\boldsymbol P}}_n ) = 0 =
{\sf NPM}( \underline {\boldsymbol P}_n, \hsppp \alpha^{\star})$. With these
metrics, resilience and coordination in the group over the
$n$th time-window are declared based on the satisfaction of the following conditions:
\begin{eqnarray}
{\sf Resilient} & \Longleftrightarrow &
{\sf SE} ( {\underline {\boldsymbol P}}_n ) > \underline{\sf SE}
\hspp \hsppp {\sf and} \hspp \hsppp
X_n > \widetilde{ \eta}_{\sf X}
\label{Res_class_maj_theory}
\\
{\sf Coordinating} & \Longleftrightarrow &
{\sf NPM}( \underline {\boldsymbol P}_n, \hsppp \alpha^{\star}) > \underline{\sf NPM}
\hspp \hsppp {\sf and} \hspp \hsppp
Y_n > \widetilde{ \eta}_{\sf Y}
\label{Coord_class_maj_theory}
\end{eqnarray}
corresponding to appropriate choices of $\widetilde{\eta}_{\sf X}$,
$\widetilde{\eta}_{\sf Y}$, $\underline{\sf SE}$ and $\underline{\sf NPM}$.

\begin{figure}[tbh!]
\begin{center}
\begin{tabular}{cc}
\includegraphics[height=2.45in,width=2.7in] {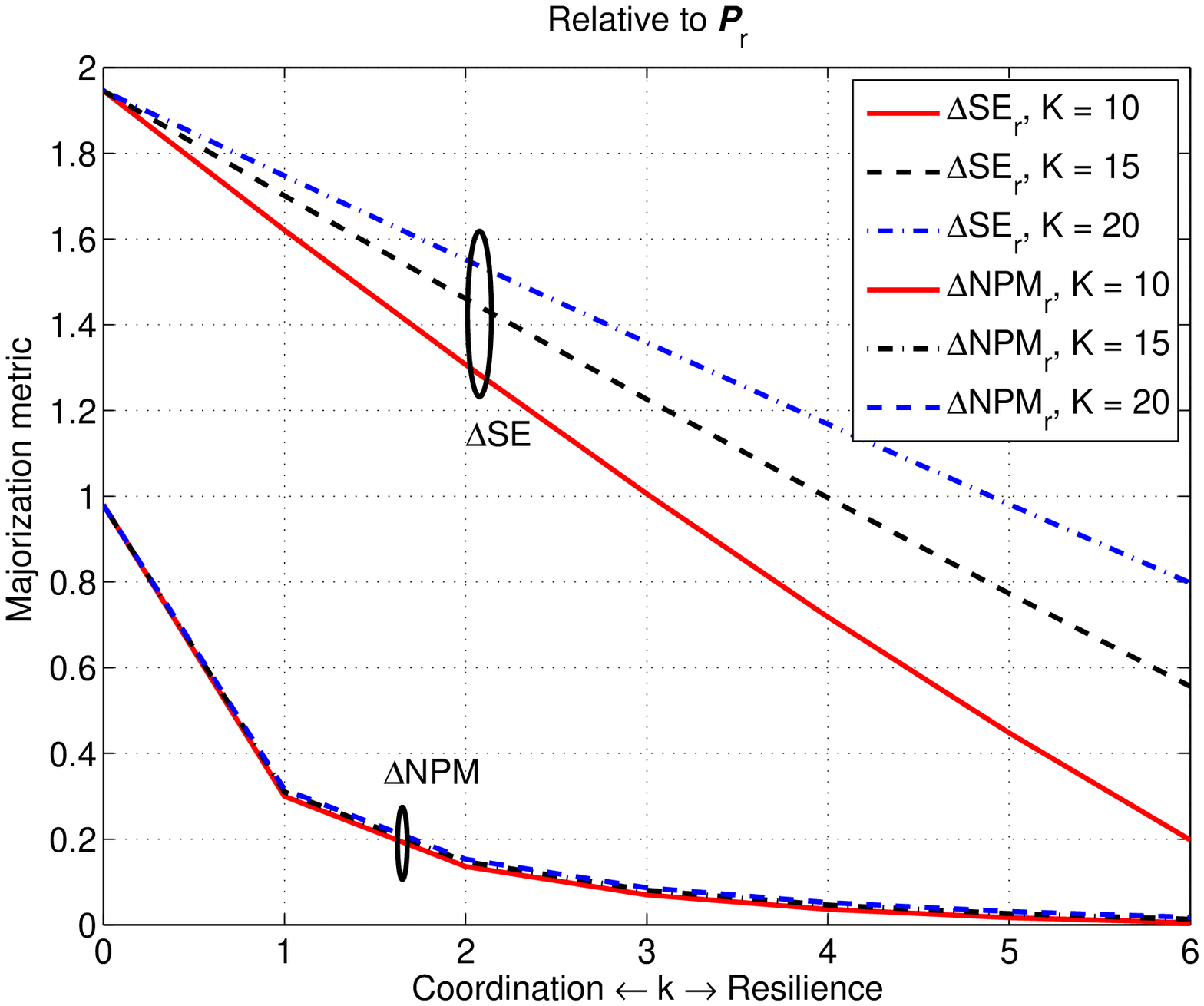}
&
\includegraphics[height=2.45in,width=2.7in] {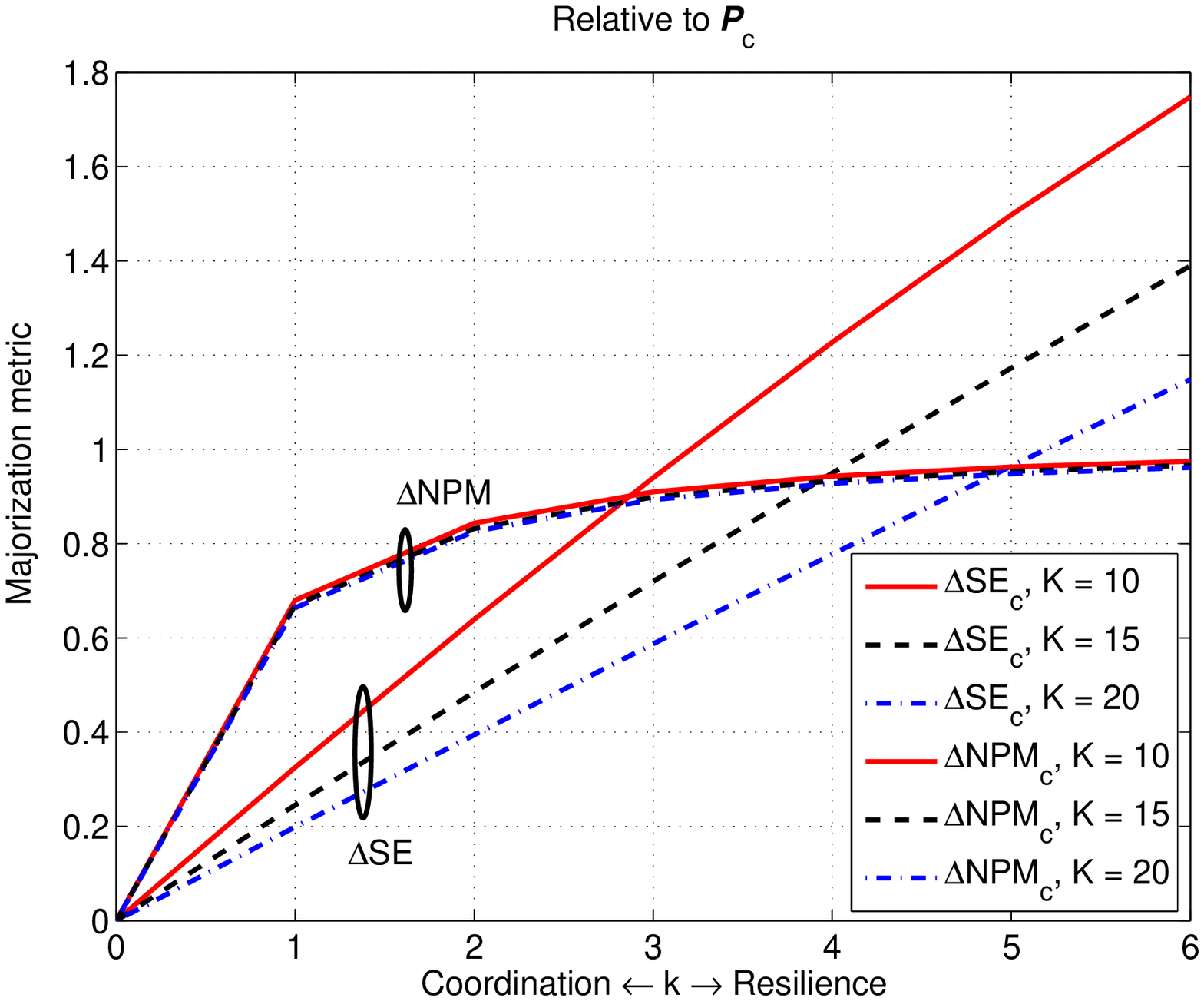}
\\
(a) & (b)
\end{tabular}
\caption{\label{fig_majmetrics_select}
Majorization metrics plotted relative to (a) $\underline{\boldsymbol P}_{\sf r}$
and (b) $\underline{\boldsymbol P}_{\sf c}$ as the attack frequency vector
$\underline{\boldsymbol P}_{k}$ transitions from more coordinating to more
resilient.}
\end{center}
\end{figure}

In addition to classification, we are interested in tracking the resilience and coordination
in the group over time. For this, we propose two tracking functions (${\sf Res}(n)$ and
${\sf Coord}(n)$, $n \geq 1$) that are updated as follows:
\begin{eqnarray}
{\sf Res}(n) & = & {\sf Res}(n-1) +
{\sf SE} ( {\underline {\boldsymbol P}}_{n} ) + X_{n}
- \frac{ \sum_{n' = 1}^{N_{\sf max}}
\Big( {\sf SE} ( {\underline {\boldsymbol P}}_{n'} ) + X_{n'} \Big) } {N_{\sf max}}
\nonumber \\
{\sf Coord}(n) & = & {\sf Coord}(n-1) +
{\sf NPM}( \underline {\boldsymbol P}_{n}, \hsppp \alpha^{\star}) + Y_{n}
- \frac{ \sum_{n' = 1}^{N_{\sf max}}
\Big( {\sf NPM}( \underline {\boldsymbol P}_{n'}, \hsppp \alpha^{\star}) + Y_{n'} \Big) }
{N_{\sf max}}
\nonumber
\end{eqnarray}
with ${\sf Res}(0) = 0 = {\sf Coord}(0)$. It can be checked that
${\sf Res} ( N_{\sf max}) = 0 = {\sf Coord}( N_{\sf max} )$ and thus these choices of
tracking functions allow comparison of resilience and coordination in the group relative to
long-term trends (captured by the time-index $N_{\sf max}$) by choosing $N_{\sf max}$
sufficiently large and appropriately.

\section{Numerical Studies}
\label{sec5}
We now illustrate the efficacy of the proposed theoretical framework with various
numerical studies. In the first numerical study, activity data from a two-state HMM
framework with a hurdle-based geometric model corresponding to ${\sf p}_0 = 0.4,
{\sf q}_0 = 0.6, \gamma_0 = 0.1, \mu_0 = 0.3, \gamma_1 = 0.2$ and $\mu_1 = 0.4$ is
generated over $N = 1500 \cdot \delta$ days where $\delta = 7$ (or approximately
$29$ years). The number of days of activity and number of attacks
over 
the first $500$ time-windows of this period are plotted in Fig.~\ref{fig_hbg_expt1}(a).
Based on correlations between $X_n$ and resilience (and $Y_n$ and coordination),
we declare\footnote{The choices of the parameter settings are motivated by a
comparative analysis of the trends of many terrorist groups. The scope of this
discussion is left for a separate work; see~\citep{vasanth_compstudy2015}.} that the
group is resilient or coordinating provided the following conditions are satisfied:
\begin{eqnarray}
{\sf Resilient} & \Longleftrightarrow &
X_n > { \eta}_{\sf X} = 3 \hspp {\sf and} \hspp
{\boldsymbol S}_i \Big|_{i \in \Delta_n} = 1
\nonumber \\
{\sf Coordinating} & \Longleftrightarrow &
Y_n > { \eta}_{\sf Y} = 6 \hspp {\sf and} \hspp
{\boldsymbol S}_i \Big|_{i \in \Delta_n} = 1
\nonumber \\
{\sf Both} \hspp
{\sf resilient} \hspp {\sf and} \hspp {\sf coordinating}
& \Longleftrightarrow &
X_n > { \eta}_{\sf X} = 3, \hspp
Y_n > { \eta}_{\sf Y} = 6 \hspp {\sf and} \hspp
{\boldsymbol S}_i \Big|_{i \in \Delta_n} = 1.
\nonumber
\end{eqnarray}
The above classification leads to $24$, $26$ and $10$ 
time-windows over which the activity data is resilient, coordinating, and both resilient and
coordinating, 
respectively.
This classification serves as ``the ground truth'' against which we compare the
performance of different algorithms subsequently. The nature of the states (resilient,
coordinating, or both, {\em Active} or {\em Inactive})
over the first $500$ time-windows are also plotted in Fig.~\ref{fig_hbg_expt1}(a).

With $\{ {\boldsymbol M}_i \}$ as observations, model parameters are learned with the
Baum-Welch algorithm (see parameter estimates in Table~\ref{table2_HMMclassifications})
and with these estimates, states (over each day) are classified as {\em Active} or
{\em Inactive} using the Viterbi algorithm. Time-windows are then classified as resilient
or coordinating with the following classification parameter settings\footnote{Optimizing
the choice of the classification parameter settings is a task of importance that is left
for a follow-up work. The main purpose of this work is to illustrate the utility of the
proposed ideas rather than to fine-tune the algorithms extensively over the parameter space.}
in~Supplementary A: $\widehat{\eta}_{\sf X} = 3$, $\widehat{\eta}_{\sf Y} = 5$
and $\widehat{\eta} = 3$. Similarly, $\{ X_n \}$, $\{ Y_n \}$ and $\{ (X_n, Y_n) \}$ are
used as observations for parameter learning (see parameter estimates in Table~\ref{table2_HMMclassifications})
and the resultant parameter estimates are fed into the Viterbi algorithm for classifying
time-windows as resilient, coordinating, or both resilient and coordinating, respectively.
On the other hand, the majorization theory-based approach classifies time-windows as
resilient and/or coordinating with parameter settings in~(\ref{Res_class_maj_theory})-(\ref{Coord_class_maj_theory})
for classification chosen as $\widetilde{\eta}_{\sf X} = 3$, $\widetilde{\eta}_{\sf Y} = 6$,
$\underline{\sf SE} = 1$ and $\underline{\sf NPM} = 0.0625$.

To provide metrics on the performance of the different state classification algorithms, we
define a {\em missed detection} event as a time-window in the resilient/coordinating state
(as per ``the ground truth'') that is not declared to be so by the state classification
algorithm. Similarly, a {\em false alarm} event corresponds to a time-window that is declared
as resilient/coordinating when it is not one (as per ``the ground truth''). The probability
of missed detection (${\sf P}_{\sf MD}$) and probability of false alarm (${\sf P}_{\sf FA}$)
are defined as the fraction of true events not declared to be so by the classification
algorithm and fraction of classified events that are not true events, respectively.

Table~\ref{table2_HMMclassifications} provides a summary statistics of missed detection
and false alarm events with the different state classification algorithms as well as the
parameter estimates. From Table~\ref{table2_HMMclassifications}, we note that the
learned model parameters are similar with different sets of observations. Learning with
$\{ {\boldsymbol M}_i \}$ results in good resilience decisions with no miss of true events
and a low false alarm probability. However, $\{ {\boldsymbol M}_i \}$ leads to poor
coordination decisions with a number of misses and false alarms. On the other hand,
learning with $\{ X_n \}$ or $\{ Y_n \}$ or $\{ (X_n, Y_n) \}$ results in no misses of a
true resilience/coordination event, but these approaches lead to a considerably large number
of false alarms. Further, all these decisions come at the cost of model learning latencies.
On the
other hand, the majorization theory-based approach leads to a low miss and false alarm
probability by tuning to the appropriate signature for monitoring. The corresponding state
classifications with these approaches for resilience and coordination over the first $500$
time-windows are presented in Figs.~\ref{fig_hbg_expt1}(b)-(c), respectively.

In addition, Fig.~\ref{fig_hbg_expt1}(d) plots the resilience and coordination tracking
functions of the group over the first $500$ time-windows with $N_{\sf max} = 1500$. From
this plot, while resilience and
coordination follow similar macroscopic patterns, we can also obtain general indicators
as to whether resilience increases first (or coordination increases first). Notable
indicators of the group's behavior include: resilience in the group increases first while
coordination remains stable (around $20$-$30$ time-windows), coordination increases while
resilience diminishes (around $60$-$70$ time-windows), both features diminish ($90$-$130$
time-windows) followed by a spurt and further diminishing ($140$-$220$ time-windows), an increase in
both resilience and coordination ($220$-$250$ time-windows) followed by general diminishing
($250$-$300$ time-windows) and a general spurt ($300$-$450$ time-windows), and a spurt in
coordination at around the $450$th time-window followed by a spurt in resilience. Thus,
the tracking functions provide a close glimpse in terms of the group's {\em Capabilities}
over time. While similar classifications and tracking have been proposed
by~\citet{bakker_dark_network} and~\citet{rand}, our work differs from these prior works
by developing a theoretical (and automated) framework to classify and track resilience and
coordination in the group instead of depending on subject-matter experts for classification.


\begin{figure}[htb!]
\begin{center}
\begin{tabular}{cc}
%
\includegraphics[height=2.55in,width=2.9in] {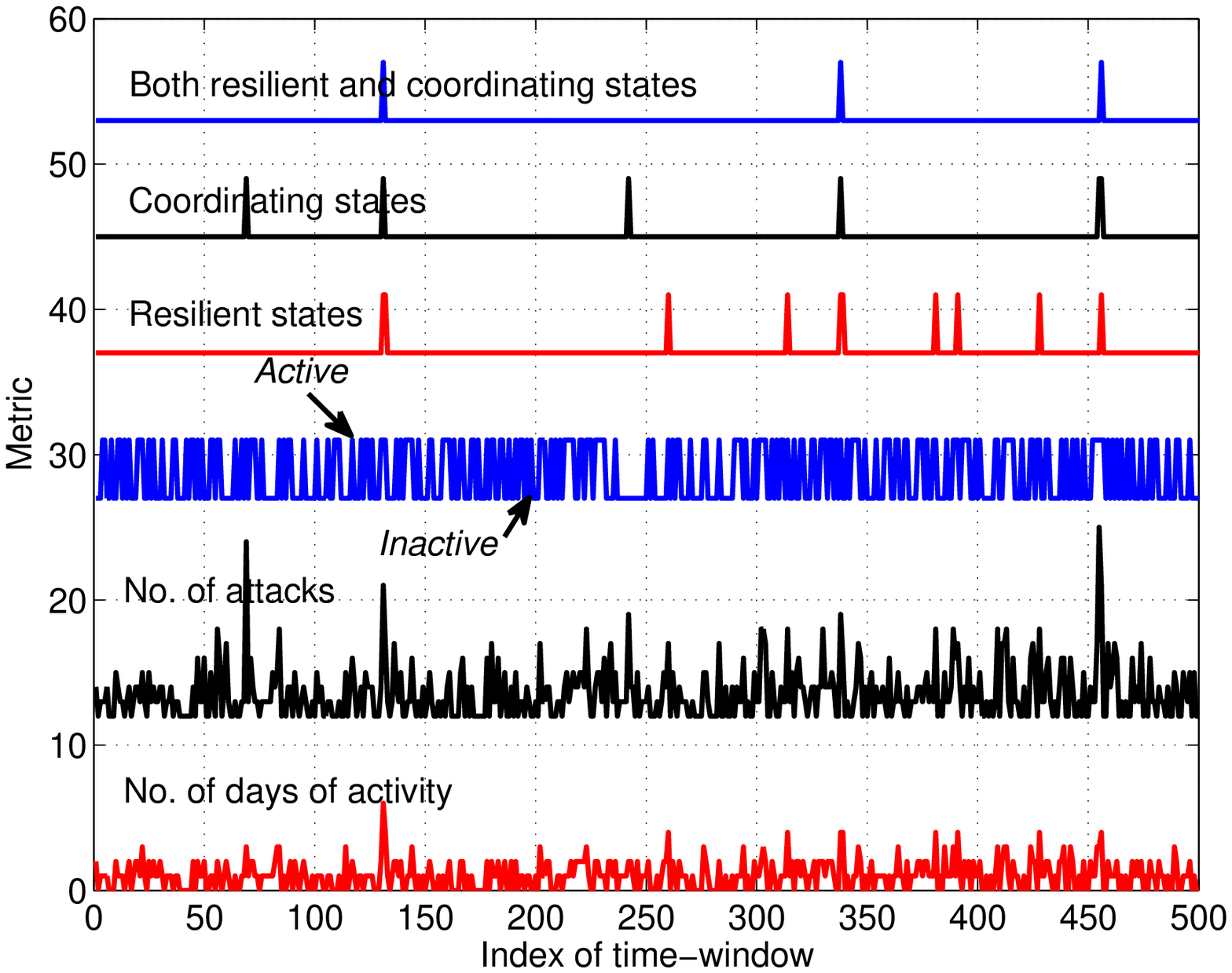}
&
\includegraphics[height=2.45in,width=2.7in] {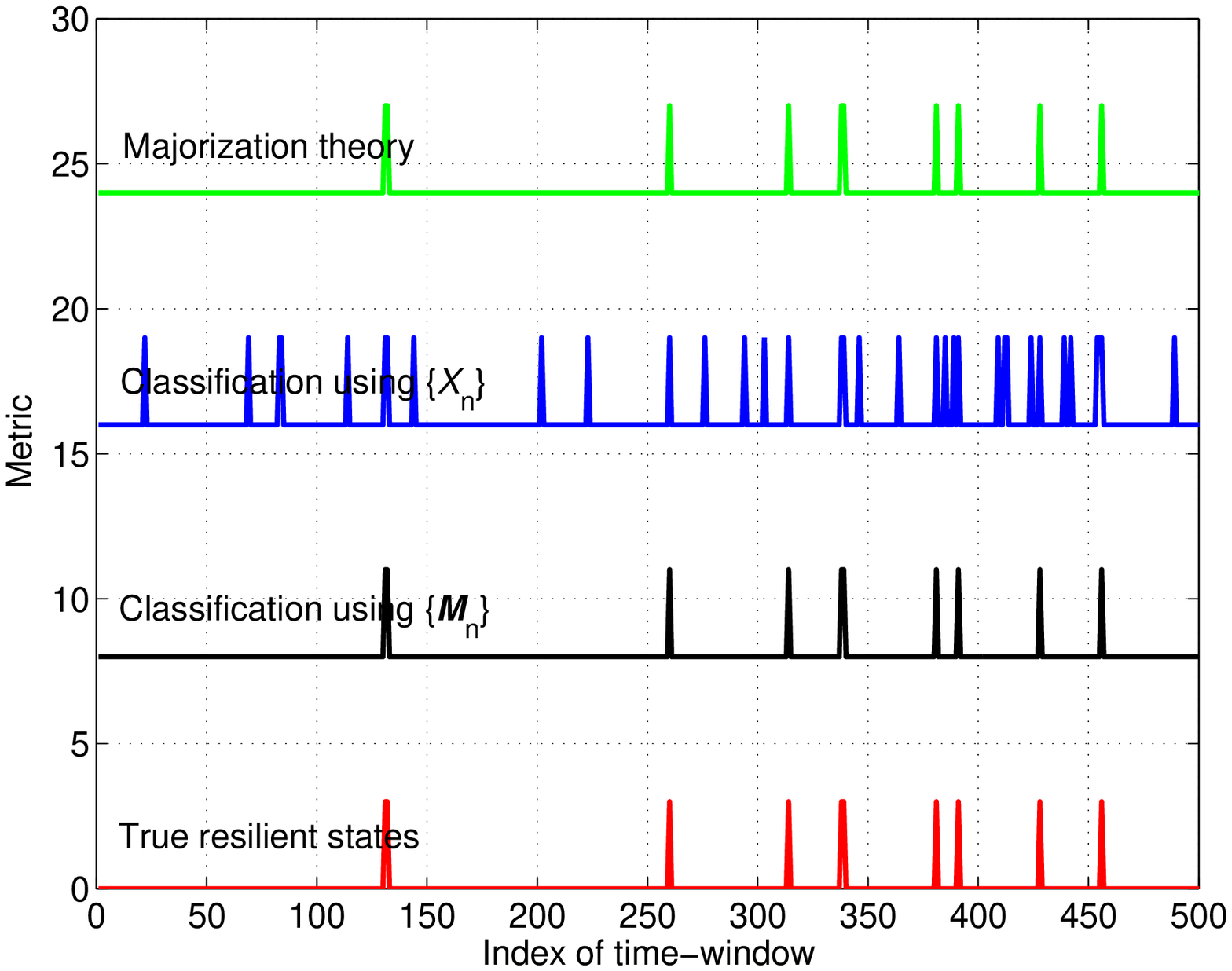}
\\
(a) & (b)
\\
\includegraphics[height=2.45in,width=2.7in] {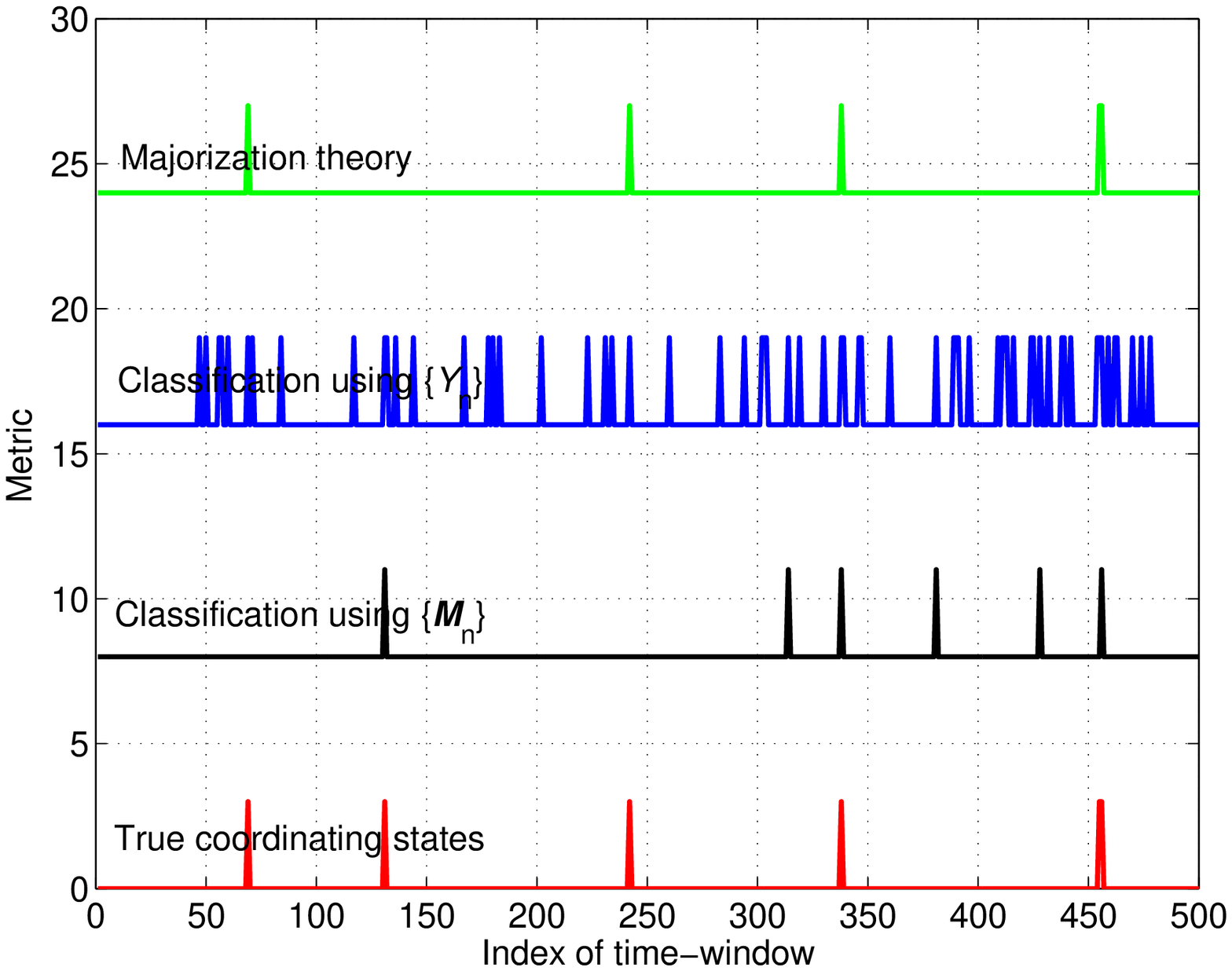}
&
\includegraphics[height=2.45in,width=2.7in] {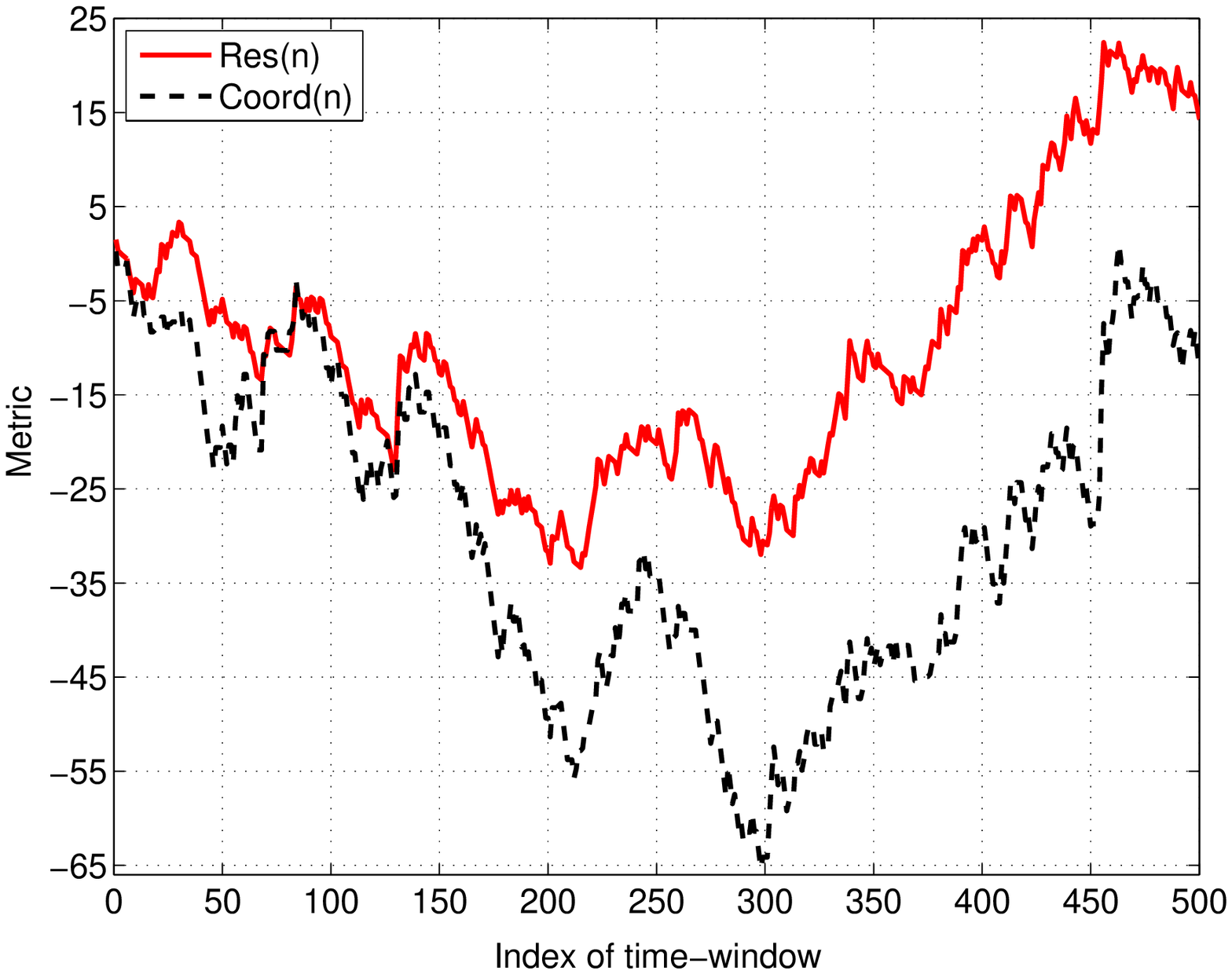}
\\
(c) & (d)
\end{tabular}
\caption{\label{fig_hbg_expt1}
(a) Number of days of activity, number of attacks, true {\em Active}, resilient
and coordinating states, (b) Resilient and (c) Coordinating state classification
with different algorithms, and (d) Resilience and coordination tracking functions
over the first $500$ time-windows.}
\end{center}
\end{figure}

\begin{table*}[htb!]
\caption{Probabilities of missed detection and false alarm with different state
classification algorithms for the data generated from a hurdle-based geometric model
and for FARC data.}
\label{table2_HMMclassifications}
\begin{tabular}{c|c|c|c|c}
\hline
\multicolumn{5}{c}{{\sf Data from hurdle-based geometric model}} \\
\hline
{\sf Setting} &{\sf Parameters} & 
\multicolumn{3}{c}{ {\sf Number of states classified and}
(${\sf P}_{\sf MD}, {\sf P}_{\sf FA}$) }
\\
\cline{3-5}
& & {\sf Resilient } & {\sf Coordinating} & {\sf Both}
\\ \hline 
{\sf True} &
$\gamma_0 = 0.1, \mu_0 = 0.3$ & 
$24$ & $26$ & $10$ \\
{\sf Observations} & $\gamma_1 = 0.2, \mu_1 = 0.4$ 
& & & \\
\hline
{\sf Learning} & $\widehat{\gamma}_0 = 0.1058, \widehat{\mu}_0 = 0.3042$
& $26$ & $15$ & $15$
\\
{\sf with} $\{ {\boldsymbol M}_i \}$ &
$\widehat{\gamma}_1 = 0.3861, \widehat{\mu}_1 = 0.4433$
& ($0, 0.0769$) & ($0.6154, 0.3333$) & ($0, 0.3333$) \\
\hline
{\sf Learning} & $\widehat{\gamma}_0 = 0.1295, \widehat{\mu}_0 = 0.4776$
& $101$ & $-$ & $-$
\\
{\sf with} $\{ X_n \}$ &
$\widehat{\gamma}_1 = 0.2719, \widehat{\mu}_1 = 0.4776$
& ($0, 0.7624$) &  & \\ \hline
{\sf Learning} & $\widehat{\gamma}_0 = 0.1515, \widehat{\mu}_0 = 0.2019$ & $-$ &  $192$ & $-$
\\
{\sf with} $\{ Y_n \}$ & $\widehat{\gamma}_1 = 0.2528, \widehat{\mu}_1 = 0.5287$
& & ($0, 0.8646$)  & \\ \hline
{\sf Learning} & $\widehat{\gamma}_0 = 0.1306, \widehat{\mu}_0 = 0.3286$ & $-$ & $-$ & $127$
\\
{\sf with} $\{ (X_n, Y_n) \}$ &
$\widehat{\gamma}_1 = 0.2553, \widehat{\mu}_1 = 0.4797$
& & & ($0, 0.9213$) \\ \hline
{\sf Majorization} & $-$ & $26$ & $27$ & $8$ \\
{\sf theory} &  & ($0, 0.0769$) & ($0.0769, 0.1111$) & ($0.2000, 0$) \\
\hline \hline
\multicolumn{5}{c}{{\sf FARC data}} \\
\hline
{\sf Setting} &{\sf Parameters} & 
\multicolumn{3}{c}{ {\sf Number of states classified and}
(${\sf P}_{\sf MD}, {\sf P}_{\sf FA}$) }
\\
\cline{3-5}
& & {\sf Resilient} & {\sf Coordinating} & {\sf Both}
\\ \hline
{\sf True} &
$-$ &
$37$ & $18$ & $14$ \\
{\sf Observations} &
& & & \\
\hline
{\sf Learning} & $\widehat{\gamma}_0 = 0.0953, \widehat{\mu}_0 = 0.0762$
& $27$ & $13$ & $13$
\\
{\sf with} $\{ {\boldsymbol M}_i \}$ &
$\widehat{\gamma}_1 = 0.3988, \widehat{\mu}_1 = 0.3087$
& ($0.2703, 0$) & ($0.3889, 0.1538$) & ($0.2143, 0.1538$) \\
\hline
{\sf Learning} & $\widehat{\gamma}_0 = 0.0933, \widehat{\mu}_0 = 0.3505$
& $125$ & $-$ & $-$
\\
{\sf with} $\{ X_n \}$ &
$\widehat{\gamma}_1 = 0.3921, \widehat{\mu}_1 = 0.3505$
& ($0, 0.7040$) &  & \\ \hline
{\sf Learning} & $\widehat{\gamma}_0 = 0.0951, \widehat{\mu}_0 = 0.1232$ & $-$ &  $73$ & $-$
\\
{\sf with} $\{ Y_n \}$ & $\widehat{\gamma}_1 = 0.2500, \widehat{\mu}_1 = 0.5745$
& & ($0, 0.7534$)  & \\ \hline
{\sf Learning} & $\widehat{\gamma}_0 = 0.0949, \widehat{\mu}_0 = 0.0752$ & $-$ & $-$ & $73$
\\
{\sf with} $\{ (X_n, Y_n) \}$ &
$\widehat{\gamma}_1 = 0.3958, \widehat{\mu}_1 = 0.3082$
& & & ($0, 0.8082$) \\ \hline
{\sf Majorization} & $-$ & $27$ & $15$ & $13$ \\
{\sf theory} &  & ($0.2703, 0$) & ($0.2778, 0.1333$) & ($0.2143, 0.1538$) \\
\hline
\end{tabular}
\end{table*}


\ignore{
In the second numerical study, activity data from a two-state HMM framework and a
hurdle-based geometric model with ${\sf p}_0 = 0.4$, ${\sf q}_0 = 0.6$ and different
choices of $\gamma_0$, $\mu_0$, $\gamma_1$ and $\mu_1$ are considered over $N = 1500
\cdot 7$ days. Specifically, $30$ different models are considered and resilient
and coordinating states are identified with ``the ground truth'' classification
parameter settings as in the previous numerical study. Further, HMM-based as well as
majorization theory-based classifications of resilience and/or coordination are
performed with the same parameter settings as before and $\{ {\sf P}_{\sf MD}, \hsppp
{\sf P}_{\sf FA} \}$ are computed. Figs.~\ref{fig_scatter}(a)-(c) provide a scatter plot of
${\sf P}_{\sf MD}$ vs.\ ${\sf P}_{\sf FA}$ with different algorithms for resilience,
coordination, and both resilience and coordination classifications, respectively.
Fig.~\ref{fig_scatter}(d) provides a scatter plot of the model parameters for the
settings considered.
From Fig.~\ref{fig_scatter}, we observe that inferencing with $\{ {\boldsymbol M}_i \}$
often results in a high ${\sf P}_{\sf MD}$ value for all the three types of classifications.
Similarly, inferencing with $\{ X_n \}$ for resilience (or $\{ Y_n \}$ for coordination
or $\{ (X_n, Y_n) \}$ for both features) often results in a high ${\sf P}_{\sf FA}$ value.
On the other hand, decisions with majorization theory lead to reasonably small values
for {\em both} ${\sf P}_{\sf MD}$ and ${\sf P}_{\sf FA}$ providing a good compromise between
the two types of HMM approaches in performance.
}

In the 
next numerical study, real terrorism data from~\citet{rdwti} on the FARC
terrorist group over a time-period of $3640$ days (or $520$ time-windows with $\delta = 7$
days) over the $1998$-$2007$ period is considered. States are classified as resilient,
coordinating, or both resilient and coordinating according to a certain manual
classification procedure that correlates with spurts in $X_n$, $Y_n$ and $(X_n, Y_n)$,
respectively. Model parameters are learned with $\{ {\boldsymbol M}_i \}$, $\{ X_n \}$,
$\{ Y_n \}$ and $\{ (X_n, Y_n) \}$ as observations
and states are classified with the following classification parameter settings:
$\widehat{\eta}_{\sf X} = 3$, $\widehat{\eta}_{\sf Y} = 5$ and $\widehat{\eta} = 3$. Similarly,
states are classified with the majorization theory-based approach using
$\widetilde{\eta}_{\sf X} = 3$, $\widetilde{\eta}_{\sf Y} = 5$, $\underline{\sf SE} = 1$ and
$\underline{\sf NPM} = 0.0204$. Analogous to Fig.~\ref{fig_hbg_expt1}, Figs.~\ref{fig_farc}(a)-(c) plot
the resilient and coordinating state classifications, whereas Fig.~\ref{fig_farc}(d) plots the
resilience and coordination tracking functions corresponding to $N_{\sf max} = 520$ for FARC.
Table~\ref{table2_HMMclassifications} also provides a summary statistics of missed detection
and false alarm events with the different state classification algorithms as well as the parameter
estimates. Clearly, we see that the majorization theory-based approach performs as well as
(or better than) the parametric approach based on $\{ {\boldsymbol M}_i \}$,
$\{ X_n \}$, $\{ Y_n \}$, or $\{ (X_n, Y_n) \}$. Further, the tracking functions
capture the two major spurts in resilience and coordination in the group, as well as the relative
growth/decay in these attributes over time; see~\citep[Supplementary B]{vasanth_aoas2013}
and~\citep{rand} for an explanation.

\ignore{
\begin{figure}[htb!]
\begin{center}
\begin{tabular}{cc}
%
\includegraphics[height=2.55in,width=2.8in] {figs/newfig_resilience_scatter.eps}
&
\includegraphics[height=2.55in,width=3.0in] {figs/newfig_coordination_scatter.eps}
\\
(a) & (b)
\\
\includegraphics[height=2.45in,width=2.7in] {figs/newfig_both_scatter.eps}
&
\includegraphics[height=2.45in,width=2.7in] {figs/newfig_parameters_scatter.eps}
\\
(c) & (d)
\end{tabular}
\caption{\label{fig_scatter}
Scatter plot of ${\sf P}_{\sf MD}$ vs.\ ${\sf P}_{\sf FA}$ for (a) resilience,
(b) coordination, and (c) both resilience and coordination classification with
different algorithms for $30$ different models. (d) Scatter plot of model
parameters.}
\end{center}
\end{figure}
}

\begin{figure}[htb!]
\begin{center}
\begin{tabular}{cc}
%
\includegraphics[height=2.55in,width=2.9in] {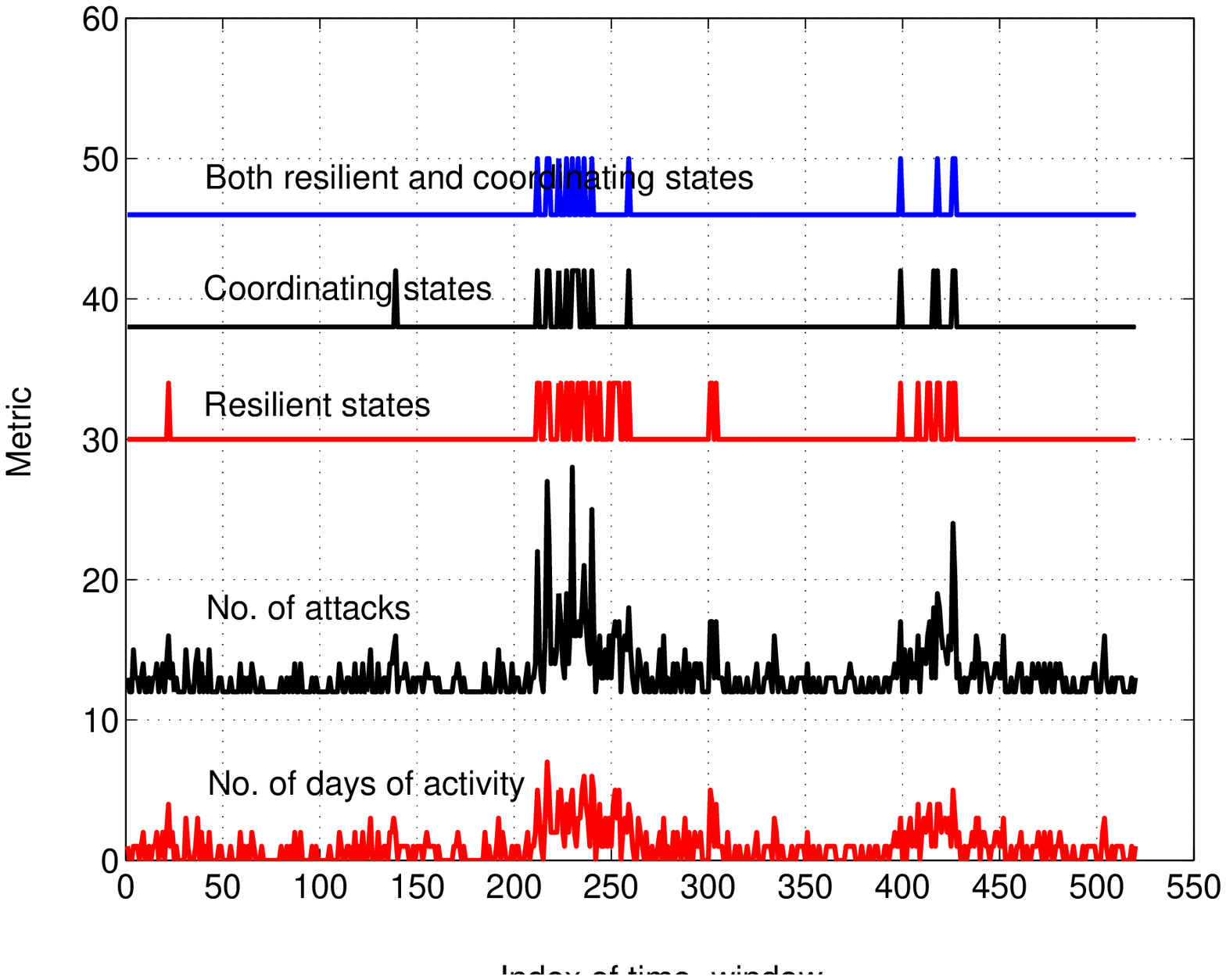}
&
\includegraphics[height=2.45in,width=2.7in] {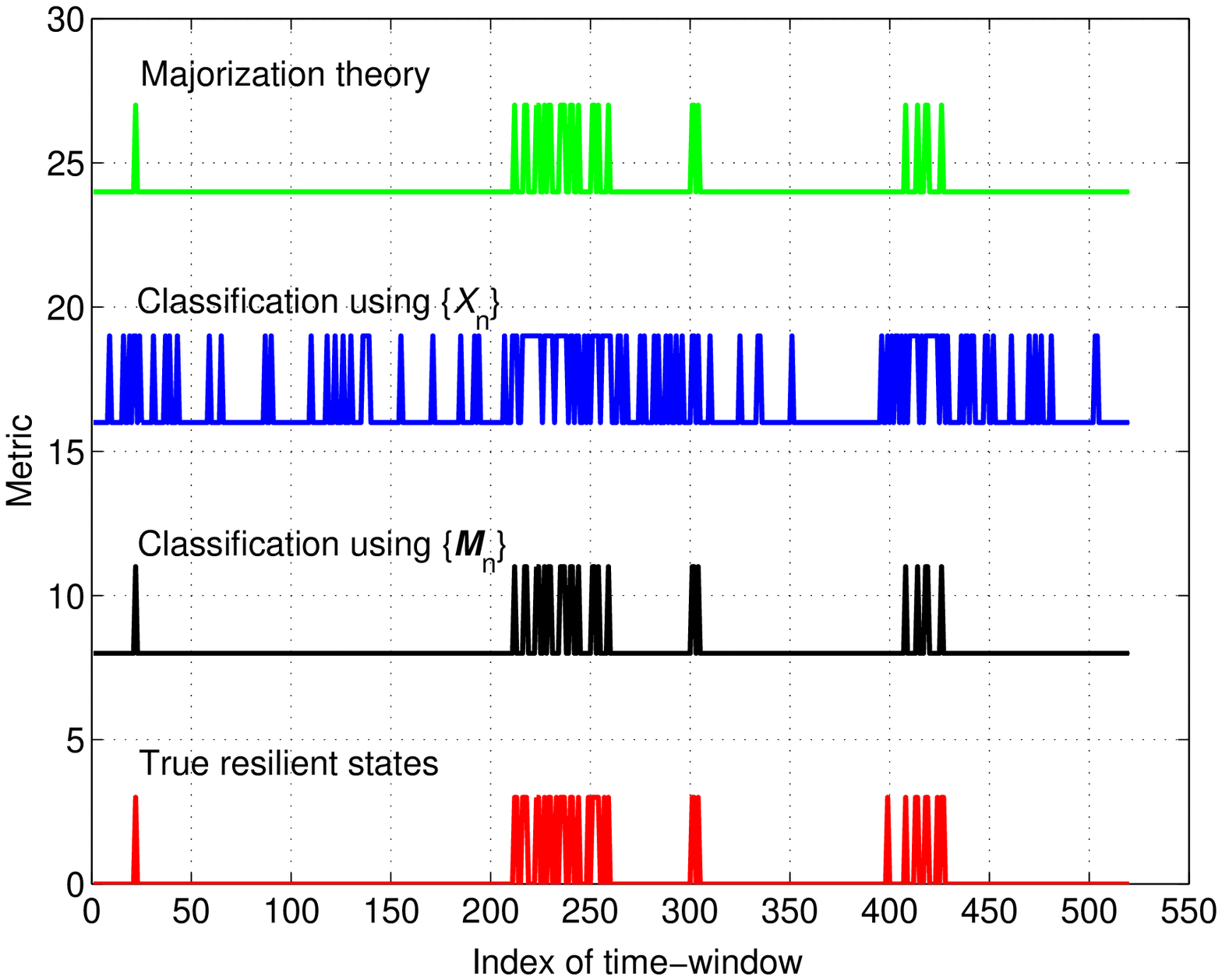}
\\
(a) & (b)
\\
\includegraphics[height=2.45in,width=2.7in] {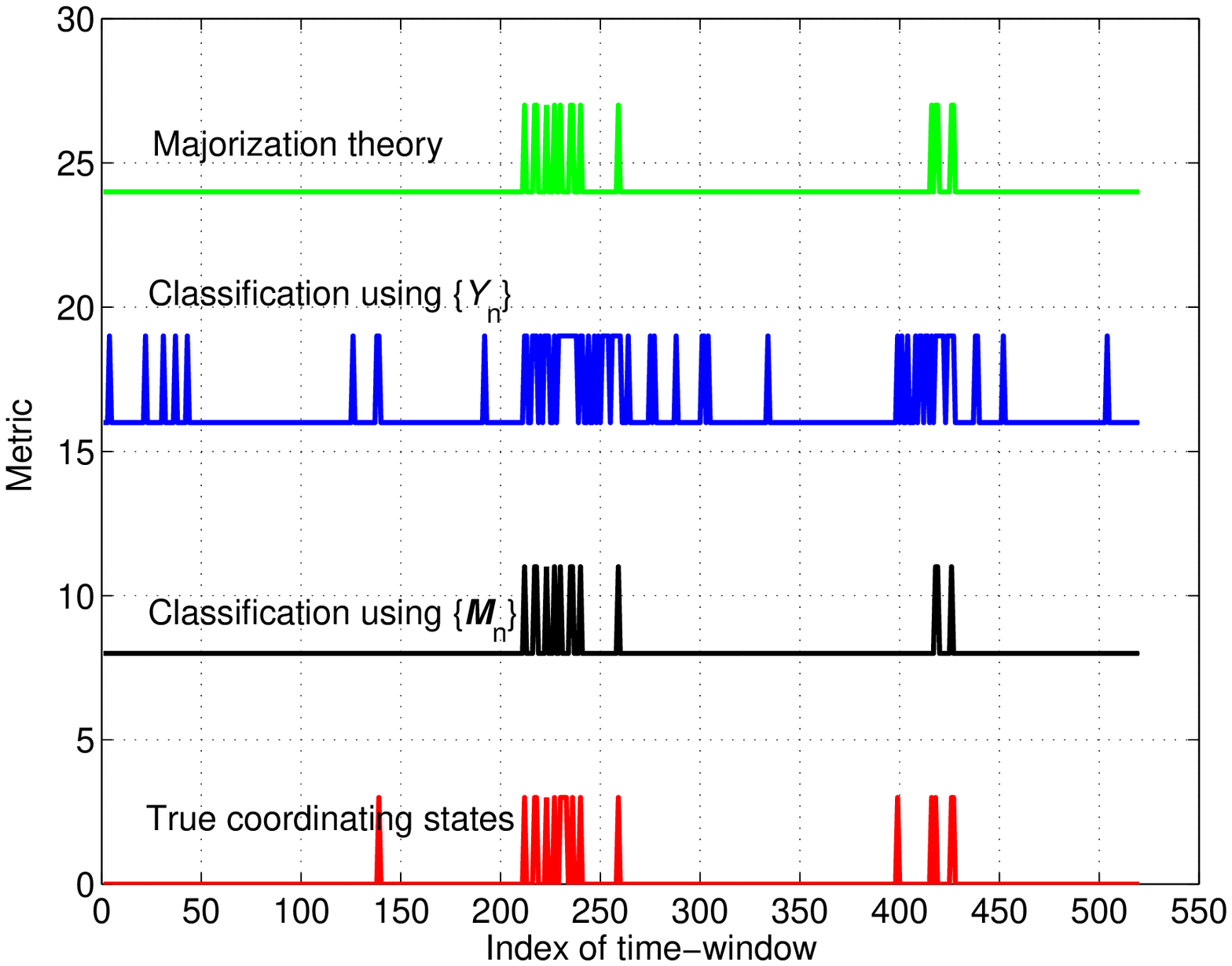}
&
\includegraphics[height=2.45in,width=2.7in] {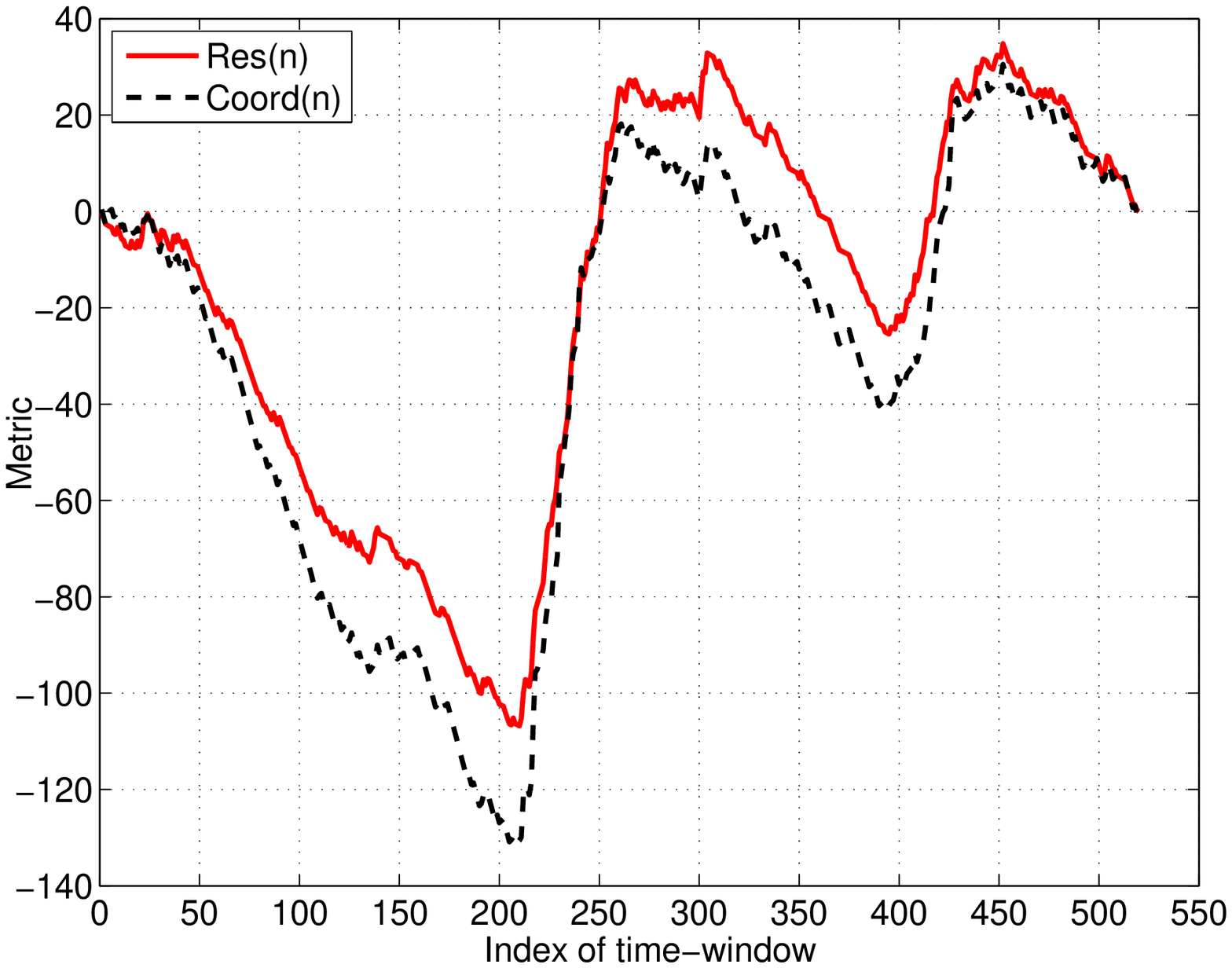}
\\
(c) & (d)
\end{tabular}
\caption{\label{fig_farc}
(a) Number of days of activity, number of attacks, and true states of FARC,
(b) Resilient and (c) Coordinating state classification with different algorithms, and
(d) Resilience and coordination tracking functions over the time-period of interest.}
\end{center}
\end{figure}

\section{Concluding Remarks}
In the light of recent interest in modeling and monitoring of terrorist activity,
this work focussed on detecting sudden spurts in the activity profile of terrorist
groups. Most work in this area are parametric in nature, which renders their
real-life application difficult. In particular, parametric approaches to spurt
detection often rely on past behavior for prediction, but terrorists' behavior
changes quickly enough to make some of this analysis useless. To overcome this
fundamental difficulty, we proposed a non-parametric approach based on majorization
theory to detect sudden and abrupt changes in the {\em Capabilities} of the group.
Leveraging the notion of catalytic majorization, we developed a simple approach to
increment/decrement an appropriate statistic that captures different facets of the
terrorist group (such as resilience and level of coordination) in this work. Future
work will consider the application of this approach to a broad swathe of terrorist
groups' activity profiles as well as applications in social network
settings.

\begin{supplement} [id=suppA]
\label{sec_suppa}
\stitle{Supplementary A: Update equations for the Baum-Welch algorithm}
\slink[doi]{}
\sdatatype{.pdf}
\sdescription{This section derives the update equations for model parameter
learning and mechanism association with different types of approaches within the HMM framework.}
\end{supplement}


We now develop update equations for the observation density parameters
($\widehat{\gamma}_j$ and $\widehat{\mu}_j$) when Baum-Welch algorithm~\citep{rabiner}
is applied to a training-set of $N$ observations ${\cal O} = \{ O_n, \hsppp n = 1,
\cdots , N \}$. Let the corresponding hidden states be given as ${\cal S} = \{ S_n,
\hsppp n = 1, \cdots, N \}$ with $S_0$ 
initialized according to an initial probability density $\{ \pi_j, \hsppp j = 0,1\}$.
The Baum auxiliary function with
current/initial estimate of HMM parameters ${\boldsymbol{\bar{\lambda}}}$ as a function
of the optimization variable $\boldsymbol\lambda$, denoted as
$Q(\boldsymbol\lambda, \hsppp {\boldsymbol{\bar{\lambda}}})$, is given as:
\begin{eqnarray}
Q(\boldsymbol\lambda, \hsppp {\boldsymbol{\bar{\lambda}}}) \triangleq
\sum_{ {\cal S}} \log \Big( {\sf P}( {\cal O}, \hsppp {\cal S} | \boldsymbol\lambda)
\Big) \cdot {\sf P}( {\cal O}, \hsppp {\cal S} | {\boldsymbol{\bar{\lambda}}} ). \nonumber
\end{eqnarray}
We proceed via the same approach elucidated by~\citet{bilmes} leading to
\begin{eqnarray}
\frac{
Q(\boldsymbol\lambda, \hsppp {\boldsymbol{\bar{\lambda}}})
}  { {\sf P}( {\cal O} | {\boldsymbol{\bar{\lambda}}} ) }
& = &
\frac{ \sum _{n = 1}^N  \sum _ {S_0 }
\log \Big( {\sf P} \big( S_0 | \boldsymbol\lambda \big) \Big)
\cdot {\sf P} ( {\cal O}, \hsppp S_0 | {\boldsymbol{\bar{\lambda}}} )
} { {\sf P}( {\cal O} | {\boldsymbol{\bar{\lambda}}} ) } +
\nonumber \\
& &
\frac{ \sum 
_{n = 1}^N  \sum 
_ {S_n, \hsppp S_{n-1} }
\log \Big( {\sf P} \big( O_n, \hsppp S_n | S_{n-1}, \hsppp \boldsymbol\lambda \big)
\Big) \cdot {\sf P} ( {\cal O}, \hsppp S_n, \hsppp S_{n-1} | {\boldsymbol{\bar{\lambda}}} )
}{ {\sf P}( {\cal O} | {\boldsymbol{\bar{\lambda}}} ) }.
\nonumber
\end{eqnarray}
The component of the auxiliary function corresponding to the initial probability
parameters can be written as:
\begin{eqnarray}
\frac{
Q(\boldsymbol\lambda, \hsppp {\boldsymbol{\bar{\lambda}}})
\Big|_{ {\sf Init.} \hspp {\sf probability}}
}  { {\sf P} ( {\cal O} | {\boldsymbol{\bar{\lambda}}} ) }
& = &
\sum_{j = 0}^1 \log \big( \pi_j \big) \cdot
\underbrace{ {\sf P}(S_0 = j | {\cal O}, \hsppp {\boldsymbol{\bar{\lambda}}} )
}_{ \triangleq \gamma_0(j) },
\nonumber
\end{eqnarray}
where the iterative update for $\gamma_0(j)$ follows from the forward and backward
algorithms (see~\citet[Sec.\ IIIA and B]{rabiner}). It can be easily seen that
\begin{eqnarray}
\widehat{\pi}_j = \frac{ \gamma_0(j) }{ \sum_{k = 0}^1 \gamma_0(k)}.
\nonumber
\end{eqnarray}
Similarly, the component of the auxiliary function corresponding to the transition
probability parameters can be written as:
\begin{eqnarray}
& & \frac{
Q(\boldsymbol\lambda, \hsppp {\boldsymbol{\bar{\lambda}}})
\Big|_{ {\sf Trans.} \hspp {\sf probability}}
}  { {\sf P} ( {\cal O} | {\boldsymbol{\bar{\lambda}}} ) }
\nonumber \\
& & {\hspace{0.3in}}
=
\sum_{n = 1}^N \sum_{i=0}^1 \sum_{j = 0}^1
\log \Big( {\sf P}(S_n = j|S_{n-1} = i, \boldsymbol\lambda) \Big) \cdot
\underbrace{ {\sf P}(S_n = j, S_{n-1} = i | {\cal O}, \hsppp {\boldsymbol{\bar{\lambda}}} )
}_{ \triangleq \zeta_{n-1}(i,j) } ,
\nonumber
\end{eqnarray}
where the iterative update for $\zeta_{n-1}(i,j)$ follows from~\citet[Sec.\ IIIC]{rabiner}. A
simple constrained optimization problem results in
\begin{eqnarray}
\widehat{\sf p}_0 & = & \frac{ \sum_{n = 1}^N \zeta_{n-1}(0,1) }
{ \sum_{n = 1}^N \sum_{j = 0}^1 \zeta_{n-1}(0,j) }, \nonumber \\
\widehat{\sf q}_0 & = & \frac{ \sum_{n = 1}^N \zeta_{n-1}(1,0) }
{ \sum_{n = 1}^N \sum_{j = 0}^1 \zeta_{n-1}(1,j) }.
\nonumber
\end{eqnarray}
On the other hand, the component of the auxiliary function corresponding to the
optimization of observation density parameters can be expressed as:
\begin{eqnarray}
\frac{
Q(\boldsymbol\lambda, \hsppp {\boldsymbol{\bar{\lambda}}})
\Big|_{ {\sf Obs.} \hspp {\sf density}}
}  { {\sf P} ( {\cal O} | {\boldsymbol{\bar{\lambda}}} ) }
& = &
\sum _{n = 1}^N  \sum_ {j = 0 }^1
\log \Big( {\sf P} \big( O_n | S_{n} = j, \hsppp \boldsymbol\lambda \big) \Big)
\cdot
\underbrace{ {\sf P}(S_n = j | {\cal O}, \hsppp {\boldsymbol{\bar{\lambda}}} )
}_{ \triangleq \gamma_n(j) },
\nonumber
\end{eqnarray}
where the iterative update for $\gamma_n(j)$ also follows from the forward and backward
algorithms (see~\citep[Sec.\ IIIA and B]{rabiner}). We now develop update equations for
$\gamma_j$ and $\mu_j$, specialized based on the observations in the HMM.

\section{$\{ {\mb M}_i \}$ as Observations}
With ${\cal O} =
\left\{ {\mb M}_i \right\}$, using the hurdle-based geometric model, it is straightforward to
check the following update equations for $\widehat{\gamma}_j$ and $\widehat{\mu}_j$:
\begin{eqnarray}
\widehat{\gamma}_j & = & \frac{ \sum_{n= 1}^N \indic \left( {\mb M}_n > 0 \right) \cdot
\gamma_n(j) } { \sum_{n = 1}^N \gamma_n(j)}, \nonumber \\
\widehat{\mu}_j & = & \frac{ \sum_{n = 1}^N ({\mb M}_n - 1) \cdot \indic \left({\mb M}_n > 0 \right)
\cdot \gamma_n(j) } { \sum_{n = 1}^N {\mb M}_n \cdot \indic \left({\mb M}_n > 0 \right) \cdot
\gamma_n(j) }.
\nonumber
\end{eqnarray}
By definition, it is clear that $0 \leq \{ \widehat{\gamma}_j, \hsppp \widehat{\mu}_j \}
\leq 1$.

\section{Associating State Changes with {\em Tactics} in Inferencing with $\{ {\mb M}_i \}$}
The proposed approach in Sec.~3.1 of the main paper requires 
a mapping of
the state estimates $\widehat{\Sb}_i \Big|_{ i \in \Delta_n}$ to appropriate resilience
and/or coordination metric(s). We now provide details on the specific choices of $f(\cdot)$
and $g(\cdot)$ in~(3.2) of the main paper to associate state changes with specific changes in {\em Tactics}.

Motivated by the discussion on resilience and coordination,
we propose the following mappings for inferencing corresponding to certain choices of
$\widehat{\eta}$, $\widehat{\eta}_{\sf X}$ and $\widehat{\eta}_{\sf Y}$:
\begin{eqnarray}
\label{eq_dec1}
\sum_{ i \in \Delta_n} \widehat{\Sb}_i > \widehat{\eta} \hspp {\sf and} \hspp
X_n > \widehat{\eta}_{\sf X} & \Longleftrightarrow &
{\sf Group} \hspp {\sf is} \hspp {\sf resilient} \hspp {\sf in} \hspp \Delta_n
\\
\label{eq_dec2}
\sum_{ i \in \Delta_n} \widehat{\Sb}_i > \widehat{\eta} \hspp {\sf and} \hspp
Y_n > \widehat{\eta}_{\sf Y} & \Longleftrightarrow &
{\sf Group} \hspp {\sf is} \hspp {\sf coordinating} \hspp {\sf in} \hspp \Delta_n
\\
\label{eq_dec21}
\sum_{ i \in \Delta_n} \widehat{\Sb}_i > \widehat{\eta}, \hspp
X_n > \widehat{\eta}_{\sf X} \hspp {\sf and} \hspp
Y_n > \widehat{\eta}_{\sf Y} & \Longleftrightarrow &
{\sf Group} \hspp {\sf is} \hspp {\sf resilient}
\hspp 
\& \hspp {\sf coordinating} \hspp {\sf in} \hspp \Delta_n
\\
\label{eq_dec3}
\sum_{ i \in \Delta_n} \widehat{\Sb}_i > \widehat{\eta}
& \Longleftrightarrow &
{\sf Group} \hspp {\sf is} \hspp {\emph Active} \hspp {\sf in} \hspp \Delta_n.
\end{eqnarray}

\section{$\left\{ (X_n, \hsppp Y_n) \right\}$ as Observations}
With the joint sequence
\begin{eqnarray}
{\cal O} = \left\{ (X_n, \hsppp Y_n), \hsppp n = 1, \cdots, K \right\}
\nonumber
\end{eqnarray}
as observations, using the density function
\begin{align}
& {\hspace{-0.2in}}
{\sf P} \left( X_n = k, \hsppp Y_n = r \Big|  \Sb_{ i} |_{ i \in \Delta_n } = j
\right)
\nonumber \\ & {\hspace{0.3in}}
=
{\delta \choose k} { r -1 \choose r - k} \cdot
(1 - \gamma_j)^{\delta  - k} (\gamma_j)^{k} \cdot (1 - \mu_j)^k
(\mu_j)^{r - k}, \hspp r \geq k,
\nonumber
\end{align}
we have
\begin{eqnarray}
\widehat{\gamma}_j & = & \frac { \sum _{n = 1}^K X_n \gamma_n(j)}
{ \delta \cdot \sum _{n = 1}^N \gamma_n(j)}, 
\label{eq_gammaj_XnYn} \\
\widehat{\mu}_j & = & \frac{ \sum_{n = 1}^K (Y_n - X_n) \gamma_n(j)}
{\sum_{n = 1}^N Y_n \gamma_n(j)} =
1 - \frac{ \sum_{n = 1}^K  X_n \gamma_n(j)} {\sum_{n = 1}^N Y_n \gamma_n(j)} .
\label{eq_muj_XnYn}
\end{eqnarray}
The expressions for $\widehat{\gamma}_j$ and $\widehat{\mu}_j$ have also been
derived by~\citet{vasanth_aoas2013}. Note that $X_n \in [0, \hsppp \delta]$ and
$X_n \leq Y_n$ which imply that $0 \leq  \{ \widehat{\gamma}_j , \hsppp
\widehat{\mu}_j \} \leq 1$.

\section{$\left\{ X_n \right\}$ as Observations}
With ${\cal O} = \left\{ X_n , \hsppp n = 1, \cdots, K \right\}$, using the binomial density
function, we have
\begin{eqnarray}
\widehat{\gamma}_j & = & \frac{ \sum_{n = 1}^K X_n \gamma_n(j) }{ \delta \cdot
\sum_{n = 1}^K \gamma_n(j)}. 
\label{eq_gammaj_Xn}
\end{eqnarray}
Note that the update equation~(\ref{eq_gammaj_Xn}) has the same structure as the
corresponding expression in~(\ref{eq_gammaj_XnYn}). However, the difference between the
two cases is that the update equation for $\gamma_n(j)$ depends on the choice of ${\cal O}$
and the associated density functions.

On the other hand, since $\{ X_n \}$ does not allow inferencing on $\mu_j$, we recall the
facts that
\begin{eqnarray}
{\sf E} [ {\mb M}_i | {\cal H}_j ] & = & \frac{ \gamma_j}{ 1 - \mu_j} \nonumber \\
{\sf Var} ({\mb M}_i | {\cal H}_j ) & = & \frac{ \gamma_j \cdot (1 + \mu_j - \gamma_j) }
{ (1  - \mu_j)^2}, \nonumber
\end{eqnarray}
which allows $\mu_j$ to be rewritten as
\begin{eqnarray}
\mu_j =
\frac{  {\sf Var} ( {\mb M}_i | {\cal H}_j ) + \Big( {\sf E} [{\mb M}_i | {\cal H}_j ] \Big)^2 -
{\sf E} [{\mb M}_i | {\cal H}_j ] }
{ {\sf Var} ({\mb M}_i | {\cal H}_j ) + \Big( {\sf E} [{\mb M}_i | {\cal H}_j ] \Big)^2 +
{\sf E} [{\mb M}_i | {\cal H}_j ] }. \nonumber
\end{eqnarray}
To allow for simple inferencing, we supplant $\widehat{\mu}_j$ (for both $j = 0,1$) with
the sample estimate of $\mu_j$ using all of $\{ {\mb M}_i \}$ (implicitly ignoring the two-state HMM
framework):
\begin{eqnarray}
\widehat{\mu}_j = \frac{ \sum_{i = 1}^N {\mb M}_i({\mb M}_i - 1) }
{ \sum_{i = 1}^N {\mb M}_i({\mb M}_i + 1) }. 
\label{eq_muj_Xn}
\end{eqnarray}
It can be seen that $\widehat{\mu}_j$ in~(\ref{eq_muj_Xn}) can be approximated as
\begin{eqnarray}
\widehat{\mu}_j & = & \frac{ \sum_{n = 1}^K \sum_{i \in \Delta_n} {\mb M}_i({\mb M}_i - 1) }
{ \sum_{n = 1}^K \sum_{i \in \Delta_n} {\mb M}_i ({\mb M}_i + 1) }
\nonumber \\
& \stackrel{(a)}{\approx} &
\frac{ \sum_{n = 1}^K \sum_{i \in \Delta_n} {\mb M}_i \cdot \sum_{i \in \Delta_n} \big( {\mb M}_i - 1 \big) }
{ \sum_{n = 1}^K \sum_{i \in \Delta_n} {\mb M}_i \cdot \sum_{i \in \Delta_n} \big( {\mb M}_i + 1 \big) }
\nonumber \\
& \stackrel{(b)}{\approx} &
\frac{ \sum_{n = 1}^K \sum_{i \in \Delta_n} {\mb M}_i \cdot
\sum_{i \in \Delta_n} \big( {\mb M}_i - \indic({\mb M}_i > 0) \big) }
{ \sum_{n = 1}^K \sum_{i \in \Delta_n} {\mb M}_i \cdot
\sum_{i \in \Delta_n} {\mb M}_i}
\nonumber \\
& = & \frac{ \sum_{n = 1}^K  
Y_n \cdot (Y_n - X_n) }
{ \sum_{n = 1}^K 
Y_n^2}
\nonumber \\
& = & 1 - \frac{ \sum_{n = 1}^K Y_n \cdot X_n}
{ \sum_{n = 1}^K Y_n^2}
\nonumber
\end{eqnarray}
where the approximation in (a) assumes that $\sum_k a_k b_k \approx \beta \cdot \hsppp
\sum_k a_k \cdot \hsppp \sum_k b_k$ for an appropriate choice of $\beta$, and the
approximation in (b) replaces $1$ with $\indic({\mb M}_i > 0)$ in the numerator and discards
this factor in the denominator. From the above series of expressions, it can be seen that
the expression for $\widehat{\mu}_j$ in~(\ref{eq_muj_Xn}) approximates~(\ref{eq_muj_XnYn})
with 
$Y_n$ capturing $\gamma_n(j)$ (in both states) up to a scaling factor. Also, note that
$\widehat{\mu}_j$ is not an iterative expression depending on the data. It is also important
to note that the price to pay for a lack of estimate of $\mu_j$ from $\{ X_n \}$ is to use
$\{ Y_n \}$ for it in a non-iterative sense.

\section{$\left\{ Y_n \right\}$ as Observations}
With
${\cal O} = \left\{ Y_n , \hsppp n = 1, \cdots, K \right\}$, we now obtain simplified
expressions for $\widehat{\gamma}_j$ and $\widehat{\mu}_j$ under the assumptions that
$\delta \gg 1$ and $\mu_j > \gamma_j > \frac{1}{\delta}$.

Recall that the density function of $Y_n$ is given as
\begin{eqnarray}
{\sf P} \left(Y_n = r \Big|  \Sb_{ i} |_{ i \in \Delta_n } = j \right) =
(1 - \gamma_j)^{\delta} \cdot (\mu_j)^{r }\cdot
\sum_{ k = 1}^{\min(r, \hsppp \delta)} {\delta \choose k} { r -1 \choose r - k} \cdot
{\sf A}^k
\nonumber
\end{eqnarray}
with ${\sf A} = \frac{ (1 - \mu_j) \hspp \gamma_j } { (1 - \gamma_j) \hspp \mu_j }$.
Using the moment generating function of a hypergeometric distribution\footnote{A
hypergeometric distribution with parameters $(N_1, \hsppp K_1, \hsppp n_1)$ captures
the number of successes in $n_1$ trials of an experiment from a population of size
$N_1$ with $K_1$ elements of one type and $N_1 - K_1$ of another type.} with parameters
$(N_1, \hsppp K_1, \hsppp n_1)$ where $N_1 = \delta + r, K_1 = \delta, n_1 = r$, we have
\begin{eqnarray}
\sum_{k = 0}^{ \min(r, \hsppp \delta) }
{\delta \choose k} { r \choose r - k} \cdot {\sf A}^k =
{}_2 F_1\left(-r, \hsppp -\delta \hsppp; \hsppp 1 \hsppp; \hsppp {\sf A} \right),
\label{eq_interm1}
\end{eqnarray}
with ${}_2 F_1\left(a, \hsppp b \hsppp; \hsppp c \hsppp; \hsppp z \right)$ denoting the
Gauss hypergeometric function~(see the definition of this function in~\citet[15.1.1, p.\ 556]{abramowitz}).
Note that the right-hand
side of~(\ref{eq_interm1}) is well-defined as a power series since $\mu_j > \gamma_j$ which
implies that ${\sf A} < 1$ (and hence within the radius of convergence). Similarly, with
$N_1 = r + \delta - 1, K_1 = \delta, n_1 = r - 1$, we have
\begin{eqnarray}
\sum_{k = 0}^{ \min(r-1, \hsppp \delta) }
{\delta \choose k} { r - 1 \choose r - 1 - k} \cdot {\sf A}^k =
{}_2 F_1\left(- (r-1), \hsppp -\delta \hsppp; \hsppp 1 \hsppp; \hsppp {\sf A} \right).
\label{eq_interm2}
\end{eqnarray}
Combining~(\ref{eq_interm1}) and~(\ref{eq_interm2}), we have
\begin{eqnarray}
\frac{ {\sf P} \left(Y_n = r \Big|  \Sb_{ i} |_{ i \in \Delta_n } = j \right)}
{ (1 - \gamma_j)^{\delta} \cdot (\mu_j)^{r } }
& = &
{}_2 F_1\left(-r, \hsppp -\delta \hsppp ; \hsppp 1 \hsppp; \hsppp {\sf A} \right)
- {}_2 F_1\left(- (r-1), \hsppp -\delta \hsppp; \hsppp 1 \hsppp; \hsppp {\sf A} \right)
\nonumber \\
& \stackrel{(c)}{=} & \delta \hsppp {\sf A} \cdot
{}_2 F_1\left(- (r-1), \hsppp - (\delta-1) \hsppp; \hsppp 2 \hsppp; \hsppp {\sf A} \right)
\nonumber
\end{eqnarray}
where (c) follows from a straightforward application of the definition of the hypergeometric
function that allows the following simplification:
\begin{eqnarray}
{}_2 F_1\left(a+1, \hsppp b \hsppp ; \hsppp
c \hsppp ; \hsppp z \right) - {}_2 F_1\left(a, \hsppp b \hsppp ; \hsppp
c \hsppp ; \hsppp z \right) = \frac{bz}{c} \cdot
{}_2 F_1\left(a+1, \hsppp b+1 \hsppp ; \hsppp c +1 \hsppp ; \hsppp z \right).
\nonumber
\end{eqnarray}

Plugging in the expression for the density function of $Y_n$ in the auxiliary function, we have
\begin{eqnarray}
\frac{
Q(\boldsymbol\lambda, \hsppp {\boldsymbol{\bar{\lambda}}})
\Big|_{ {\sf Obs.} \hspp {\sf density}}
}  { {\sf P}( {\cal O} | {\boldsymbol{\bar{\lambda}}} ) }
& = &
\sum _{n = 1}^K \sum_ {j = 0 }^1
\Big[ \delta \log(1 - \gamma_j) + Y_n \log( \mu_j) + \log(\delta) + \log({\sf A})
\nonumber \\
& & {\hspace{0.3in}} + \log \Big( {}_2 F_1\left( -(Y_n - 1), \hsppp - (\delta - 1) \hsppp ; \hsppp 2 \hsppp ;
\hsppp {\sf A} \right) \Big) \Big] \cdot \gamma_n(j).
\nonumber
\end{eqnarray}
Setting the derivative of the auxiliary function (with respect to $\gamma_j$) to zero, we have
\begin{eqnarray}
& & {\hspace{-0.3in}}
\frac{\delta \sum_{n = 1}^K \gamma_n(j)}{ 1 - \gamma_j}
\nonumber \\
& =  &
\frac{d {\sf A}}{d \gamma_j} \cdot
\left[ \frac{ \sum_{n = 1}^K \gamma_n(j) }{ {\sf A} } +
\sum_{n = 1}^K \frac{ \gamma_n(j) \cdot
\frac{d}{d{\sf A}} {}_2 F_1\left( -(Y_n - 1), \hsppp - (\delta - 1) \hsppp ; \hsppp 2 \hsppp ;
\hsppp {\sf A} \right)
}
{ {}_2 F_1\left( -(Y_n - 1), \hsppp - (\delta - 1) \hsppp ; \hsppp 2 \hsppp ; \hsppp {\sf A} \right) }
 \right]
\nonumber \\
&  \stackrel{(d)}{=} &
\frac{d {\sf A}}{d \gamma_j} \cdot \left[
\frac{ \sum_{n = 1}^K \gamma_n(j) }{ {\sf A} } +
\sum_{n = 1}^K \frac{ \gamma_n(j)
\cdot {}_2 F_1\left( -(Y_n - 2), \hsppp - (\delta - 2) \hsppp ; \hsppp 3 \hsppp ; \hsppp {\sf A} \right)
}
{ {}_2 F_1\left( -(Y_n - 1), \hsppp - (\delta - 1) \hsppp ; \hsppp 2 \hsppp ; \hsppp {\sf A} \right) }
 \right]
\nonumber \\
& \stackrel{(e)}{=} &
\frac{d {\sf A}}{d \gamma_j} \cdot \left[
\frac{ \sum_{n = 1}^K \gamma_n(j) }{ {\sf A} } +
\frac{1}{ 1 - {\sf A}} \cdot
\sum_{n = 1}^K \frac{ \gamma_n(j)
\cdot
{}_2 F_1\left( Y_n + 1, \hsppp \delta + 1 \hsppp ; \hsppp 3 \hsppp ; \hsppp {\sf A} \right)
}
{ {}_2 F_1\left( Y_n + 1, \hsppp \delta + 1 \hsppp ; \hsppp 2 \hsppp ; \hsppp {\sf A} \right) }
\right]
\label{eq_hyper1}
\end{eqnarray}
where (d) follows from the fact in~\citet[15.2.1, p.\ 557]{abramowitz} that
\begin{eqnarray}
\frac{d}{dz}{}_2 F_1\left( a, \hsppp b \hsppp ; \hsppp c \hsppp ; \hsppp z \right) =
\frac{ab}{c} \cdot
{}_2 F_1\left( a+1, \hsppp b+1 \hsppp ; \hsppp c+1 \hsppp ; \hsppp z \right), \hspp z < 1
\nonumber
\end{eqnarray}
and (e) from the fact in~\citet[15.3.3, p.\ 559]{abramowitz} that
\begin{eqnarray}
{}_2 F_1\left( a, \hsppp b \hsppp ; \hsppp c \hsppp ; \hsppp z \right)
= (1 - z)^{c-a-b} \cdot
{}_2 F_1\left( c-a, \hsppp c-b \hsppp ; \hsppp c \hsppp ; \hsppp z \right), \hspp z < 1 .
\nonumber
\end{eqnarray}
Similarly, setting the derivative of the auxiliary function (with respect to $\mu_j$)
to zero, we have
\begin{equation}
\begin{split}
\frac{ \sum \limits_{n = 1}^K Y_n \gamma_n(j) }{ \mu_j} =
-\frac{d {\sf A}}{d \mu_j} \cdot \left[
\frac{ \sum_{n = 1}^K \gamma_n(j) }{ {\sf A} } +
\frac{1}{ 1 - {\sf A}} \cdot
\sum_{n = 1}^K \frac{ \gamma_n(j)
\cdot
{}_2 F_1\left( Y_n + 1, \hsppp \delta + 1 \hsppp ; \hsppp 3 \hsppp ; \hsppp {\sf A} \right)
}
{ {}_2 F_1\left( Y_n + 1, \hsppp \delta + 1 \hsppp ; \hsppp 2 \hsppp ; \hsppp {\sf A} \right) }
\right]. 
\label{eq_hyper2}
\end{split}
\end{equation}
From the two derivative expressions in~(\ref{eq_hyper1}) and~(\ref{eq_hyper2}), we
clearly have
\begin{eqnarray}
{\mu}_j = 1 - \frac{ {\gamma}_j \cdot \delta \cdot \sum_{n = 1}^K \gamma_n(j) }
{ \sum_{n = 1}^K Y_n \gamma_n(j) }.
\label{eq_hyper3}
\end{eqnarray}

We now simplify the expressions for $\gamma_j$ and $\mu_j$ under the assumptions
that $\delta \gg 1$ and $\gamma_j > \frac{1}{\delta}$. For this\footnote{We use the
notation $f(x) \stackrel{x \gg 1} {\sim} g(x)$ to denote that $\lim_{x \rightarrow
\infty} \frac{f(x)}{g(x)} = 1.$ Further, we use the notation
$\lim_{x \rightarrow a} f(x) = O(g(x))$ if there exists $\epsilon$ and $M$
such that $|f(x)| \leq M |g(x)|$ for all $x$ such that $|x - a| < \epsilon$. If a
choice of $a$ is not specified, it is implicitly assumed to be $+\infty$.}, we use
the following fact from~\citet[2.3.2(13), p.\ 77]{magnus_oberhettinger}
\begin{eqnarray}
{}_2 F_1\left( a, \hsppp b \hsppp ; \hsppp c \hsppp ; \hsppp z \right)
& \stackrel
{ |b| \gg 1} {\sim } &
{}_1 F_1\left( a \hsppp ; \hsppp c \hsppp ; \hsppp bz \right)
\cdot \left[ 1 + O \left( |b|^{-1} \right) \right]
\nonumber \\
& \stackrel
{ |b| \gg 1} {\sim } & \frac{ \Gamma(c) e^{bz} (bz)^{a-c}}
{ \Gamma(a)} \cdot \left[
1 + \frac{1 - a}{bz} + \frac{ (1-a) (2-a) (c-a) (c-a+1)}{2  b^2z^2}
\right]
\nonumber
\end{eqnarray}
where ${}_1 F_1\left( \cdot ; \cdot ; \cdot \right)$ is the confluent hypergeometric
function of the first kind (see the definition in~\citet[13.1.2, p.\ 505]{abramowitz}) and the second step
follows from~\citet[Sec.\ 4, (9)-(11)]{macdonald_tech_rep}. Applying these facts
to~(\ref{eq_hyper1}), we have
\begin{eqnarray}
\frac{
{}_2 F_1\left( Y_n + 1, \hsppp \delta + 1 \hsppp ; \hsppp 3 \hsppp ; \hsppp {\sf A} \right)
}
{ {}_2 F_1\left( Y_n + 1, \hsppp \delta + 1 \hsppp ; \hsppp 2 \hsppp ; \hsppp {\sf A} \right) }
\stackrel{ \delta \gg 1}{\sim} \frac{2}{(\delta+1){\sf A}} \cdot
\left[ 1 +  O \left( \frac{1}{\delta} 
\right) \right] \nonumber
\end{eqnarray}
and thus
\begin{eqnarray}
\frac{\delta \cdot \sum_{n = 1}^K \gamma_n(j)}{ 1 - \gamma_j}
& \stackrel{ \delta \gg 1}{\sim} &
\frac{d {\sf A}}{d \gamma_j} \cdot \sum_{n = 1}^K \gamma_n(j)
\cdot \left[ \frac{ 1 }{ {\sf A} } + \frac{2}{ (1 - {\sf A}) {\sf A} (\delta+1)}
\right]
\nonumber \\
\Longrightarrow \mu_j & \stackrel{ \delta \gg 1}{\sim} &
\frac{ \gamma_j(\delta+1) ( \delta \gamma_j -1) }
{ (\delta+1) ( \delta \gamma_j -1) - 2 (1 - \gamma_j) }
\nonumber \\
& = & \frac{ \gamma_j }{ 1 - \frac{ 2 (1 - \gamma_j)}{ (\delta \gamma_j - 1)(\delta + 1)}},
\label{eq_hyper4}
\end{eqnarray}
where the condition that $\gamma_j > \frac{1}{\delta}$ ensures that $\mu_j > \gamma_j$.

Combining~(\ref{eq_hyper3}) with~(\ref{eq_hyper4}), it can be seen that $\gamma_j$ is
a solution to the quadratic equation:
\begin{eqnarray}
& & \gamma_j^2 \cdot \left[ (\delta^2 + \delta + 2) \delta \sum_{n = 1}^K \gamma_{n,j}
+ (\delta^2 + \delta) \sum_{n = 1}^K Y_n \gamma_{n,j} \right]
\nonumber \\
& &  \hsp -
\gamma_j \cdot \left[ \delta(\delta + 3)\sum_{n = 1}^K \gamma_{n,j}
+ (\delta^2 + 2 \delta + 3) \sum_{n = 1}^K Y_n \gamma_{n,j} \right] +
(\delta + 3) \sum_{n = 1}^K Y_n \gamma_{n,j} = 0.
\nonumber
\end{eqnarray}
The two solutions to the above quadratic equation in the $\delta \gg 1$ regime are:
\begin{eqnarray}
{\sf Solution \hspp 1} &: &
\widehat{\gamma}_j =
\frac{1}{\delta} \left(1 + \frac{2}{\delta} \right) + O \left( \frac{1}{\delta^3} \right),
\hspp \hspp
\nonumber \\
&  &
\widehat{\mu}_j = 1 - \frac{ \sum_{n = 1}^K \gamma_{n,j} }
{ \sum_{n = 1}^K Y_n \gamma_{n,j} } \cdot \left( 1 + \frac{2}{\delta} \right)
+ O \left( \frac{1}{\delta^2} \right) ,
\nonumber \\
{\sf Solution \hspp 2} &: &
\widehat{\gamma}_j = \frac{ \sum_{n = 1}^K Y_n \gamma_{n,j} }
{ \delta \sum_{n = 1}^K \gamma_{n,j} } \cdot
\left(1 -  \frac{ \sum_{n = 1}^K Y_n \gamma_{n,j} } { \delta \sum_{n = 1}^K \gamma_{n,j} } \right)
+ O \left( \frac{1}{\delta^3} \right) , \hspp \hspp
\nonumber \\
& & \widehat{\mu}_j = \frac{ \sum_{n = 1}^K Y_n \gamma_{n,j} }
{ \delta \sum_{n = 1}^K \gamma_{n,j} } + O \left( \frac{1}{\delta^2} \right) .
\nonumber
\end{eqnarray}
To ensure that the three conditions ($\widehat{\gamma}_j > \frac{1}{\delta}$,
$\widehat{\mu}_j > \widehat{\gamma}_j$ and $\widehat{\mu}_j < 1$) are met, we note
that ${\sf Solution \hspp 1 \hsppp}$ needs to satisfy the condition that
$\frac{ \sum_{n = 1}^K \gamma_{n,j} } { \sum_{n = 1}^K Y_n \gamma_{n,j} } <
1 - \frac{1}{\delta}$, whereas ${\sf Solution \hspp 2 \hsppp}$ needs to satisfy
the condition $\frac{1}{\delta} < \frac{ \sum_{n = 1}^K \gamma_{n,j} }
{ \sum_{n = 1}^K Y_n \gamma_{n,j} } < 1$. While either condition does not appear
to suffer from any technical difficulties in the $\delta \gg 1$ regime, it is clear
that ${\sf Solution \hspp 2 \hsppp}$ converges to a geometric model $(\gamma_j =
\mu_j)$ since
\begin{eqnarray}
\frac{ \widehat{\gamma}_j } { \widehat{\mu}_j } \stackrel{ \delta \gg 1} {\sim} 1.
\nonumber
\end{eqnarray}
This solution particularizes the hurdle-based geometric model and thus reduces the
general model to a special case. Thus, we use ${\sf Solution \hspp 1 \hsppp}$ for
the update equations with ${\cal O} = \{ Y_n \}$ as observations. Note that
the structure of ${\sf Solution \hspp 1 \hsppp}$ follows the same general structure
as~(\ref{eq_gammaj_XnYn})-(\ref{eq_muj_XnYn}) with $X_n = 1 + \frac{2}{\delta}$. However,
as before, the update equation for $\gamma_n(j)$ depends on the choice of ${\cal O}$
and the associated density functions.



\begin{supplement} [id=suppB]
\label{sec_suppb}
\stitle{Supplementary B: Background on majorization theory}
\slink[doi]{}
\sdatatype{.pdf}
\sdescription{This section provides a brief primer on majorization theory and
reverse majorization.}
\end{supplement}

\section{Preliminaries}
We refer the readers to the seminal book by~\citet{olkin} for a comprehensive
background on majorization theory. Here, we provide a brief review of the main
theoretical underpinning needed to develop this paper.

Let 
${\mathbb P}_{\delta}$ denote the space of probability vectors of length $\delta$
with $\underline{\boldsymbol P} = \left[ {\mb P}(1), \cdots, {\mb P}(\delta) \right] \in
{\mathbb P}_{\delta} \Longrightarrow {\mb P}(i) \geq 0$ for all $i = 1, \cdots, \delta$ and
$\sum_i {\mb P}(i) = 1$. Without loss in generality, we can assume that the entries of
$\underline {\boldsymbol P}$ are arranged in non-increasing order (${\mb P}(1) \geq \cdots
\geq {\mb P}(\delta)$).
\begin{defn}[Majorization]
Let $\{ \underline {\boldsymbol P}, \hsppp \underline {\boldsymbol Q} \}
\in {\mathbb P}_{\delta}$. We say that $\underline {\boldsymbol P}$ is {\em majorized}
by $\underline {\boldsymbol Q}$ and denote it as $\underline{\boldsymbol P}
\prec \underline{\boldsymbol Q}$ if
\begin{eqnarray}
\sum_{i=1}^k {\mb P}(i) \leq \sum_{i=1}^k {\mb Q}(i), \hspp 
\hspp k = 1, \cdots, \delta.
\label{equality}
\end{eqnarray}
Note that equality holds in~(\ref{equality}) for $k = \delta$ because
$\{ \underline {\boldsymbol P}, \hsppp \underline {\boldsymbol Q} \}
\in {\mathbb P}_{\delta}$, which implies that $\sum_i {\mb P}(i) = 1 = \sum_i {\mb Q}(i)$.
\flushright
\qedsymbol
\end{defn}
The majorization relationship captures the notion that $\underline {\boldsymbol P}$ is
more well-spread out than $\underline {\boldsymbol Q}$. It also vaguely captures the notion
that $\underline {\boldsymbol P}$ is more unambiguously random/bursty than $\underline {\boldsymbol Q}$.
We now provide many illustrative examples of majorization. In the first example, as $k$ decreases
from $\delta$ to $1$, we have a progressive majorization relationship:
\begin{eqnarray}
\Big[
\underbrace{ 1/\delta , \cdots, 1/\delta } _{ \delta \hsppp
\hsppp {\rm times} } \Big]
\prec \cdots \prec
\Big[ \underbrace{
1/k , \cdots, 1/k ,} _{ k \hsppp \hsppp {\rm times} }
\hsppp \underbrace{ 0, \cdots, 0 }_{ (\delta - k) \hsppp
\hsppp {\rm times} } \Big] \prec \cdots \prec
\Big[  
1, \hsppp
\underbrace{  0, \cdots, 0}_{ (\delta - 1) \hsppp \hsppp {\rm times} }  \Big].
\nonumber
\end{eqnarray}
On the other hand, any $\underline {\boldsymbol P} \in {\mathbb P}_{\delta}$ satisfies:
\begin{eqnarray}
\Big[
\underbrace{ 1/\delta , \cdots, 1/\delta } _{ \delta \hsppp
\hsppp {\rm times} } \Big] \prec \underline {\boldsymbol P} \prec
\Big[  
1, \hsppp
\underbrace{  0, \cdots, 0}_{ (\delta - 1) \hsppp \hsppp {\rm times} }  \Big].
\nonumber
\end{eqnarray}
In this sense, any vector $\underline {\boldsymbol P}$ majorizes
$\Big[ \underbrace{ 1/\delta , \cdots, 1/\delta } _{ \delta \hsppp
\hsppp {\rm times} } \Big]$ and is majorized by $\Big[ 1, \hsppp
\underbrace{  0, \cdots, 0}_{ (\delta - 1) \hsppp \hsppp {\rm times} }  \Big]$.
An easy consequence of the above majorization relationship is that for any $c \geq 0$
and any $\underline {\boldsymbol P} \in {\mathbb P}_{\delta}$, we have
\begin{eqnarray}
\frac{ \left[ {\mb P}(1) + c, \hspp  \cdots , \hspp {\mb P}(\delta) + c
\right]}{1 + \delta c} \prec \underline {\boldsymbol P}.
\nonumber
\end{eqnarray}
In the $\delta = 2$ case, we have
\begin{eqnarray}
\Big[ {\mb P}(1) , \hspp 1 - {\mb P}(1) \Big] \prec
\Big[ {\mb Q}(1) , \hspp 1 - {\mb Q}(1) \Big]
\Longleftrightarrow
\frac{1}{2} \leq {\mb P}(1) \leq {\mb Q}(1).
\label{eq_12}
\end{eqnarray}
While a similar set of equivalent inequalities on the entries of $\underline {\boldsymbol P}$
and $\underline {\boldsymbol Q}$ can be written for the $\delta \geq 3$ case, they quickly
get overwhelmingly complicated.

\begin{defn}[Schur-convex and -concave functions]
A function $f \hsppp : \hsppp \left( {\mathbb{R}}^+ \right) ^{\delta} \mapsto {\mathbb{R}}$
is said to be {\em Schur-convex} if for any $\underline{\boldsymbol P}$ and
$\underline{\boldsymbol Q}$ with $\underline{\boldsymbol P} \prec
\underline{\boldsymbol Q}$, we have $f(\underline{\boldsymbol P}) \leq
f(\underline{\boldsymbol Q})$. A function $f(\cdot)$ is Schur-concave if
$-f(\cdot)$ is Schur-convex. That is, $\underline{\boldsymbol P} \prec
\underline{\boldsymbol Q}$ implies that $f(\underline{\boldsymbol P}) \geq
f(\underline{\boldsymbol Q})$.
\flushright \qedsymbol
\end{defn}
We now provide some examples of Schur-convex and Schur-concave functions.
\begin{prop}
\label{prop_majorize}
The counting function of non-zero elements in $\underline{\boldsymbol P}$
(also called the rank function), defined as,
\begin{eqnarray}
{\sf NZ}( \underline{\boldsymbol P} )  \triangleq
\sum_i \indic \Big( {\mb P}(i) > 0 \Big)
\nonumber
\end{eqnarray}
is Schur-concave. If ${\mb P}(i) = 0$, let $-{\mb P}(i) \log( {\mb P}(i))$ be
extended continuously to $0$ and ${\mb P}(i)^{\alpha}$ be extended continuously to
$0$ if $\alpha > 0$ and to $+ \infty$ if $\alpha < 0$. Then, the Shannon entropy and
geometric mean functions, defined respectively as,
\begin{eqnarray}
{\sf SE}( \underline{\boldsymbol P} )  \triangleq  - \sum_i {\mb P}(i) \log( {\mb P}(i)) ,
& &
{\sf GM}( \underline{\boldsymbol P} )  \triangleq  \left( \prod_i {\mb P}(i) \right)
^{ 1/\delta } \nonumber
\end{eqnarray}
are also Schur-concave. 
The power mean function corresponding to an index $\alpha$, defined as,
\begin{eqnarray}
{\sf PM}( \underline{\boldsymbol P}, \hsppp \alpha) \triangleq
\left( \frac{ \sum_i {\mb P}(i) ^{\alpha} }{ \sum_i \indic \left( {\mb P}(i) > 0 \right) }
\right)^{1/\alpha}
\nonumber
\end{eqnarray}
is Schur-convex 
if $\alpha \geq 1$ and Schur-concave if $\alpha \leq 1, \hsppp \alpha \neq 0$.
\end{prop}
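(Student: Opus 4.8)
The plan is to reduce every assertion to two standard facts about majorization. The first is the Schur--Ostrowski criterion: a symmetric, continuously differentiable $\phi$ is Schur-convex on the (relative) interior of the simplex precisely when $({\mb P}(i) - {\mb P}(j))\left( \partial_i \phi - \partial_j \phi \right) \ge 0$ for all $i,j$, and Schur-concave when the reverse inequality holds. Its immediate corollary for \emph{separable} functions $\phi(\underline{\boldsymbol P}) = \sum_i g({\mb P}(i))$ is the real workhorse: here the criterion collapses to $({\mb P}(i) - {\mb P}(j))(g'({\mb P}(i)) - g'({\mb P}(j))) \ge 0$, i.e. to the monotonicity of $g'$, so $\phi$ is Schur-convex when $g$ is convex and Schur-concave when $g$ is concave. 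I would state this corollary once and then verify the convexity or concavity of the relevant scalar $g$ in each case, extending from the interior to the boundary using the stated continuous extensions.

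For the two genuinely separable entries this is immediate. For ${\sf SE}$ take $g(x) = -x\log x$, with $g''(x) = -1/x < 0$ on $(0,\infty)$, so $g$ is concave and ${\sf SE}$ is Schur-concave. For ${\sf GM}$ I would pass to $\log {\sf GM}(\underline{\boldsymbol P}) = \tfrac{1}{\delta}\sum_i \log({\mb P}(i))$, a separable sum of the concave function $g(x) = \log x$; hence $\log {\sf GM}$ is Schur-concave, and since $\exp$ is increasing, so is ${\sf GM}$. When a coordinate vanishes, Schur-concavity of the support count (below) forces the majorizing vector to carry a zero as well, so both geometric means equal $0$ and the conclusion persists at the boundary.

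The counting function ${\sf NZ}$ is not differentiable, so here I would instead use the characterization that $\underline{\boldsymbol P} \prec \underline{\boldsymbol Q}$ iff $\underline{\boldsymbol P}$ is obtained from $\underline{\boldsymbol Q}$ by finitely many Robin--Hood transfers, each replacing a pair $({\mb Q}(i),{\mb Q}(j))$ with ${\mb Q}(i) > {\mb Q}(j)$ by $({\mb Q}(i) - \epsilon, {\mb Q}(j) + \epsilon)$ for $0 < \epsilon \le ({\mb Q}(i) - {\mb Q}(j))/2$. Such a transfer leaves ${\mb Q}(i) - \epsilon \ge ({\mb Q}(i)+{\mb Q}(j))/2 > 0$ positive and can only turn a zero into a positive entry, so the number of nonzero coordinates never decreases as one moves from $\underline{\boldsymbol Q}$ to $\underline{\boldsymbol P}$. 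Thus ${\sf NZ}(\underline{\boldsymbol P}) \ge {\sf NZ}(\underline{\boldsymbol Q})$, which is exactly Schur-concavity.

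Finally, for ${\sf PM}(\cdot,\alpha)$ I would first restrict to the interior, where ${\sf NZ} \equiv \delta$ is constant, so ${\sf PM}(\underline{\boldsymbol P},\alpha) = (\tfrac{1}{\delta}\sum_i {\mb P}(i)^\alpha)^{1/\alpha}$ is a monotone transform of the separable sum $\psi(\underline{\boldsymbol P}) = \sum_i {\mb P}(i)^\alpha$. Since $g(x)=x^\alpha$ has $g''(x)=\alpha(\alpha-1)x^{\alpha-2}$, it is convex for $\alpha \ge 1$ and for $\alpha < 0$ and concave for $0 < \alpha < 1$, so $\psi$ is Schur-convex in the first two ranges and Schur-concave in the third; composing with $t \mapsto (t/\delta)^{1/\alpha}$, which is increasing for $\alpha>0$ and decreasing for $\alpha<0$, yields Schur-convexity for $\alpha \ge 1$ and Schur-concavity for $\alpha \le 1,\ \alpha \ne 0$. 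The step I expect to be the main obstacle is precisely the normalization by ${\sf NZ}$ at the boundary: off the interior a Robin--Hood transfer can enlarge the support, so ${\sf NZ}$ is no longer constant and the change in ${\sf NZ}$ need not reinforce the change in $\psi$ (for $0<\alpha<1$ the two effects can in fact oppose one another). Handling this cleanly requires either confining the comparison to vectors of common support, where ${\sf NZ}$ drops out and the monotone-transform argument applies verbatim, or invoking the continuous-extension conventions of the statement to absorb the vanishing coordinates; pinning down this boundary passage is where the real care is needed.
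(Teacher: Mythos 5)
Your reduction of ${\sf SE}$, ${\sf GM}$ and ${\sf PM}$ to separable sums $\sum_i g({\mb P}(i))$ with $g$ convex or concave is exactly the paper's route: the paper invokes Marshall--Olkin's Prop.\ 3.C.1 where you re-derive the same fact from the Schur--Ostrowski criterion, so that part is the same proof in different clothing. The one place you genuinely diverge is ${\sf NZ}$: you appeal to the Hardy--Littlewood--P\'olya characterization of $\underline{\boldsymbol P} \prec \underline{\boldsymbol Q}$ as a finite chain of Robin--Hood transfers and observe that each transfer keeps the donor positive and can only create new nonzero entries, whereas the paper argues directly from the tail-sum form of majorization ($\sum_{i=k}^{\delta}{\mb P}(i) \geq \sum_{i=k}^{\delta}{\mb Q}(i)$ applied at $k=p$, the last nonzero index of $\underline{\boldsymbol Q}$). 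Both are correct; the paper's version is more self-contained since it needs nothing beyond the definition, while yours imports a heavier equivalence but makes the monotonicity of the support size transparent.

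On the power mean, the ``boundary passage'' you flag is not merely a technicality you failed to pin down --- it is a genuine defect in the statement, and your diagnosis that the two effects oppose one another for $0<\alpha<1$ is exactly right. Take $\delta=2$, $\underline{\boldsymbol P} = [1/2,\hsppp 1/2] \prec [1,\hsppp 0] = \underline{\boldsymbol Q}$ and $\alpha = 1/2$: then ${\sf PM}(\underline{\boldsymbol P},\hsppp 1/2) = (\sqrt{2}/2)^2 = 1/2$ while ${\sf PM}(\underline{\boldsymbol Q},\hsppp 1/2) = 1$, violating the claimed Schur-concavity. The paper's own proof waves this away with the assertion that the inequalities ``hold trivially with the appropriate continuous extensions,'' which is false in this regime; your proposed repair (restricting to vectors of common support, where ${\sf NZ}$ is constant and the monotone-transform argument applies verbatim) is the correct fix and is in fact what the paper's subsequent corollary on the normalized power mean implicitly concedes. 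You should also note that no such restriction is needed for $\alpha \geq 1$ (there ${\sf NZ}(\underline{\boldsymbol P}) \geq {\sf NZ}(\underline{\boldsymbol Q})$ reinforces the inequality on $\sum_i {\mb P}(i)^{\alpha}$) nor for $\alpha < 0$ (a zero entry in $\underline{\boldsymbol Q}$ forces ${\sf PM}(\underline{\boldsymbol Q},\hsppp\alpha)=0$ and the claim is trivial; otherwise both vectors have full support). So your proof is complete except on $0<\alpha<1$ with unequal supports, where the statement itself needs the common-support hypothesis you identified.
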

\begin{proof}
To see that ${\sf NZ}( \underline{\boldsymbol P} )$ is Schur-concave,
assume that $\underline{\boldsymbol P} \prec \underline{\boldsymbol Q}$ and let
\begin{eqnarray}
\underline {\boldsymbol Q} = \left[ {\mb Q}(1), \cdots, {\mb Q}(p), 0 , \cdots , 0 \right]
\nonumber
\end{eqnarray}
with ${\mb Q}(p) > 0$ for some $p$. A rewriting of the condition in~(\ref{equality}) is:
\begin{eqnarray}
\sum_{i = k}^{\delta} {\mb P}(i) \geq \sum_{i = k}^{\delta} {\mb Q}(i), \hspp
\hspp k = 1, \cdots, \delta.
\label{rewrite}
\end{eqnarray}
With $k = p$ in~(\ref{rewrite}), we have $\sum_{i = p}^{\delta} {\mb P}(i)
\geq {\mb Q}(p) > 0$. We have a contradiction if ${\mb P}(p) = 0$ since $\{ {\mb P}(p),
\cdots, {\mb P}(\delta) \}$ are arranged in non-increasing order and all
of them have to be $0$. Thus, ${\mb P}(p) > 0$ and this implies that
\begin{eqnarray}
\sum_{i=1}^{\delta} \indic \big( {\mb P}(i) > 0 \big) \geq
\sum_{i=1}^{\delta} \indic \big( {\mb Q}(i) > 0 \big). \nonumber
\end{eqnarray}

The proof of the Schur-convexity or -concavity of the different functional
structures in the statement of the proposition follow from the main result
from~\citet[Prop.\ 3.C.1, p.\ 64]{olkin} that if $g \hsppp : \hsppp (0, \hsppp \infty)
\mapsto {\mathbb{R}}$ is convex (or concave), then $\underline{\boldsymbol P}
\mapsto \sum_i g({\mb P}(i) )$ is Schur-convex (or Schur-concave). In the setting
where $\{ \underline{\boldsymbol P}, \hspp \underline{\boldsymbol Q} \}
\in {\mathbb P}_{\delta}$, but with some zero entries, all the inequality relations
corresponding to Schur-convexity and -concavity hold trivially with the appropriate
continuous extensions. 
\end{proof}

\begin{cor}
A straightforward consequence of Prop.~\ref{prop_majorize} is that the
{\em normalized power mean}, 
defined as,
\begin{eqnarray}
{\sf NPM} \left( \underline{\boldsymbol P}, \hsppp
\alpha \right) \triangleq \frac{ {\sf PM} \left( \underline{\boldsymbol P}
, \hsppp \alpha \right) }{ {\sf NZ}
\left( \underline{\boldsymbol P} \right) }
\nonumber
\end{eqnarray}
is Schur-convex if $\alpha > 1$. Note that Schur-concavity of the normalized
power mean may not hold if $\alpha < 1$ except in the trivial case where
${\sf NZ} \left( \underline{\boldsymbol P} \right) =
{\sf NZ} \left( \underline{\boldsymbol Q} \right)$.
\flushright \qedsymbol
\end{cor}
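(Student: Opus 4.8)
The plan is to derive the result directly from Proposition~\ref{prop_majorize} by a monotonicity-of-quotient argument, so that essentially no new majorization machinery is required. By the definition of Schur-convexity, it suffices to fix $\alpha > 1$ and an arbitrary pair $\{ \underline {\boldsymbol P}, \hsppp \underline {\boldsymbol Q} \} \in {\mathbb P}_{\delta}$ with $\underline {\boldsymbol P} \prec \underline {\boldsymbol Q}$, and to show that ${\sf NPM}( \underline {\boldsymbol P}, \hsppp \alpha) \leq {\sf NPM}( \underline {\boldsymbol Q}, \hsppp \alpha)$.

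First I would record the two consequences of Proposition~\ref{prop_majorize} under the assumption $\underline {\boldsymbol P} \prec \underline {\boldsymbol Q}$. Since $\alpha > 1$ lies in the range $\alpha \geq 1$, the power mean is Schur-convex, yielding the numerator bound ${\sf PM}( \underline {\boldsymbol P}, \hsppp \alpha) \leq {\sf PM}( \underline {\boldsymbol Q}, \hsppp \alpha)$; and since ${\sf NZ}(\cdot)$ is Schur-concave, the denominator satisfies ${\sf NZ}( \underline {\boldsymbol P}) \geq {\sf NZ}( \underline {\boldsymbol Q})$. The crucial observation is that these two facts push the quotient ${\sf PM}/{\sf NZ}$ in the \emph{same} direction for $\underline {\boldsymbol P}$: a smaller (or equal) numerator divided by a larger (or equal) denominator. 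I would then note the positivity needed to divide safely: because every element of ${\mathbb P}_{\delta}$ sums to one, each vector has at least one positive entry, so ${\sf NZ}( \underline {\boldsymbol P}) \geq {\sf NZ}( \underline {\boldsymbol Q}) \geq 1 > 0$, while ${\sf PM}( \underline {\boldsymbol Q}, \hsppp \alpha) \geq 0$ as a power mean of nonnegative entries.

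With these in hand the conclusion follows from the two-step chain
\begin{eqnarray}
{\sf NPM}( \underline {\boldsymbol P}, \hsppp \alpha)
= \frac{ {\sf PM}( \underline {\boldsymbol P}, \hsppp \alpha) }{ {\sf NZ}( \underline {\boldsymbol P}) }
\leq \frac{ {\sf PM}( \underline {\boldsymbol Q}, \hsppp \alpha) }{ {\sf NZ}( \underline {\boldsymbol P}) }
\leq \frac{ {\sf PM}( \underline {\boldsymbol Q}, \hsppp \alpha) }{ {\sf NZ}( \underline {\boldsymbol Q}) }
= {\sf NPM}( \underline {\boldsymbol Q}, \hsppp \alpha),
\nonumber
\end{eqnarray}
where the first inequality uses the numerator bound over the common positive denominator ${\sf NZ}( \underline {\boldsymbol P})$, and the second uses ${\sf NZ}( \underline {\boldsymbol P}) \geq {\sf NZ}( \underline {\boldsymbol Q}) > 0$ together with ${\sf PM}( \underline {\boldsymbol Q}, \hsppp \alpha) \geq 0$.

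There is no genuinely hard step here; the only thing to watch is the \emph{alignment} of the two monotonicities, and this is exactly what I expect to be the subtle point worth emphasizing. It is precisely this alignment that fails when $\alpha < 1$: there ${\sf PM}$ becomes Schur-concave, so ${\sf PM}( \underline {\boldsymbol P}, \hsppp \alpha) \geq {\sf PM}( \underline {\boldsymbol Q}, \hsppp \alpha)$, which now points the \emph{opposite} way to the still-larger denominator ${\sf NZ}( \underline {\boldsymbol P})$. The two effects then compete in the quotient and no inequality can be asserted in general, which accounts for the stated caveat that Schur-concavity of ${\sf NPM}$ for $\alpha < 1$ holds only in the degenerate case ${\sf NZ}( \underline {\boldsymbol P}) = {\sf NZ}( \underline {\boldsymbol Q})$, where the denominators cancel and the sign is dictated entirely by the numerator.
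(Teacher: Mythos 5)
Your argument is correct and is precisely the ``straightforward consequence'' the paper has in mind (the paper gives no explicit proof): combine Schur-convexity of ${\sf PM}(\cdot,\alpha)$ for $\alpha>1$ with Schur-concavity of ${\sf NZ}(\cdot)$, note both quantities are nonnegative with ${\sf NZ}\geq 1$, and chain the two inequalities through the quotient. Your closing remark about the misaligned monotonicities for $\alpha<1$ also correctly accounts for the stated caveat, so nothing further is needed.
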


From the example in~(\ref{eq_12}), it is clear that majorization theory provides a
complete ordering (all vectors are comparable with each other) in the $\delta = 2$ case.
However, for $\delta \geq 3$, it is important to observe that majorization theory provides
only a partial ordering. For example, it can be seen that both
$\underline{\boldsymbol P} \nprec \underline{\boldsymbol Q}$
and $\underline{\boldsymbol Q} \nprec \underline{\boldsymbol P}$ are true
with the choice $\underline{\boldsymbol P} = \left[ 0.5, \hsppp 0.25, \hsppp 0.25 \right]$
and $\underline{\boldsymbol Q} = \left[ 0.4, \hsppp 0.4, \hsppp 0.2 \right]$. Another
such choice with $\delta = 4$ is
$\underline{\boldsymbol P} = \left[ 0.4, \hsppp 0.35, \hsppp 0.15, \hsppp 0.1 \right]$ and
$\underline{\boldsymbol Q} = \left[ 0.45, \hsppp 0.27, \hsppp 0.25,
\hsppp 0.03 \right]$.
Thus, two arbitrary probability vectors in ${\mathbb P}_{\delta}$ cannot necessarily
be compared by a majorization relationship. Further, while Schur-convexity
and -concavity allow an ordering of vectors from ${\mathbb P}_{\delta}$ to ${\mathbb R}$,
we seek a reverse majorization theory where $f(\underline{\boldsymbol P})
\leq f(\underline{\boldsymbol Q})$ for an appropriate choice of $f(\cdot)$
implies that $\underline{\boldsymbol P} \prec \underline{\boldsymbol Q}$.

\section{Reverse Majorization}
The notion of reverse majorization is established over a bigger subset of ${\mathbb{P}}_{\delta}$
by extending (or ``lifting'') the majorization relationship to that of {\em catalytic
majorization}. The idea was first proposed by~\citet{jonathan_plenio} in the context of
quantum entanglement. Various terms such as {\em trumping}, {\em entanglement catalysis},
{\em entanglement-assisted local transformation}, etc., are used in the literature to
describe it.

\begin{defn}[Catalytic majorization]
Let $\{ \underline {\boldsymbol P}, \hsppp \underline {\boldsymbol Q} \}
\in {\mathbb P}_{\delta}$. We say that $\underline {\boldsymbol P}$ is
catalytically majorized by $\underline {\boldsymbol Q}$ if there exists some
$m \geq 1$ and some $\underline {\boldsymbol L} \in {\mathbb P}_m$ such that
\begin{eqnarray}
\underline {\boldsymbol P} \otimes \underline {\boldsymbol L} \prec
\underline {\boldsymbol Q} \otimes \underline {\boldsymbol L},
\label{catalytic}
\end{eqnarray}
where $\otimes$ denotes the Kronecker product operation:
\begin{eqnarray}
\underline{\boldsymbol P} \otimes \underline{\boldsymbol L}
& \triangleq & \left[ {\mb P}(1) {\mb L}(1), \cdots, {\mb P}(1) {\mb L}(m), \hspp
{\mb P}(2) {\mb L}(1), \cdots, {\mb P}(2) {\mb L}(m), \hspp \cdots , \hspp
\right. \nonumber \\
& & {\hspace{1.0in}} \left.
{\mb P}(\delta) {\mb L}(1) , \cdots, {\mb P}(\delta) {\mb L}(m) \right].
\nonumber
\end{eqnarray}
Note that without loss in generality, $\underline{\boldsymbol L}$ can be assumed
to satisfy ${\mb L}(m) > 0$.
\flushright \qedsymbol
\end{defn}

Technically speaking, catalytic majorization is tensor product-induced majorization,
where $\underline{\boldsymbol L}$ can be seen as a ``resource that allows one to transform
$\underline{\boldsymbol P}$ (a certain state) into $\underline{\boldsymbol Q}$ (another
state) via local operations and classical information; this vector $\underline{\boldsymbol L}$
remains unaltered after being used, yet the transformation could not occur without
its presence'' (the above explanation is sourced {\em verbatim}
from~\citep[p.\ 113]{plosker_phd}). The distinction between majorization and
catalytic majorization comes from the fact that while most majorization results hold even if
the components of the vectors are negative, almost all of the catalytic majorization results
critically depend on the non-negativity of the vector components. Note
that the $\delta m$ inequality relations corresponding to~(\ref{equality})
need to be checked to verify $\underline{\boldsymbol P} \otimes
\underline{\boldsymbol L} \prec \underline{\boldsymbol Q} \otimes
\underline{\boldsymbol L}$ {\em after} reordering the entries of
$\underline{\boldsymbol P} \otimes \underline{\boldsymbol L}$ and
$\underline{\boldsymbol Q} \otimes \underline{\boldsymbol L}$ in non-increasing
order. 
Further, no specific
conditions are imposed on the length $m$ of $\underline{\boldsymbol L}$
nor on the uniqueness of $\underline{\boldsymbol L}$. Without reference
to $\underline {\boldsymbol L}$, we denote the relationship in~(\ref{catalytic})
as $\underline {\boldsymbol P} \prec_T \underline {\boldsymbol Q}$, with
$T$ standing for ``trumping.''

The following result shows that $\prec_T$ is also not a complete ordering on
${\mathbb P}_{\delta}$. Nevertheless, the set of vectors that can be catalytically
majorized is strictly larger than the set that can be majorized. 

\begin{prop}
\label{prop_bigger_set}
(a) Clearly, if $\underline{\boldsymbol P} \prec \underline{\boldsymbol Q}$,
then $\underline{\boldsymbol P} \prec_T \underline{\boldsymbol Q}$ since $m = 1$
and $\underline{\boldsymbol L} = \big[1  \big]$ can be used to establish catalytic
majorization. But more generally, $\underline{\boldsymbol P} \prec \underline{\boldsymbol Q}$
implies that $\underline{\boldsymbol P} \prec_T \underline{\boldsymbol Q}$
for any choice of $m$ and for any $\underline{\boldsymbol L} \in {\mathbb P}_{m}$.

(b) In converse, if $\underline{\boldsymbol P} \prec_T \underline{\boldsymbol Q}$
for some $\underline{\boldsymbol L} \in {\mathbb P}_{m}$ and $\delta \leq 3$, then
$\underline{\boldsymbol P} \prec \underline{\boldsymbol Q}$. In general, if $\delta
\geq 4$, there exists an $\underline{\boldsymbol P}$ and $\underline{\boldsymbol Q}$
such that $\underline{\boldsymbol P} \prec_T \underline{\boldsymbol Q}$, but
$\underline{\boldsymbol P} \nprec \underline{\boldsymbol Q}$. In the $\delta \geq 4$
case, the set of vectors majorized by $\underline{\boldsymbol Q}$ is a strict
subset of the set of vectors catalytically majorized by it provided that
$\underline{\boldsymbol Q}$ has at least four distinct components.
\end{prop}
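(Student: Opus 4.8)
The plan is to prove the two implications by different routes: part (a) rests on the doubly-stochastic representation of majorization, while part (b) leans on the functional characterization already recorded in Theorem~\ref{main_theorem_main_text}.

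For part (a), I would first invoke the Hardy--Littlewood--P\'olya characterization (see~\citet{olkin}): $\underline{\boldsymbol P} \prec \underline{\boldsymbol Q}$ holds if and only if $\underline{\boldsymbol P} = D \underline{\boldsymbol Q}$ for some doubly stochastic matrix $D$. Given any $m$ and any $\underline{\boldsymbol L} \in {\mathbb P}_m$, I would then write $\underline{\boldsymbol P} \otimes \underline{\boldsymbol L} = (D \otimes I_m)(\underline{\boldsymbol Q} \otimes \underline{\boldsymbol L})$, where $I_m$ is the $m \times m$ identity. Because a Kronecker product of doubly stochastic matrices is again doubly stochastic, $D \otimes I_m$ is doubly stochastic, and the same theorem yields $\underline{\boldsymbol P} \otimes \underline{\boldsymbol L} \prec \underline{\boldsymbol Q} \otimes \underline{\boldsymbol L}$ for \emph{every} $m$ and every $\underline{\boldsymbol L}$. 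Taking $m = 1$ and $\underline{\boldsymbol L} = [1]$ recovers $\underline{\boldsymbol P} \prec_T \underline{\boldsymbol Q}$, so both assertions in (a) follow at once.

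For the converse in part (b) with $\delta \le 3$, the plan is to use the necessity direction of Theorem~\ref{main_theorem_main_text}: if $\underline{\boldsymbol P} \prec_T \underline{\boldsymbol Q}$ then the power-mean conditions i)--ii) and the entropy condition iii) hold. I would then extract the only two nontrivial partial-sum inequalities by passing to the extreme indices. Letting $\alpha \to +\infty$ in condition i) and using $\lim_{\alpha \to \infty} {\sf PM}(\underline{\boldsymbol P}, \alpha) = {\mb P}(1)$ gives ${\mb P}(1) \le {\mb Q}(1)$; letting $\alpha \to -\infty$ in condition ii) and using $\lim_{\alpha \to -\infty} {\sf PM}(\underline{\boldsymbol P}, \alpha) = {\mb P}(\delta)$ gives ${\mb P}(\delta) \ge {\mb Q}(\delta)$. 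For $\delta = 3$ these two, together with the normalization $\sum_i {\mb P}(i) = 1 = \sum_i {\mb Q}(i)$ (which turns ${\mb P}(3) \ge {\mb Q}(3)$ into ${\mb P}(1)+{\mb P}(2) \le {\mb Q}(1)+{\mb Q}(2)$), are exactly the defining conditions of $\underline{\boldsymbol P} \prec \underline{\boldsymbol Q}$; the $\delta = 2$ instance can alternatively be read off directly from~(\ref{eq_12}), and $\delta = 1$ is trivial. The Schur-concavity of ${\sf NZ}$ and ${\sf SE}$ in Prop.~\ref{prop_majorize}, via condition iii) and ${\sf NZ}(\underline{\boldsymbol P}\otimes\underline{\boldsymbol L})={\sf NZ}(\underline{\boldsymbol P})\,{\sf NZ}(\underline{\boldsymbol L})$, is what I would use to dispose of the degenerate cases of mismatched supports.

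For the positive part of (b) with $\delta \ge 4$, I would exhibit the classical catalysis example of~\citet{jonathan_plenio}: $\underline{\boldsymbol P} = [0.4, 0.4, 0.1, 0.1]$, $\underline{\boldsymbol Q} = [0.5, 0.25, 0.25, 0]$, with catalyst $\underline{\boldsymbol L} = [0.6, 0.4]$. Here $\underline{\boldsymbol P} \nprec \underline{\boldsymbol Q}$ is immediate from the partial sums ($0.8 > 0.75$ at $k = 2$), whereas $\underline{\boldsymbol P} \prec_T \underline{\boldsymbol Q}$ is verified either by sorting the two eight-entry vectors $\underline{\boldsymbol P} \otimes \underline{\boldsymbol L}$ and $\underline{\boldsymbol Q} \otimes \underline{\boldsymbol L}$ and checking their partial sums, or by confirming conditions i)--iii) of Theorem~\ref{main_theorem_main_text}. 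This settles existence. The strongest claim --- that whenever $\underline{\boldsymbol Q}$ has at least four distinct components the majorized set is a \emph{strict} subset of the catalytically-majorized set --- is where I expect the main obstacle to lie, since one must produce a trumped-but-not-majorized $\underline{\boldsymbol P}$ for \emph{every} such $\underline{\boldsymbol Q}$. My plan is to argue from the structure of the trumping region cut out by the continuum of additive monotones in Theorem~\ref{main_theorem_main_text}: the extreme-$\alpha$ limits pin down only the first and last partial sums, so the intermediate partial-sum facets of the majorization polytope, which first appear when $\delta \ge 4$ and are genuinely present when $\underline{\boldsymbol Q}$ has four distinct components, sit strictly inside the trumping region; a suitable local perturbation of $\underline{\boldsymbol Q}$ across such a facet then produces the desired $\underline{\boldsymbol P}$. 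This boundary-comparison step is the crux and the part demanding the most care.
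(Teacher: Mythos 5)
Your route through parts (a) and the $\delta \le 3$ converse is correct but genuinely different from the paper's. For (a), the paper deliberately avoids sorting the tensored vectors by invoking the equivalent characterization $\underline{\boldsymbol P} \prec \underline{\boldsymbol Q} \Leftrightarrow \sum_i {\mb P}(i) = \sum_i {\mb Q}(i)$ and $\sum_i ({\mb P}(i)-t)^+ \le \sum_i ({\mb Q}(i)-t)^+$ for all $t$, and then factors ${\mb L}(j)$ out of each positive part; your doubly stochastic argument $\underline{\boldsymbol P} \otimes \underline{\boldsymbol L} = (D \otimes I_m)(\underline{\boldsymbol Q} \otimes \underline{\boldsymbol L})$ reaches the same conclusion more compactly and is, if anything, cleaner --- both buy the ``for any $m$ and any $\underline{\boldsymbol L}$'' strengthening. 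For the converse with $\delta \le 3$, the paper extracts ${\mb P}(1) \le {\mb Q}(1)$ and ${\mb P}(\delta) \ge {\mb Q}(\delta)$ directly from the first and last partial-sum inequalities of the tensored vectors (citing \citet{jonathan_plenio}), which is elementary and needs no hypotheses; your derivation of the same two inequalities via the $\alpha \to \pm\infty$ limits of the power-mean conditions in Theorem~\ref{main_theorem_main_text} is valid but imports the much deeper Klimesh--Turgut characterization (and its side conditions on distinctness and ${\mb P}(\delta)>0$, which you only gesture at handling). The remaining step --- converting ${\mb P}(3) \ge {\mb Q}(3)$ into the middle partial sum via normalization, and reading $\delta=2$ off (\ref{eq_12}) --- coincides with the paper. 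Your $\delta \ge 4$ counterexample $[0.4,0.4,0.1,0.1]$ vs.\ $[0.5,0.25,0.25,0]$ with catalyst $[0.6,0.4]$ is the classical one and checks out; the paper uses a different pair, $[0.4,0.35,0.15,0.1]$ vs.\ $[0.45,0.27,0.25,0.03]$, with the same catalyst.

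The genuine gap is the final claim: that for \emph{every} $\underline{\boldsymbol Q}$ with at least four distinct components the majorized set is a strict subset of the trumped set. You correctly identify this as the crux, but your proposed argument --- that the intermediate partial-sum facets of the majorization polytope ``sit strictly inside the trumping region,'' so a local perturbation across such a facet yields a trumped-but-not-majorized $\underline{\boldsymbol P}$ --- is only a heuristic. Nothing in Theorem~\ref{main_theorem_main_text} tells you that the trumping region strictly contains a neighborhood of those facets; establishing that is essentially the content of the result itself, so the plan is circular as stated. The paper does not prove this claim either: it disposes of it by citing Theorem 2.4.1 of \citet{daftuar_phd} (see also \citet{daftuar}), which is the legitimate resolution here. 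To make your write-up match the statement you would need either to reproduce Daftuar's argument or to cite it as the paper does; as written, this last assertion remains unproved.
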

\begin{proof}
For (a), we first note that given that there is no simple algorithm that captures
the order of non-increasing entries of $\underline{\boldsymbol P} \otimes \underline{\boldsymbol L}$,
verifying the $\delta m$ inequalities of~(\ref{equality}) is a difficult exercise
in general. To overcome this problem, we note that $\underline{\boldsymbol P} \prec
\underline{\boldsymbol Q}$ is equivalent to the fact from~\citet[Prop.\ 4.B.3, p.\ 109]{olkin}
that $\sum_{i = 1}^{\delta} {\mb P}(i) = \sum_{i = 1}^{\delta} {\mb Q}(i)$ and
$\sum_{i = 1}^{\delta} \left({\mb P}(i) - t \right)^+ \leq \sum_{i = 1}^{\delta} \left(
{\mb Q}(i) - t \right)^+$ for all $t \in {\mathbb R}$ where $(x)^+ = \max(x, \hsppp 0)$
is the positive part function. From this fact, assuming without loss in generality that
${\mb L}(m) > 0$, we can easily see that
\begin{eqnarray}
\sum_{i = 1}^{\delta} \sum_{j = 1}^m {\mb P}(i) {\mb L}(j) =
\sum_i {\mb P}(i) \cdot \sum_j {\mb L}(j) = 1 = \sum_i {\mb Q}(i) \cdot
\sum_j {\mb L}(j) = \sum_{i = 1}^{\delta} \sum_{j = 1}^m {\mb Q}(i) {\mb L}(j).
\nonumber
\end{eqnarray}
Further, for all $t \in {\mathbb R}$, we have
\begin{eqnarray}
& & \sum_{j = 1}^m \sum_{i = 1}^{\delta} \left( {\mb P}(i) {\mb L}(j) - t \right)^+
= \sum_{j = 1}^m {\mb L}(j) \cdot \sum_{i = 1}^{\delta} \left( {\mb P}(i) -
\frac{t}{ {\mb L}(j)} \right)^+ \nonumber \\
& & {\hspace{0.3in}}
\stackrel{(a)}{\leq}
\sum_{j = 1}^m {\mb L}(j) \cdot \sum_{i = 1}^{\delta} \left( {\mb Q}(i) -
\frac{t}{ {\mb L}(j)} \right)^+
= \sum_{j = 1}^m \sum_{i = 1}^{\delta} \left( {\mb Q}(i) {\mb L}(j) - t \right)^+
\nonumber
\end{eqnarray}
where (a) follows from the assumption that
$\underline{\boldsymbol P} \prec \underline{\boldsymbol Q}$. This implies that
$\underline{\boldsymbol P} \otimes \underline{\boldsymbol L} \prec
\underline{\boldsymbol Q} \otimes \underline{\boldsymbol L}$.

For (b), note that catalytic majorization $\underline{\boldsymbol P}
\prec_T \underline{\boldsymbol Q}$ implies that (see~\cite{jonathan_plenio})
${\mb P}(1) {\mb L}(1) \leq {\mb Q}(1) {\mb L}(1)$ (which is equivalent to
${\mb P}(1) \leq {\mb Q}(1)$) and ${\mb P}(\delta) {\mb L}(\delta) \geq
{\mb Q}(\delta) {\mb L}(\delta)$ (which is equivalent to ${\mb P}(\delta)
\geq {\mb Q}(\delta)$). If $\delta = 2$, these facts clearly imply that
$\frac{1}{2} \leq {\mb P}(1) \leq {\mb Q}(1)$, which from~(\ref{eq_12}) is
equivalent to the fact that $\underline{\boldsymbol P} \prec
\underline{\boldsymbol Q}$. If $\delta = 3$, combining
${\mb P}(1) \leq {\mb Q}(1)$ and ${\mb P}(3) \geq {\mb Q}(3)$ results in
$\underline{\boldsymbol P} \prec \underline{\boldsymbol Q}$. For the first
counterexample in the $\delta \geq 4$ case, while the previous discussion
showed that $\underline{\boldsymbol P} = \left[ 0.4, \hsppp 0.35, \hsppp 0.15,
\hsppp 0.1 \right]$ and $\underline{\boldsymbol Q} = \left[ 0.45, \hsppp 0.27,
\hsppp 0.25, \hsppp 0.03 \right]$ result in
$\underline{\boldsymbol P} \nprec \underline{\boldsymbol Q}$ and
$\underline{\boldsymbol Q} \nprec \underline{\boldsymbol P}$, we also have that
$\underline{\boldsymbol P} \otimes \underline{\boldsymbol L} \prec
\underline{\boldsymbol Q} \otimes \underline{\boldsymbol L}$ with the choice
$\underline{\boldsymbol L} = \left[ 0.6, \hsppp 0.4 \right]$. For the second
counterexample, the choice $\underline{\boldsymbol P} = \left[ 0.4, \hsppp 0.27,
\hsppp 0.27, \hsppp 0.06 \right]$ and $\underline{\boldsymbol Q} = \left[ 0.5,
\hsppp 0.2, \hsppp 0.2, \hsppp 0.1 \right]$ satisfies ${\mb P}(1) \leq
{\mb Q}(1)$ and ${\mb P}(4) < {\mb Q}(4)$ and hence,
$\underline{\boldsymbol P} \nprec_T \underline{\boldsymbol Q}$. The proof of the
last statement follows from Theorem 2.4.1 of~\citet{daftuar_phd} (also,
see~\citet{daftuar}).
\end{proof}

\begin{figure}[tbh!]
\begin{center}
\begin{tabular}{c}
\includegraphics[height=1.5in,width=4.2in] {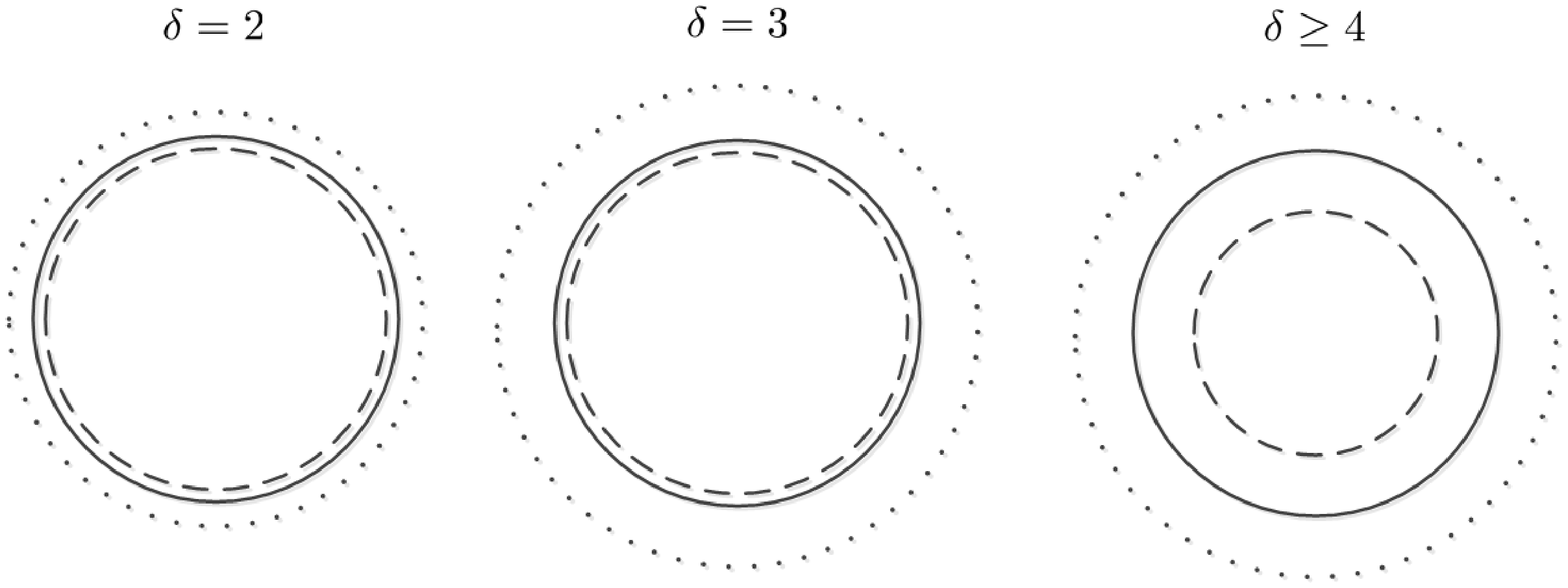}
\end{tabular}
\caption{\label{fig_comp}
Set inclusion relationships between the set of all vector pairs in ${\mathbb P}_{\delta}$
(denoted by outer circle with dots), the set of all catalytically majorizable vector
pairs (denoted by the middle circle with solid lines), and the set of all majorizable
vector pairs (denoted by the inner circle with dashed lines).
}
\end{center}
\end{figure}

The conclusions of Prop.~\ref{prop_bigger_set} in terms of the set inclusion
relationships between the set of all vector pairs in ${\mathbb P}_{\delta}$, the
set of all catalytically majorizable vector pairs, and the set of all majorizable
vector pairs are pictorially illustrated in Fig.~\ref{fig_comp} for the $\delta = 2$,
$\delta = 3$ and $\delta \geq 4$ cases. The main result
from~\citep{turgut,klimesh,klimesh_isit} on reverse catalytic majorization is provided
next.

\begin{thm}
\label{prop_trumping}
Let $\{ \underline{\boldsymbol P}, \hspp \underline{\boldsymbol Q} \}$ be
distinct elements of ${\mathbb P}_{\delta}$ with ${\mb P}(\delta) > 0$. We
have $\underline{\boldsymbol P} \prec_T \underline{\boldsymbol Q}$ if and
only if all the following conditions hold true:
\begin{eqnarray}
{\rm i)} & & {\sf PM}( \underline{\boldsymbol P}, \hsppp \alpha) <
{\sf PM}( \underline{\boldsymbol Q}, \hsppp \alpha) \hspp \hspp
{\rm if} \hspp \alpha > 1  
, \nonumber \\
{\rm ii)} & & {\sf PM}( \underline{\boldsymbol P}, \hsppp \alpha) >
{\sf PM}( \underline{\boldsymbol Q}, \hsppp \alpha) \hspp \hspp
{\rm if} \hspp \alpha < 1, \nonumber \\
{\rm iii)} & & {\sf SE}( \underline{\boldsymbol P}) >
{\sf SE}( \underline{\boldsymbol Q}). \nonumber
\end{eqnarray}
\flushright \qedsymbol
\end{thm}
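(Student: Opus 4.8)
The plan is to treat the two directions separately, since they are of very different difficulty. Throughout I would exploit the behaviour of the three functionals under the Kronecker product: a direct computation gives ${\sf PM}(\underline{\boldsymbol P}\otimes\underline{\boldsymbol L},\alpha) = {\sf PM}(\underline{\boldsymbol P},\alpha)\cdot{\sf PM}(\underline{\boldsymbol L},\alpha)$ (both the power sum $\sum_i {\mb P}(i)^\alpha$ and the rank ${\sf NZ}(\underline{\boldsymbol P})$ are multiplicative), and ${\sf SE}(\underline{\boldsymbol P}\otimes\underline{\boldsymbol L}) = {\sf SE}(\underline{\boldsymbol P}) + {\sf SE}(\underline{\boldsymbol L})$. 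These tensor identities are precisely what make conditions i)--iii) catalyst-free, i.e.\ expressible in $\underline{\boldsymbol P}$ and $\underline{\boldsymbol Q}$ alone.

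For \textbf{necessity}, suppose $\underline{\boldsymbol P}\prec_T\underline{\boldsymbol Q}$, so that $\underline{\boldsymbol P}\otimes\underline{\boldsymbol L}\prec\underline{\boldsymbol Q}\otimes\underline{\boldsymbol L}$ for some $\underline{\boldsymbol L}\in{\mathbb P}_m$ with ${\mb L}(m) > 0$. Applying the Schur-convexity of ${\sf PM}(\cdot,\alpha)$ for $\alpha > 1$ (Prop.~\ref{prop_majorize}) to the tensored pair and using multiplicativity gives ${\sf PM}(\underline{\boldsymbol P},\alpha){\sf PM}(\underline{\boldsymbol L},\alpha)\le{\sf PM}(\underline{\boldsymbol Q},\alpha){\sf PM}(\underline{\boldsymbol L},\alpha)$; since ${\sf PM}(\underline{\boldsymbol L},\alpha) > 0$ it cancels, yielding the weak form of i). The Schur-concavity of ${\sf PM}(\cdot,\alpha)$ for $\alpha < 1$ and of ${\sf SE}(\cdot)$, together with additivity of entropy, give the weak forms of ii) and iii) in the same way. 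To upgrade to the strict inequalities claimed, I would use that $t\mapsto t^\alpha$ is strictly convex for $\alpha > 1$, strictly concave for $\alpha < 1$, and $-t\log t$ strictly concave: equality in any one of i)--iii) would force $\underline{\boldsymbol P}\otimes\underline{\boldsymbol L}$ to be a permutation of $\underline{\boldsymbol Q}\otimes\underline{\boldsymbol L}$, which, after cancelling the common factor $\underline{\boldsymbol L}$ and invoking the full-support hypothesis ${\mb P}(\delta) > 0$, forces $\underline{\boldsymbol P} = \underline{\boldsymbol Q}$, contradicting distinctness.

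For \textbf{sufficiency} --- the substantive direction --- I would recast the target relation $\underline{\boldsymbol P}\otimes\underline{\boldsymbol L}\prec\underline{\boldsymbol Q}\otimes\underline{\boldsymbol L}$ through the Hardy--Littlewood--P\'olya characterization already used in the proof of Prop.~\ref{prop_bigger_set}: it suffices to produce an $\underline{\boldsymbol L}$ for which $\sum_{i,j}({\mb P}(i){\mb L}(j)-t)^+\le\sum_{i,j}({\mb Q}(i){\mb L}(j)-t)^+$ for every $t\in{\mathbb R}$. Conditions i)--iii) say precisely that the one-parameter family $\alpha\mapsto\log\sum_i{\mb P}(i)^\alpha$ (whose value encodes the power means for $\alpha\ne1$ and whose derivative at $\alpha=1$ recovers ${\sf SE}$) strictly dominates the corresponding family for $\underline{\boldsymbol Q}$ across all $\alpha\in{\mathbb R}$. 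The strategy is to show these strict separations force the existence of a catalyst: one builds $\underline{\boldsymbol L}$ (typically as a suitable tensor power of a tuned seed distribution, extracted by a compactness/limiting argument rather than written explicitly) so that the partial-sum inequalities defining majorization hold with room to spare in the bulk --- controlled by the entropy gap iii) --- while the delicate endpoints, namely the largest coordinate governed by the $\alpha\to+\infty$ limit of i) and the smallest governed by the $\alpha\to-\infty$ limit of ii), are pinned down by the extreme power-mean conditions.

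The \textbf{main obstacle} is exactly this catalyst construction. Unlike necessity, one cannot merely evaluate monotone functionals; one must prove \emph{existence} of $\underline{\boldsymbol L}$ from the strict inequalities, and the difficulty concentrates at the boundary of the spectrum, where the majorization partial sums are tightest and where a non-strict inequality in i) or ii) would already obstruct trumping. This is the content of the Turgut and Klimesh theorems, and I would ultimately appeal to their analysis \citep{turgut,klimesh,klimesh_isit} to close this direction rather than reconstruct the full limiting argument here.
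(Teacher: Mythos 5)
Your proposal is essentially correct, but it is worth noting that the paper itself supplies \emph{no} proof of Theorem~\ref{prop_trumping}: the result is presented verbatim as ``the main result from Turgut and Klimesh'' and closed with a \qedsymbol, with the entire burden carried by the citations to \citet{turgut}, \citet{klimesh} and \citet{klimesh_isit}. Relative to that, your treatment of the necessity direction is a genuine addition and is sound: the multiplicativity of ${\sf PM}(\cdot,\alpha)$ under $\otimes$ (both the power sum and ${\sf NZ}$ factor over Kronecker products) and the additivity of ${\sf SE}$, combined with the Schur-convexity/concavity statements already proved in Prop.~\ref{prop_majorize}, do yield the weak inequalities, and the upgrade to strict inequalities via strict convexity of $t\mapsto t^\alpha$ and strict concavity of $-t\log t$ plus the distinctness and full-support hypotheses is the standard argument. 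Two small points deserve care there: first, the strict-Schur-convexity step gives strictness of the \emph{power sums}, and one must then check that dividing by ${\sf NZ}$ (which is Schur-concave, so moves the right way) preserves the strict inequality; second, when ${\mb Q}(\delta)=0$ and $\alpha\leq 0$ the quantities degenerate (${\sf PM}(\underline{\boldsymbol Q},\alpha)=0$ under the continuous extension), a case the paper itself flags in the remarks immediately following the theorem, and your cancellation argument should be restricted to the nondegenerate range. For sufficiency you defer to Turgut and Klimesh, which is exactly what the paper does for both directions; your sketch of why the catalyst construction is hard (the tension between the bulk, controlled by the entropy gap, and the spectral endpoints, controlled by the $\alpha\to\pm\infty$ limits) is a fair description of the content of those references, and no reconstruction of that argument should be expected here.
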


An alternate near-equivalent characterization of Theorem~\ref{prop_trumping}
is provided in the work by~\citet{aubrun_nechita} in terms of $\ell_p$ norms of infinite-dimensional
probability vectors with finitely many non-zero components, and an equivalent
characterization based on an alternate approach is provided in terms of general
Dirichlet polynomials and Mellin transforms by~\citet{pereira_plosker}
and in terms of completely monotone functions by~\citet{kribs}.

At this stage, it is important to 
note that
\begin{eqnarray}
\lim \limits_{ \alpha \hsppp \rightarrow \hsppp \infty}
{\sf PM}( \underline{\boldsymbol P}, \hsppp \alpha) & = &
\max \limits_{ i = 1, \hsppp \cdots, \hsppp \delta} {\mb P}(i), \nonumber \\
\lim \limits_{ \alpha \hsppp \rightarrow \hsppp -\infty}
{\sf PM}( \underline{\boldsymbol P}, \hsppp \alpha) & = &
\min \limits_{ i = 1, \hsppp \cdots, \hsppp \delta} {\mb P}(i), \hspp {\sf and}
\nonumber \\
\lim \limits_{ \alpha \hsppp \rightarrow \hsppp 0}
{\sf PM}( \underline{\boldsymbol P}, \hsppp \alpha) & = &
{\sf GM}( \underline{\boldsymbol P} ).
\nonumber
\end{eqnarray}
While we know from the proof of Prop.~\ref{prop_bigger_set} that
$\max_{i = 1, \hsppp \cdots, \hsppp \delta} {\mb P}(i) \leq
\max_{i = 1, \hsppp \cdots, \hsppp \delta} {\mb Q}(i)$ and
$\min_{i = 1, \hsppp \cdots, \hsppp \delta} {\mb P}(i) \geq
\min_{i = 1, \hsppp \cdots, \hsppp \delta} {\mb Q}(i)$ when
$\underline{\boldsymbol P} \prec_T \underline{\boldsymbol Q}$.
Theorem~\ref{prop_trumping} is along the right direction in the limiting settings of
$\alpha \rightarrow \infty$ and $\alpha \rightarrow -\infty$ except for the
modification of the strict inequality with an inclusive inequality in these
cases. Further, while the statement of Theorem~\ref{prop_trumping} has not made
any assumption on whether ${\mb Q}(\delta) > 0$ or ${\mb Q}(\delta) = 0$, it is
clear that the inequalities hold in the latter case since
\begin{eqnarray}
{\sf PM}( \underline{\boldsymbol P}, \hsppp \alpha) = 0
\hspp \hspp {\sf if} \hspp \alpha \leq 0,
\nonumber
\end{eqnarray}
with the equality seen as a limiting case of $\alpha \rightarrow 0^-$ at the
extreme point.

The importance of Theorem~\ref{prop_trumping} is in emphasizing the role of {\em only}
two specific candidate functionals (Shannon entropy and power mean) from a broad
class of functionals that might have potentially been of importance. For example
${\sf NZ}( \underline{\boldsymbol P})$ is a Schur-concave function that is not
important from the viewpoint of catalytic majorization. However, while
Theorem~\ref{prop_trumping} characterizes catalytic majorization in terms of two
functions, it is imperative to note that they correspond to an {\em uncountably}
infinite set of inequalities in the parameter $\alpha$. It is widely
conjectured by~\citet{klimesh,klimesh_isit} that a significant computational
reduction in this checking might not be possible. Nevertheless, in the following
special case, the checking of the infinitely many power mean functions is not
necessary and only the Shannon entropy function is seen to be important.

\begin{prop}
Let $\{ \underline{\boldsymbol P}, \hsppp \underline{\boldsymbol Q} \} \in
{\mathbb P}_{\delta}$  with ${\mb P}(k^{\star}) > 0$ where
$k^{\star} = \arg \max \limits_{k = 1, \hsppp \cdots, \hsppp \delta}
\left\{ {\mb Q}(k) > 0 \right\}$. If
\begin{eqnarray}
\frac{ {\mb Q}(1) }{ {\mb P}(1)} \geq \cdots \geq
\frac{ {\mb Q}(k^{\star}) }{ {\mb P}(k^{\star})},
\nonumber
\end{eqnarray}
then
\begin{eqnarray}
{\sf PM}( \underline{\boldsymbol P}, \hsppp \alpha) & < &
{\sf PM}( \underline{\boldsymbol Q}, \hsppp \alpha), \hspp
{\sf if} \hspp \alpha > 1 \hspp {\sf and}
\nonumber \\
{\sf PM}( \underline{\boldsymbol P}, \hsppp \alpha) & > &
{\sf PM}( \underline{\boldsymbol Q}, \hsppp \alpha), \hspp
{\sf if} \hspp \alpha > 1.
\nonumber
\end{eqnarray}
\end{prop}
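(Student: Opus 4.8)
The plan is to reduce the claim to ordinary majorization and then invoke the Schur-convexity/concavity of the power mean recorded in Proposition~\ref{prop_majorize}. (I first note that condition ii) as written should read $\alpha < 1$, matching the two regimes of Theorem~\ref{prop_trumping}; I will establish both.) Throughout, the hypothesis ${\mb P}(k^\star) > 0$ is used precisely to guarantee, since the entries of $\underline{\boldsymbol P}$ are non-increasing, that ${\mb P}(i) > 0$ for every $i \le k^\star$, so the ratios ${\mb Q}(i)/{\mb P}(i)$ are well defined on the support of $\underline{\boldsymbol Q}$. I also assume $\underline{\boldsymbol P} \ne \underline{\boldsymbol Q}$, which is what makes the asserted inequalities strict (if the ratios were all $1$ the two power means would coincide).

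The core step is to show that ratio monotonicity forces $\underline{\boldsymbol P} \prec \underline{\boldsymbol Q}$. Write ${\mb Q}(i) = r_i\,{\mb P}(i)$ for $i = 1,\dots,k^\star$ with $r_1 \ge \cdots \ge r_{k^\star} > 0$, and set $D_k = \sum_{i=1}^k {\mb Q}(i) - \sum_{i=1}^k {\mb P}(i)$. Then $D_0 = 0$ and $D_\delta = 0$ since both vectors sum to $1$. The increments are $D_k - D_{k-1} = (r_k - 1){\mb P}(k)$ for $k \le k^\star$ and $D_k - D_{k-1} = -{\mb P}(k) \le 0$ for $k > k^\star$ (where ${\mb Q}(k)=0$). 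Because $r_k$ is non-increasing, the increments are first non-negative and then non-positive, changing sign at most once; together with $D_0 = D_\delta = 0$ this makes $k \mapsto D_k$ unimodal and forces $D_k \ge 0$ for every $k$. Hence $\sum_{i=1}^k {\mb P}(i) \le \sum_{i=1}^k {\mb Q}(i)$ for all $k$, i.e. $\underline{\boldsymbol P} \prec \underline{\boldsymbol Q}$. Feeding this into Proposition~\ref{prop_majorize}, Schur-convexity of ${\sf PM}(\cdot,\alpha)$ for $\alpha \ge 1$ gives ${\sf PM}(\underline{\boldsymbol P},\alpha) \le {\sf PM}(\underline{\boldsymbol Q},\alpha)$, and Schur-concavity for $\alpha \le 1$, $\alpha \ne 0$ gives the reverse weak inequality.

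The remaining, and most delicate, task is upgrading to strict inequality. For $\alpha > 1$ this is clean: $x \mapsto x^\alpha$ is strictly convex, so since $\underline{\boldsymbol P}$ and $\underline{\boldsymbol Q}$ are distinct non-increasing vectors (hence not permutations of one another) the power sums obey $\sum_i {\mb P}(i)^\alpha < \sum_i {\mb Q}(i)^\alpha$; combined with ${\sf NZ}(\underline{\boldsymbol P}) \ge {\sf NZ}(\underline{\boldsymbol Q})$ (a consequence of $\prec$ and the Schur-concavity of ${\sf NZ}$ from Proposition~\ref{prop_majorize}) and the increasing map $x \mapsto x^{1/\alpha}$, the strict inequality follows. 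The genuinely troublesome case is $\alpha < 1$: there strict concavity of $x \mapsto x^\alpha$ gives $\sum_i {\mb P}(i)^\alpha > \sum_i {\mb Q}(i)^\alpha$, but the normalization by ${\sf NZ}(\underline{\boldsymbol P}) \ge {\sf NZ}(\underline{\boldsymbol Q})$ pulls the normalized quantity the other way, so the two effects must be reconciled rather than merely added. I expect this to be the main obstacle, and I would resolve it by exploiting that $\underline{\boldsymbol P} \prec \underline{\boldsymbol Q}$ with $\underline{\boldsymbol P} \ne \underline{\boldsymbol Q}$ forces a strictly positive gap $D_{k_0} > 0$ at some index, then arguing that equality in the Schur-concave bound of Proposition~\ref{prop_majorize} can hold only when $\underline{\boldsymbol P}$ is a permutation of $\underline{\boldsymbol Q}$ (excluded here), so the weak inequality is in fact strict.
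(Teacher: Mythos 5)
Your route is genuinely different from the paper's. The paper disposes of this in one line by citing Marshall--Olkin's Prop.\ 5.B.3: under the ratio hypothesis, the ratio ${\sf PM}(\underline{\boldsymbol Q}, \hsppp \alpha)/{\sf PM}(\underline{\boldsymbol P}, \hsppp \alpha)$ is monotone in $\alpha$, and anchoring at $\alpha=1$ (where the ratio is taken to be $1$) yields the two inequalities on either side of the anchor. You instead use the ratio hypothesis only once, to deduce ordinary majorization $\underline{\boldsymbol P} \prec \underline{\boldsymbol Q}$ --- your unimodality argument for the partial-sum differences $D_k$ is correct and is a nice elementary observation --- and then you fall back on the Schur-convexity/-concavity of Prop.~\ref{prop_majorize}. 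For $\alpha>1$ this works, and your strictness argument (strict Schur-convexity of $\sum_i {\mb P}(i)^{\alpha}$ for distinct, non-permuted sorted vectors, plus ${\sf NZ}(\underline{\boldsymbol P}) \geq {\sf NZ}(\underline{\boldsymbol Q})$) is sound. Your correction of the obvious typo in the second displayed condition is also right.

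The gap is the $\alpha<1$ case, which you explicitly leave as a plan rather than a proof, and the plan cannot close it. The obstacle is not merely upgrading a weak inequality to a strict one: with the paper's definition of ${\sf PM}$ (normalized by ${\sf NZ}$ rather than by $\delta$), majorization alone does not deliver even the weak inequality once the supports differ. Take $\underline{\boldsymbol P} = [1/2, \hsppp 1/2]$, $\underline{\boldsymbol Q} = [1, \hsppp 0]$ and $\alpha = 1/2$: here $k^{\star}=1$, the ratio hypothesis holds trivially, and $\underline{\boldsymbol P} \prec \underline{\boldsymbol Q}$, yet ${\sf PM}(\underline{\boldsymbol P}, \hsppp 1/2) = 1/2 < 1 = {\sf PM}(\underline{\boldsymbol Q}, \hsppp 1/2)$, the reverse of what is claimed for $\alpha<1$. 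So ``equality holds only for permutations'' is not the missing lemma; in the regime $0<\alpha<1$ with ${\sf NZ}(\underline{\boldsymbol P}) > {\sf NZ}(\underline{\boldsymbol Q})$ the normalization genuinely flips the inequality, and no completion of your sketch can succeed there. (The same example shows that the paper's anchoring step also tacitly assumes ${\sf NZ}(\underline{\boldsymbol P}) = {\sf NZ}(\underline{\boldsymbol Q})$, since ${\sf PM}(\cdot, \hsppp 1) = 1/{\sf NZ}(\cdot)$ for probability vectors.) To finish along your lines you must either restrict to equal supports --- where your majorization reduction together with strict Schur-concavity of $\sum_i {\mb P}(i)^{\alpha}$ does complete the proof for $0<\alpha<1$, and convexity of $x \mapsto x^{\alpha}$ handles $\alpha<0$ --- or retain the full strength of the ratio hypothesis and invoke the ratio-of-means monotonicity theorem as the paper does, rather than only the (strictly weaker) majorization it implies.
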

\begin{proof}
The proof follows from the monotonicity of ratio of means in the $\alpha$
parameter~\citep[Prop.\ 5.B.3, p.\ 130]{olkin} under the assumptions made
in the statement of the proposition and by comparing it with the $\alpha = 1$
case where the ratio of the means of $\underline{\boldsymbol P}$ and
$\underline{\boldsymbol Q}$ is $1$.
\end{proof}


\bibliographystyle{imsart-nameyear}
\bibliography{dtra_bib2}

\end{document}